\theoremstyle{plain}
\newtheorem{theorem}{Theorem}
\newtheorem{lemma}[theorem]{Lemma}
\newtheorem*{lemma*}{Lemma}
\newtheorem{corollary}[theorem]{Corollary}
\newtheorem{definition}[theorem]{Definition}
\theoremstyle{definition}
\newtheorem{assumption}{Assumption}
\newcommand{\R}{\mathbb{R}}
\newcommand{\tr}{{\rm Tr}}
\DeclareMathOperator*{\argmin}{argmin}
\newcommand{\ip}[2]{\left\langle #1, #2 \right\rangle}
\newcommand{\norm}[1]{\left \Vert #1\right \Vert}
\newcommand{\dist}{{\rm{\textsc{Dist}}}}
\newcommand{\obs}{y}
\newcommand{\linmap}{\mathcal{M}}
\newcommand{\C}{\mathcal{C}}
\newcommand{\X}{\rho}
\newcommand{\U}{A}
\newcommand{\V}{B}
\newcommand{\E}{\mathcal{E}}
\newcommand{\gradf}{\nabla f}
\newcommand{\gradg}{\nabla g}
\newcommand{\Xo}{\rho_\star}
\newcommand{\Uo}{A_\star}
\newcommand{\weta}{\widetilde{\eta}}
\newcommand{\Rus}{R_{A_t}^\star}
\newcommand{\Uw}{\widetilde{A}}
\newcommand{\trace}{\text{Tr}}
\newcommand{\A}{A}
\newcommand{\B}{B}
\newcommand{\Ao}{A_\star}
\newcommand{\rhoo}{\rho_{\star}}
\newcommand{\DM}{\mathcal{S}}
\begin{document}

\title{Provable quantum state tomography via non-convex methods}
\author{Anastasios Kyrillidis}
\thanks{Author to whom correspondence should be addressed}
\email{anastasios.kyrillidis@ibm.com}
\affiliation{IBM T. J. Watson Research Center}
\author{Amir Kalev}
\email{amirk@umd.edu}
\affiliation{University of Maryland}
\author{Dohuyng Park}
\email{dhpark@utexas.edu}
\affiliation{Facebook}
\author{Srinadh Bhojanapalli}
\email{srinadh@ttic.edu}
\affiliation{Toyota Technological Institute at Chicago}
\author{Constantine Caramanis}
\email{constantine@utexas.edu}
\affiliation{University of Texas at Austin}
\author{Sujay Sanghavi}
\email{sanghavi@mail.utexas.edu}
\affiliation{University of Texas at Austin}

\begin{abstract}
With nowadays steadily growing quantum processors, it is required to develop new quantum tomography tools that are tailored for high-dimensional systems. 
In this work, we describe such a computational tool, based on recent ideas from non-convex optimization. 
The algorithm excels in the compressed-sensing-like setting, where only a few data points are measured from a low-rank or highly-pure quantum state of a  high-dimensional system. 
We show that the algorithm can practically be used in quantum tomography problems that are beyond the reach of convex solvers, and, moreover, is faster than other state-of-the-art non-convex approaches. 
Crucially, we prove that, despite being a non-convex program, under mild conditions, the algorithm is guaranteed to converge to the global minimum of the problem; thus, it constitutes a provable quantum state tomography protocol.
\end{abstract}

\maketitle
\section{Introduction}\label{sec:intro}
Like any other processor, the behavior of a quantum information processor must be characterized, verified, and certified. 
Quantum state tomography (QST) is one of the main tools for that purpose \citep{altepeter2005photonic}.
Yet, it is generally an inefficient procedure, since the number of parameters that specify quantum states, grows exponentially with the number of sub-systems. 
This inefficiency has two practical manifestations:  
$(i)$ without any prior information, a vast number of data points needs to be collected \citep{altepeter2005photonic}; 
$(ii)$ once the data is gathered, a numerical procedure should be executed on an exponentially-high dimensional space, in order to infer the quantum state that is most consistent with the observations. 
Thus, to perform QST on nowadays steadily growing quantum processors~\citep{zhang2017observation, IBMQ}, we must introduce novel, more efficient, techniques for its completion.

Since often the aim in quantum information processing is to coherently manipulate \emph{pure} quantum states (\emph{i.e.}, states that can be equivalently described with rank-1, positive semi-definite (PSD) density matrices), the use of such prior information is the \emph{modus operandi} towards making QST manageable, with respect to the  amount of data required~\citep{flammia2005minimal, gross2010quantum, heinosaari2013quantum, baldwin2016strictly}. 
Compressed sensing (CS) \citep{donoho2006compressed, baraniuk2007compressive} --and its extension to guaranteed low-rank approximation \citep{recht2010guaranteed, candes2011tight, candes2009exact}-- has been applied to QST \citep{gross2010quantum, kalev2015quantum} within this context. 
In particular, it has been proven \citep{gross2010quantum,flammia2011direct, liu2011universal, kalev2015quantum} that convex programming  guarantees robust estimation of pure $n$-qubit states from much less information than common wisdom dictates, with overwhelming probability.
 
These advances, however, leave open the question of how efficiently one can estimate exponentially large-sized quantum states, from a limited set of observations. 
Since convex programming is susceptible of \emph{provable performance}, typical QST protocols rely on convex programs~\cite{gross2010quantum, kalev2015quantum, baldwin2016strictly}. 
Nevertheless, their Achilles' heel remains the high computational and storage complexity. 
In particular, due to the PSD nature of density matrices, a key step is the repetitive application of Hermitian eigenproblem solvers. 
Such solvers include the well-established family of Lanczos methods \citep{kokiopoulou2004computing, baglama2005augmented, baglama2006restarted, cullum1983lanczos}, the Jacobi-Davinson SVD type of methods \citep{hochstenbach2001jacobi}, as well as preconditioned hybrid schemes \citep{wu2015preconditioned}, among others; see also the recent article in \citep{stathopoulos2017extended} for a more complete overview. 
Since --at least once per iteration-- a full eigenvalue decomposition is required in such convex programs, 
these eigensolvers contribute a $\mathcal{O}((2^n)^3)$ computational complexity, where $n$ is the number of qubits of the quantum processor. 
It is obvious that the recurrent application of such eigensolvers makes convex programs impractical, even for quantum systems with a relatively small number $n$ of qubits \cite{haffner2006scalable, gross2010quantum}. 

Ergo, to improve the efficiency of QST, we need to complement it with numerical algorithms that can efficiently handle large search spaces using limited amount of data, while having rigorous performance guarantees. 
This is the purpose of this work. 
Inspired by the recent advances on finding the global minimum in non-convex problems  \citep{sun2015guaranteed, zhao2015nonconvex, chen2015fast, jain2015computing, tu2015low, bhojanapalli2016dropping, park2016non, ge2016matrix, park2016provable, park2016finding, li2016recovery, li2016symmetry, tran2016extended, wang2017universal, ge2017no}, we propose the application of alternating gradient descent in QST, that operates directly on the assumed low-rank structure of the density matrix.
The algorithm --named Projected Factored Gradient Decent (\texttt{ProjFGD}) and described below in detail-- is based on the recently analyzed non-convex method in \citep{bhojanapalli2016dropping} for PSD matrix factorization problems.
The added twist is the inclusion of further constraints in the optimization program, that makes it applicable for tasks such as QST.  

In general, finding the global minimum in non-convex problems is a hard problem. 
However, our approach assumes certain regularity conditions --that are, however, satisfied by common CS-inspired protocols in practice~\cite{gross2010quantum, kalev2015quantum, baldwin2016strictly}-- and a good initialization --which we make explicit in the text; both lead to a fast and \emph{provable} estimation of the state of the system, even with limited amount of data. 
Our numerical experiments show that our scheme outperforms in practice state-of-the-art approaches for QST.

Apart from the QST application, our aim is to broaden the results on efficient, non-convex recovery within the set of constrained low-rank matrix problems. 
Our developments maintain a connection with analogous results in convex optimization, where standard assumptions are made.
However, this work goes beyond convexity, in an attempt to justify recent findings that non-convex methods show significant acceleration, as compared to state-of-the-art convex analogs.

\section{Quantum state tomography setup}{\label{sec:QST}}
We begin by describing  the problem of QST. 
We are focusing here on  QST of a highly-pure $n$-qubit state from Pauli measurements. 
In particular, let $y \in \mathbb{R}^m$ be the measurement vector with elements $y_i = \tfrac{2^n}{\sqrt{m}}\text{Tr}(P_i \cdot \rho_\star)+e_i,~i = 1, \dots, m$, for some measurement error $e_i$. 
Here, $\rho_\star$ denotes the unknown $n$-qubit density matrix, associated with the pure quantum state;  
$P_i \in \mathbb{C}^{2^n \times 2^n}$ is a randomly chosen Pauli observable; and the normalization $\tfrac{2^n}{\sqrt{m}}$ is chosen to follow the results of~\cite{liu2011universal}.
For brevity, we denote $\linmap : \mathbb{C}^{2^n \times 2^n} \rightarrow \mathbb{R}^m$ as the linear ``sensing" map, such that $(\linmap(\rho))_i = \tfrac{2^n}{\sqrt{m}} \text{Tr}(P_i \cdot \rho)$, for $i = 1, \dots, m$.

For the above setting, we assume that the data is given in the form of expectation values of $n$-qubit Pauli observables. 
An $n$-qubit Pauli observable is given by $P=\otimes_{j=1}^n s_j$ where $s_j\in\{\mathbb{1},\sigma_x,\sigma_y,\sigma_z\}$.
There are $4^n$ such observables in total. 
In general, one needs to have the expectation values of all $4^n$ Pauli observables to uniquely reconstruct $\rho_\star$; \emph{i.e.}, performing for many repetitions/shots the same experiment and then taking the expectation of the results (\emph{e.g.}, counts of qubit registers).
Since $\rho_\star$ is a highly-pure density matrix, we apply the CS result on Pauli measurements \citep{gross2010quantum, liu2011universal},  that guarantees robust estimation, with high probability, from just $m={\cal O}(r 2^n n^6)$ randomly chosen Pauli observables (in expectation). 
Key property to achieve this is the \emph{restricted isometry property} \citep{liu2011universal}:
\begin{definition}[Restricted Isometry Property (RIP) for Pauli measurements]\label{def:RIP1}
Let $\linmap : \mathbb{C}^{2^n \times 2^n} \rightarrow \mathbb{R}^m$ be a linear map, such that $(\linmap(\rho))_i = \tfrac{2^n}{\sqrt{m}} \emph{\text{Tr}}(P_i \cdot \rho)$, for $i = 1, \dots, m$.
Then, with high probability over the choice of $m = \tfrac{c}{\delta_r^2}\cdot (r 2^n n^6)$ Pauli observables $P_i$, where $c > 0$ is an absolute constant, $\mathcal{M}$ satisfies the $r$-RIP with constant $\delta_r$, $ 0 \leq \delta_r < 1$; i.e.,
\begin{align*}
(1 - \delta_r)\|\rho\|_F^2 \leq \|\mathcal{M}(\rho)\|_2^2 \leq (1 + \delta_r)\|\rho\|_F^2,
\end{align*} 
where $\|\cdot\|_F$ denote the  Frobenius norm, is satisfied $\forall \rho \in \mathbb{C}^{2^n \times 2^n}$ such that ${\rm \text{rank}}(\rho) \leq r$. 
\end{definition} 

An accurate estimation of $\rhoo$ is obtained  by solving, essentially, a convex optimization problem constrained to the set of quantum states~\citep{kalev2015quantum}, consistent with the measured data.  
Two such convex program examples are:
\begin{equation}
	\begin{aligned}
		& \underset{\rho \in \mathbb{C}^{2^n \times 2^n}}{\text{minimize}}
		& & \text{Tr}(\rho) \\
		& \text{subject to}
		& & \rho \succeq 0, \\
		& & & \|y - \mathcal{M}(\rho)\|_2 \leq \epsilon,
	\end{aligned} \label{eq:CVX1}
\end{equation}
and,
\begin{equation}
	\begin{aligned}
		& \underset{\rho \in \mathbb{C}^{2^n \times 2^n}}{\text{minimize}}
		& & \tfrac{1}{2} \cdot \|y - \mathcal{M}(\rho)\|_2^2 \\
		& \text{subject to}
		& & \rho \succeq 0, \\
		& & & \text{Tr}(\rho) \leq 1,
	\end{aligned} \label{eq:CVX2}
\end{equation}
where $\rho \succeq 0$ captures the positive semi-definite assumption, $\|\cdot\|_2$ is the vector Euclidean $\ell_2$-norm, and $\epsilon >0$ is a parameter related to the error level in the model.
Key in both programs is the combination of the PSD and the trace constraints: combined, they constitute the tightest convex relaxation to the low-rank, PSD structure of the unknown $\rho_\star$; see also \citep{recht2010guaranteed}. 

As was discussed in the introduction, the problem with convex  programs, such as~\eqref{eq:CVX1} and~\eqref{eq:CVX2}, is their inefficiency when applied in high-dimensional systems: most practical solvers for \eqref{eq:CVX1}-\eqref{eq:CVX2} are iterative and handling PSD constraints adds an immense complexity overhead per iteration, especially when $n$ is large; see also Section \ref{sec:experiments}.

In this work, we propose to use non-convex programming for QST of low-rank density matrices, which leads to higher efficiency than typical convex programs. We achieve this by restricting the optimization over the intrinsic non-convex structure of rank-$r$ PSD matrices. 
This allow us to ``describe" an $2^n \times 2^n$ PSD matrix with only $\mathcal{O}(2^n r)$ space, as opposed to the $\mathcal{O}\left((2^n)^2\right)$ ambient space.
Even more substantially, our program has theoretical guarantees of global convergence,  similar to the guarantees of convex programming, while maintaining faster preformace than the latter.
These properties make our scheme ideal to complement the CS methodology for QST in practice.

\section{Projected Factored Gradient Decent algorithm}

\emph{Optimization criterion recast:} 
At its basis, the Projected Factored Gradient Decent (\texttt{ProjFGD}) algorithm transforms convex programs, such as in \eqref{eq:CVX1}-\eqref{eq:CVX2}, by imposing the factorization of a $d\times d$ PSD matrix $\rho$ such that $\rho = A A^\dagger$. 
This factorization, popularized by Burer and Monteiro \cite{burer2003nonlinear, burer2005local} for solving semi-definite convex programming instances, naturally encodes the PSD constraint, removing the expensive eigen-decomposition projection step. 
For concreteness, we focus here on the  convex program \eqref{eq:CVX2}, where $d=2^n$. 
In order to encode the trace constraint, \texttt{ProjFGD} enforces additional constraints on $A$. In particular, the requirement that $\tr(\rho) \leq 1$ is translated to the \emph{convex} constraint $\|A\|_F^2 \leq 1$, where $\|\cdot\|_F$ is the Frobenius norm. 
The above recast the program~\eqref{eq:CVX2} as a non-convex program: 
\begin{equation}
	\begin{aligned}
		& \underset{A \in {\mathbb C}^{d \times r}}{\text{minimize}}
		& & \tfrac{1}{2} \cdot \|y - \mathcal{M}(AA^\dagger)\|_2^2 \\
		& \text{subject to}
		& & \|A\|_F^2 \leq 1.
	\end{aligned} \label{eq:nonCVX}
\end{equation}

Observe that, while the constraint set is convex, the objective is no longer convex due to the bilinear transformation of the parameter space $\rho = AA^\dagger$.
Such criteria have been studied recently in machine learning and signal processing applications \citep{sun2015guaranteed, zhao2015nonconvex, chen2015fast, jain2015computing, tu2015low, bhojanapalli2016dropping, park2016non, ge2016matrix, park2016provable, park2016finding, li2016recovery, li2016symmetry, tran2016extended, wang2017universal, ge2017no}.
Here, the added twist is the inclusion of further matrix norm constraints, that makes it proper for tasks such as QST; as we show in  Appendices~\ref{sec:proof} and~\ref{sec:init}, such addition complicates the algorithmic analysis.

The prior knowledge that $\text{rank}(\rho_\star) \leq r_\star$ is imposed in the program by setting $A \in {\mathbb C}^{d \times r_\star}$. 
In real experiments, the state of the system, $\rho_\star$, could be full rank, but often is highly-pure with only few dominant eigenvalues. 
In this case, $\rho_\star$ is well-approximated by a low-rank matrix of rank $r$, which can be much smaller than $r_\star$, similar to the CS methodology. 
Therefore in the \texttt{ProjFGD} protocol we set  $A \in {\mathbb C}^{d\times r}$. 
In this form, $A$ contains much less variables to maintain and optimize than a $d \times d$ PSD matrix, and thus it is easier to update and to store its iterates.

An important issue in optimizing \eqref{eq:nonCVX} over the factored space is the existence of non-unique possible factorizations for a given $\X$. 
To see this, if $\X = AA^\dagger$, then for any unitary matrix $R \in \mathbb{C}^{r \times r}$ such that $RR^\dagger  = I$, we have $\X = \widehat{A} \widehat{A}^\dagger$, where $\widehat{A} = AR$.
Since we are interested in obtaining a low-rank solution in the original space, we need a notion of distance to the low-rank solution $\Xo$ over the factors. 
We use the following unitary-invariant distance metric:
\begin{definition}{\label{def:metric}}
Let matrices $\A, \Ao \in \mathbb{C}^{d \times r}$. Define:
\begin{align*}
\dist\left(A, \Ao\right) :=\min_{R: R \in \mathcal{U}} \norm{A - \Ao R}_F, 
\end{align*} where $\mathcal{U}$ is the set of $r \times r$ unitary matrices.
\end{definition}

\emph{The \texttt{ProjFGD} algorithm:} 
At heart, \texttt{ProjFGD} is a projected gradient descent algorithm over  the variable $A$.
The pseudocode is provided in  Algorithm~\ref{algo:projFGD}.

\begin{algorithm}[H]
	\caption{\texttt{ProjFGD} pseudocode for~\eqref{eq:nonCVX}}
\label{algo:projFGD}
	\begin{algorithmic}[1]		
		\STATE \textbf{Input:} Function $f$, target rank $r$, \# iterations $T$. 
		\STATE \textbf{Output:} $\rho = A_T A_T^\dagger$. 	\\ \vspace{-0.2cm}
		\hrulefill
		\STATE \quad Initialize $\rho_0$ randomly or set $\rho_0 := \sfrac{2}{\widehat{L}} \cdot \Pi_{\mathcal{C}'}\left(\mathcal{M}^\dagger\left(y\right) \right)$.
		\STATE \quad Set $\A_0 \in {\mathbb C}^{d \times r}$ such that $\rho_0 = \A_0 \A_0^\dagger$.
		\STATE \quad Set step size $\eta$ as in~\eqref{eq:step_size}.
		\STATE \quad \textbf{for} {$t=0$ to $T-1$} \textbf{do}
			\STATE \quad \quad $A_{t+1} = \Pi_{\mathcal{C}}\left(A_t - \eta \gradf(A_t A_t^\dagger)  \cdot A_t\right)$.
		\STATE \quad \textbf{end}
	\end{algorithmic}
\end{algorithm}

First, some properties of the objective in \eqref{eq:nonCVX}.
Denote $g(A) = \tfrac{1}{2} \cdot \|y - \mathcal{M}(AA^\dagger)\|_2^2$ and $f(\rho) = \tfrac{1}{2} \cdot \|y - \mathcal{M}(\rho)\|_2^2$. 
Due to the symmetry of $f$, \emph{i.e.}, $f(\rho) = f(\rho^\dagger)$, the gradient of $g(A)$ with respect to $A$ variable is given by 
$$\gradg(A)  = \left(\gradf(\rho) + \gradf(\rho)^\dagger\right) \cdot A = 2 \gradf(\rho) \cdot A,$$ where $\gradf(\rho) = -2 \mathcal{M}^\dagger \left(y - \mathcal{M}(\rho)\right)$, and $\mathcal{M}^\dagger$ is the adjoint operator for $\mathcal{M}$. 
For the Pauli measurements case we consider in this paper, the adjoint operator for an input vector $b \in \R^{m}$ is $\mathcal{M}^\dagger(b) = \tfrac{2^n}{\sqrt{m}}\sum_{i = 1}^m b_i P_i$.

Let $\Pi_\mathcal{C}(B)$ denote the projection a matrix $B \in \mathbb{C}^{d \times r}$ onto the set $\mathcal{C} = \left\{ A : A \in \mathbb{C}^{d \times r}, ~\|A\|_F^2 \leq 1\right\}$, as in \eqref{eq:nonCVX}.
For this particular $\mathcal{C}$, $\Pi_\mathcal{C}(B) = \xi(B) \cdot B$, where $\xi(\cdot) \in (0,1]$; in the case where $\xi(\cdot) = 1$, $B \in \mathcal{C}$ already.
As an initialization, we compute $\Pi_{\mathcal{C}'}(\cdot)$, which denotes the projection onto the set of PSD matrices with trace bound $\text{Tr}(\rho) \leq 1$; we discuss later in the text how to complete this step in practice.

The main iteration of \texttt{ProjFGD} is in line 7 of Algorithm \ref{algo:projFGD}, where it applies a simple update rule over the factors:
\begin{equation}
A_{t+1} = \Pi_{\mathcal{C}}\left(A_t - \eta \gradf(A_t A_t^\dagger)  \cdot A_t\right) \nonumber,
\end{equation} 
Observe that the input argument in $\Pi_{\C}(\cdot)$ is:
$$A_t - \eta \gradf(A_t A_t^\dagger)  \cdot A_t = A_t - \eta \gradg(A_t);$$ 
\emph{i.e.}, it performs gradient descent over $A$ variable, with step size $\eta$.
Any constants are ``absorbed" in the step size selection, for clarity.

Two vital components of our algorithm are: $(i)$ the initialization step and, $(ii)$ step size selection.

\subsection{Initialization $\rho_0$}\label{subsec:initialization}
Due to the bilinear structure in \eqref{eq:nonCVX}, at first glance it is not clear whether the factorization $\rho = AA^\dagger$ introduces \emph{spurious} local minima, \emph{i.e.}, local minima that do not exist in \eqref{eq:CVX1}-\eqref{eq:CVX2}, but are ``created'' after the substitution $\rho = AA^\dagger$.
This necessitates careful initialization, in order to obtain the global minimum.

Before describing our initialization procedure, we find it helpful to first discuss an initialization procedure for an altered version of  \eqref{eq:nonCVX} where  trace constraints are excluded. 
In this case, \eqref{eq:nonCVX} transforms to:
\begin{equation}
	\begin{aligned}
		& \underset{A \in {\mathbb C}^{d \times r}}{\text{minimize}}
		& & \tfrac{1}{2} \cdot \|y - \mathcal{M}(AA^\dagger)\|_2^2.
	\end{aligned} \label{eq:nonCVX1}
\end{equation}
Under this setting, the following theory stems from \citep{park2016non}:
\begin{theorem}{\label{thm:nospurious1}}
Suppose the unknown $\rho_\star$ is a rank-$r$ density matrix with a non-unique factorization $\rho_\star = \Ao \Ao^{\dagger}$, for $\Ao \in \mathbb{C}^{d \times r}$.
Under the noiseless model, the observations satisfy $y = \mathcal{M}(\rhoo)$.
Assuming that the linear map $\mathcal{M}$ satisfies the restricted isometry property in Definition \ref{def:RIP1}, with constant $\delta_{4r} \lessapprox 0.0363$, any critical point $A$ satisfying first- and second-order optimality conditions is a global minimum.
\end{theorem}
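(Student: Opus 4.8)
The plan is to show that \emph{every} point $A$ satisfying the first- and second-order optimality conditions of $g(A) := f(AA^\dagger)$, with $f(\rho) = \tfrac12\norm{y - \linmap(\rho)}_2^2$, already satisfies $\dist(A,\Ao) = 0$, hence $AA^\dagger = \rhoo$. Since $f$ is a convex quadratic and, in the noiseless model $y = \linmap(\rhoo)$, attains its global minimum value $f(\rhoo) = 0$ exactly on the affine set $\{\rho : \linmap(\rho) = y\}$, any such $A$ is then a global minimizer of $g$. Concretely, I would fix the rotation $R \in \mathcal{U}$ achieving $\dist(A,\Ao) = \norm{A - \Ao R}_F$ and work with the ``aligned'' error direction $\Delta := A - \Ao R$; this is the canonical way to handle the non-uniqueness of the factorization encoded in Definition~\ref{def:metric}.

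Next I would write down the two optimality conditions explicitly. The gradient is $\gradg(A) = 2\gradf(\rho)A$ with $\rho = AA^\dagger$, so first-order stationarity reads $\gradf(\rho)A = 0$; in particular $\ip{\gradf(\rho)}{\rho} = 0$ and, after rotating, the gradient also annihilates $\Ao R$ up to controllable terms. The Hessian quadratic form in the direction $\Delta$ expands, up to absolute constants absorbed in the normalization of $\gradf$, as
\begin{align*}
\nabla^2 g(A)[\Delta,\Delta] = \norm{\linmap\!\left(A\Delta^\dagger + \Delta A^\dagger\right)}_2^2 + 2\,\ip{\gradf(\rho)}{\Delta\Delta^\dagger},
\end{align*}
and the second-order condition asserts this is $\geq 0$. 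The algebraic engine of the argument is the identity $A\Delta^\dagger + \Delta A^\dagger = (\rho - \rhoo) + \Delta\Delta^\dagger$, which converts the curvature term into an expression in the \emph{original} residual $\rho - \rhoo$, a matrix of rank at most $2r$.

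I would then feed both conditions into the RIP. The map $\linmap$ acts here on matrices of rank at most $2r$ (such as $A\Delta^\dagger + \Delta A^\dagger$ and $\rho - \rhoo$), so the polarization form of Definition~\ref{def:RIP1} applies with constant $\delta_{4r}$: inner products $\ip{\linmap(X)}{\linmap(Y)}$ may be replaced by $\ip{X}{Y}$ at the cost of $\pm\,\delta_{4r}\norm{X}_F\norm{Y}_F$, and this is exactly why the hypothesis is phrased in terms of $\delta_{4r}$. Using first-order stationarity to cancel the leading contribution of the indefinite cross term $\ip{\gradf(\rho)}{\Delta\Delta^\dagger}$ against the curvature term, the goal is to collapse the second-order inequality into a bound of the shape $0 \leq \nabla^2 g(A)[\Delta,\Delta] \leq -\,c(\delta_{4r})\,\norm{\rho - \rhoo}_F^2$ with $c(\delta_{4r}) > 0$ whenever $\delta_{4r}$ lies below the stated threshold; the numerical value $0.0363$ is what makes all the $\delta_{4r}$-error terms subordinate to the (RIP-certified) leading $\norm{\rho-\rhoo}_F^2$. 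This forces $\rho = \rhoo$, i.e. $\dist(A,\Ao) = 0$.

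The main obstacle I anticipate is not the RIP bookkeeping but controlling the sign-indefinite term $\ip{\gradf(\rho)}{\Delta\Delta^\dagger}$ and turning the resulting estimate into one phrased in $\norm{\rho - \rhoo}_F$. Two auxiliary facts are needed: a lower bound of the form $\norm{\rho - \rhoo}_F^2 \gtrsim \sigma_r(\Ao)^2\,\dist(A,\Ao)^2$, relating the original-space residual to the factor-space distance through the smallest singular value of $\Ao$ (so that $\rho = \rhoo$ indeed yields $\dist(A,\Ao)=0$), together with the bound $\norm{\Delta\Delta^\dagger}_F \leq \dist(A,\Ao)\cdot\norm{\Delta}$ needed to keep the quartic-in-$\Delta$ terms from overwhelming the quadratic ones. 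Assembling these estimates with the correct constants, and verifying that the optimal-rotation choice of $R$ does not introduce uncontrolled terms, is the delicate part; once they are in place, the threshold on $\delta_{4r}$ drops out of a direct comparison of coefficients.
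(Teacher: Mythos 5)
You should first be aware that the paper contains no proof of Theorem \ref{thm:nospurious1}: the statement concerns the \emph{unconstrained} factored program \eqref{eq:nonCVX1} and is imported verbatim from \citep{park2016non} (``the following theory stems from...''), while the analysis the paper actually carries out in Appendix \ref{sec:proof} establishes a different result (local linear convergence of \texttt{ProjFGD} for the constrained problem). So the relevant comparison is against the cited reference, and your plan reconstructs essentially that argument: align $A$ to $\Ao$ via the distance-minimizing unitary $R$, set $\Delta = A - \Ao R$, combine the first-order condition $\gradf(\X)A = 0$ with the second-order condition $\nabla^2 g(A)[\Delta,\Delta] \geq 0$, and use your (correct) identity $A\Delta^\dagger + \Delta A^\dagger = (\X - \Xo) + \Delta\Delta^\dagger$ together with RIP. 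In fact the bookkeeping is lighter than you anticipate: under $y = \linmap(\Xo)$, the first-order condition annihilates every term of $\Delta\Delta^\dagger$ containing a factor of $A$, giving exactly $\ip{\gradf(\X)}{\Delta\Delta^\dagger} = -\norm{\linmap(\X-\Xo)}_2^2$, so the second-order condition collapses to
\begin{align*}
0 \;\leq\; \nabla^2 g(A)[\Delta,\Delta] \;=\; \norm{\linmap(\Delta\Delta^\dagger)}_2^2 - 3\,\norm{\linmap(\X-\Xo)}_2^2 ,
\end{align*}
and no polarization is needed -- only the two-sided RIP bounds on $\Delta\Delta^\dagger$ (rank $\leq r$) and $\X - \Xo$ (rank $\leq 2r$), both implied by the hypothesis since $\delta_{2r} \leq \delta_{4r}$.

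The genuine gap is in the auxiliary facts you propose to finish with. The bounds you list -- $\norm{\X-\Xo}_F^2 \gtrsim \sigma_r(\Ao)^2\, \dist(A,\Ao)^2$ (Lemma \ref{lem:tu} of the paper) and $\norm{\Delta\Delta^\dagger}_F \leq \dist(A,\Ao)\cdot\norm{\Delta}$ -- subordinate $\norm{\Delta\Delta^\dagger}_F^2$ to $\norm{\X-\Xo}_F^2$ only when $\sigma_1(\Delta) \lesssim \sigma_r(\Ao)$, i.e., only for critical points already in a neighborhood of $\Ao$. That proves a \emph{local} no-spurious-critical-point statement, strictly weaker than the theorem (and essentially subsumed by the paper's local-convergence analysis). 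What makes the argument global -- and what the cited literature uses -- is the rotation-optimality lemma: when $R$ minimizes $\norm{A - \Ao R}_F$ over unitaries, the matrix $A^\dagger \Ao R$ is Hermitian and positive semi-definite, and as a consequence $\norm{\Delta\Delta^\dagger}_F^2 \leq 2\,\norm{\X - \Xo}_F^2$ holds with \emph{no} proximity assumption. Feeding this into the display above yields $\left(3(1-\delta_{2r}) - 2(1+\delta_{r})\right)\norm{\X-\Xo}_F^2 \leq 0$, which forces $\X = \Xo$ (hence $\dist(A,\Ao)=0$ and $A$ is a global minimizer, since then $g(A)=0$) whenever $\delta_{2r} < \sfrac{1}{5}$; the sharper threshold $0.0363$ in the statement merely reflects the bookkeeping of \citep{park2016non}, whose argument also covers the non-square, regularized setting. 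Without replacing your submultiplicative bound by this global lemma, the proposal does not prove the theorem as stated.
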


\begin{corollary}{\label{thm:nospurious2}}
Suppose the unknown $\rho_\star$ is a full rank density matrix and let $\rho_{\star, r}$ denote its best rank-$r$ approximation, in the Eckart-Young-Minsky-Steward sense \citep{eckart1936approximation, stewart1993early}. 
Let $\rho_{\star, r}$ have a non-unique factorization $\rho_{\star, r} = \Ao \Ao^{\dagger}$, for $\Ao \in \mathbb{C}^{d \times r}$.
Under the noiseless model, the observations satisfy $y = \mathcal{M}(\rhoo)$.
Assuming that the linear map $\mathcal{M}$ satisfies the restricted isometry property in Definition \ref{def:RIP1}, with constant $\delta_{4r} \leq \sfrac{1}{200}$, any critical point $A$ satisfying first- and second-order optimality conditions satisfy:
\begin{align*}
\dist(\A, \Ao) \leq \tfrac{1250}{3 \sigma_r(\rhoo)} \cdot \left\| \mathcal{M}\left(\rhoo - \rho_{\star, r} \right)\right\|_2.
\end{align*}
\end{corollary}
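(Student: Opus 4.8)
\noindent\emph{Proof proposal.} The corollary is the robust, model-mismatch counterpart of Theorem~\ref{thm:nospurious1}: when $\rhoo$ is full rank we cannot fit the data exactly with a rank-$r$ factor, so the plan is to absorb the tail $\rhoo - \rho_{\star, r}$ into an effective noise term and then run the same second-order stationarity argument as in the exact case. Concretely, I would write $y = \mathcal{M}(\rho_{\star, r}) + e$ with $e := \mathcal{M}(\rhoo - \rho_{\star, r})$, so that \eqref{eq:nonCVX1} becomes the exactly-rank-$r$ recovery of $\rho_{\star, r} = \Ao\Ao^\dagger$ corrupted by $e$. In the degenerate case $e = 0$ (i.e.\ $\rhoo$ is itself rank $r$) the right-hand side vanishes and one recovers exactly the conclusion of Theorem~\ref{thm:nospurious1}.

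Let $A$ be any point meeting the first- and second-order optimality conditions, set $\rho := AA^\dagger$, let $R \in \mathcal{U}$ attain $\dist(A, \Ao) = \norm{A - \Ao R}_F$, and put $\Delta := A - \Ao R$ so that $\dist(A, \Ao) = \norm{\Delta}_F$. Writing the residual as $y - \mathcal{M}(\rho) = \mathcal{M}(\rho_{\star, r} - \rho) + e$, the gradient of $f$ reads $\gradf(\rho) = \mathcal{M}^\dagger\mathcal{M}(\rho - \rho_{\star, r}) - \mathcal{M}^\dagger(e)$. The first-order condition $\gradf(\rho)\,A = 0$ then yields $\ip{\gradf(\rho)}{\Delta A^\dagger + A\Delta^\dagger} = 0$ (using the Hermitian symmetry of $\gradf(\rho)$), while the second-order condition, evaluated along the direction $\Delta$, reads
\begin{equation*}
2\,\ip{\gradf(\rho)}{\Delta\Delta^\dagger} + \norm{\mathcal{M}(\Delta A^\dagger + A\Delta^\dagger)}_2^2 \geq 0.
\end{equation*}

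The engine of the argument is the algebraic identity $\Delta A^\dagger + A\Delta^\dagger = (\rho - \rho_{\star, r}) + \Delta\Delta^\dagger$, which follows by expanding $AA^\dagger - \Ao R R^\dagger \Ao^\dagger$. Substituting this and the expression for $\gradf(\rho)$ into the two optimality conditions and using the first-order relation to cancel the $\ip{\gradf(\rho)}{\cdot}$ cross-terms, one is left with an inequality involving only $\mathcal{M}(\rho - \rho_{\star, r})$, $\mathcal{M}(\Delta\Delta^\dagger)$, and the noise pairing $\ip{e}{\mathcal{M}(\cdot)}$. Since both $\rho - \rho_{\star, r}$ and $\Delta\Delta^\dagger$ have rank at most $2r$, the RIP of Definition~\ref{def:RIP1} at order $4r$ lets me replace every $\norm{\mathcal{M}(\cdot)}_2^2$ and every $\mathcal{M}$-inner-product by the corresponding Frobenius quantity, up to factors $1 \pm \delta_{4r}$; the noise contributions are handled by Cauchy--Schwarz, $\ip{e}{\mathcal{M}(Z)} \leq \norm{e}_2\sqrt{1 + \delta_{4r}}\,\norm{Z}_F$. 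The net result is a scalar inequality of the form $\norm{\rho - \rho_{\star, r}}_F^2 \lesssim \norm{e}_2\,\norm{\rho - \rho_{\star, r}}_F + (\text{terms of higher order in } \norm{\Delta}_F)$.

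To convert this into a bound on $\dist(A, \Ao)$ I would invoke the factored-versus-lifted distance inequality $\norm{\rho - \rho_{\star, r}}_F^2 \geq 2(\sqrt{2} - 1)\,\sigma_r(\rhoo)\,\dist(A, \Ao)^2$ (here $\sigma_r(\rho_{\star, r}) = \sigma_r(\rhoo)$, since $\rho_{\star, r}$ retains the top-$r$ eigenvalues of $\rhoo$), and then solve the resulting quadratic inequality in $\dist(A, \Ao)$; propagating the $\delta_{4r}$ factors is what produces the explicit threshold $\delta_{4r} \leq \sfrac{1}{200}$ and the prefactor $\sfrac{1250}{3}$. The hard part is the usual crux of global (as opposed to local) second-order analyses: the higher-order terms carrying $\norm{\mathcal{M}(\Delta\Delta^\dagger)}_2$ must be absorbed \emph{without} assuming $\norm{\Delta}_F$ is small, which forces one to play the curvature term $\norm{\mathcal{M}(\Delta A^\dagger + A\Delta^\dagger)}_2^2$ against $\Delta\Delta^\dagger$ very carefully, and it is precisely this balancing that dictates how small $\delta_{4r}$ must be. The remainder is constant bookkeeping, for which I would follow the analysis of \citep{park2016non}.
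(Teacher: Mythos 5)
Your proposal is correct and takes essentially the same route as the paper: the paper gives no in-text proof of this corollary, stating that it (together with Theorem~\ref{thm:nospurious1}) stems from \citep{park2016non}, and your argument --- recasting $y = \mathcal{M}(\rho_{\star,r}) + e$ with $e = \mathcal{M}(\rhoo - \rho_{\star,r})$, combining the first- and second-order optimality conditions through the identity $\Delta A^\dagger + A\Delta^\dagger = (\rho - \rho_{\star,r}) + \Delta\Delta^\dagger$, invoking the $4r$-RIP with Cauchy--Schwarz on the noise pairing, and closing with the same factored-versus-lifted distance inequality the paper records as Lemma~\ref{lem:tu} --- is precisely the argument of that cited work. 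The constants $\sfrac{1}{200}$ and $\sfrac{1250}{3}$ emerge from the absorption step you correctly single out as the crux (bounding the curvature term against $\Delta\Delta^\dagger$ without assuming $\|\Delta\|_F$ small), so deferring that bookkeeping to \citep{park2016non} mirrors exactly what the paper itself does.
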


In plain words, under the noiseless model and with high probability (that depends on the random structure of the sensing map $\mathcal{M}$), Theorem \ref{thm:nospurious1} states that the non-convex change of variables $\rho = AA^\dagger$ does not introduce any spurious local minima in the low-rank $\rhoo$ case, and random initialization is sufficient for Algorithm \ref{algo:projFGD} to find the global minimum, assuming a proper step size selection.
Further, when $\rhoo$ is not low-rank, there are low-rank solutions $\rho_{\star, r}$, close to $\rhoo$; how much close is a function of the spectrum of $\rhoo$ and its best rank-$r$ approximation residual (Corollary \ref{thm:nospurious2}).
In these cases, Step 3 in Algorithm boils down to random initialization.

The above cases hold when $\mathcal{M}$ corresponds to a POVM; then the explicit trace constraint is redundant.
In the more general case where the trace constraint is present, a different approach is followed. 
In that case, the initial point $\rho_0$ is set as $\rho_0 := \sfrac{1}{\widehat{L}} \cdot \Pi_{\C'}\left(-\nabla f(0) \right)=\sfrac{2}{\widehat{L}} \cdot \Pi_{\C'}\left(\mathcal{M}^\dagger\left(y\right) \right)$, where $\Pi_{\C'}(\cdot)$ denotes the projection onto the set of PSD matrices $\rho$ that satisfy $\text{Tr}(\rho) \leq 1$.
Here, $\widehat{L}$ represents an approximation of $L$, where $L$ is such that for all rank-$r$ matrices $\rho, \zeta$:
\begin{equation}{\label{eq:lipschitz}}
\norm{\gradf\left(\rho\right) - \gradf\left(\zeta\right)}_F \leq L \cdot \norm{\rho - \zeta}_F.
\end{equation}
(This also means that $f$ is \emph{restricted gradient Lipschitz continuous} with parameter $L$. We defer the reader to the Appendix~\ref{sec:proof} for more information). 
In practice, we set $\widehat{L} \in (1,2)$.

This is the only place in the algorithm where eigenvalue-type calculation is required. 
The projection $\Pi_{\C'}(\cdot)$ is given in \citep{gonccalves2016projected}.
Given $\mathcal{M}^\dagger\left(y\right)$, it is described with the following criterion:
\begin{equation}
	\begin{aligned}
		& \underset{\rho_0 \in \mathbb{C}^{d \times d}}{\text{minimize}}
		& & \tfrac{1}{2} \cdot \|\rho_0 - \mathcal{M}^\dagger\left(y\right)\|_F^2 \\
		& \text{subject to}
		& & \rho_0 \succeq 0, \\
		& & & \text{Tr}(\rho_0) \leq 1,
	\end{aligned} \label{eq:projection}
\end{equation}
To solve this problem, we first compute its eigen-decomposition $\mathcal{M}^\dagger\left(y\right) := \Phi \Lambda \Phi^\dagger$, where $\Phi$ is a unitary matrix containing the eigenvectors of the input matrix.
Due to the fact that the Frobenius norm is invariant under unitary transformations, \citep{gonccalves2016projected} proves that $\rho_0 = \Phi \widehat{\Lambda} \Phi^\top$, where $\widehat{\Lambda}$ is a diagonal matrix, computed via:
\begin{equation}
	\begin{aligned}
		& \underset{\widehat{\Lambda}}{\text{minimize}}
		& & \tfrac{1}{2} \cdot \|\widehat{\Lambda} - \Lambda\|_F^2 \\
		& \text{subject to}
		& & \sum_i \widehat{\Lambda}_{ii} \leq 1, \\
		& & & \widehat{\Lambda} \succeq 0.
	\end{aligned} \label{eq:projection2}
\end{equation}
The last part can be easily solved using the projection onto the unit simplex \citep{michelot1986finite, duchi2008efficient, kyrillidis2013sparse}.

Alternatively, in practice, we could just use a standard projection onto the set of PSD matrices $\rho_0 :=\sfrac{2}{\widehat{L}} \cdot \Pi_{+}\left(\mathcal{M}^\dagger\left(y\right) \right)$; our experiments show that it is sufficient and can be implemented by any off-the-shelf eigenvalue solver.
In that case, the algorithm generates an initial matrix $A_0 \in \mathbb{C}^{d \times r}$ by truncating the computed eigen-decomposition, followed by a projection onto the convex set,  $\C$, defined by set of constraints in the program, $A_0 = \Pi_{\C}( \widetilde{A}_0 )$.  
In our case $\C=\{A \in {\mathbb C}^{d \times r}:\|A\|_F^2 \leq 1\}$. 
Note again that the projection operation is a simple \emph{entry-wise scaling}, for $A\notin\C$, $\Pi_{\C}(A)= \xi(A) \cdot A$, where $\xi(A) = \|A\|_F^{-1}$.

Apart from the procedure mentioned above, we could also use more specialized spectral methods for initialization~\cite{chen2015fast, zheng2015convergent} or, alternatively, run convex algorithms, such as~\eqref{eq:CVX2} for only a few iterations. 
However, this choice often leads to an excessive number of full or truncated eigenvalue decompositions~\cite{tu2015low}, which constitutes it a non-practical approach. 

The discussion regarding the step size and what type of guarantees we obtain is discussed next. 

\subsection{Step size selection and theoretical guarantees}

Focusing on~\eqref{eq:nonCVX}, we provide theoretical guarantees for \texttt{ProjFGD}. Our theory dictates a specific \emph{constant} step size selection that guarantees convergence to the global minimum, assuming a satisfactory initial point $\rho_0$ is provided.

Let us first describe the local convergence rate guarantees of \texttt{ProjFGD}.
\begin{theorem}[Local convergence rate for QST]\label{thm:main2}
Let $\rhoo$ be a rank-$r$ quantum state density matrix of an $n$-qubit system with a non-unique factorization $\rho_\star = \Ao \Ao^{\dagger}$, for $\Ao \in \mathbb{C}^{2^n \times r}$. Let $y\in{\mathbb R}^m$ be the measurement vector of $m={\cal O}(r n^6 2^n)$ random $n$-qubit Pauli observables, and $\mathcal{M}$ be the corresponding sensing map, such that $y_i = \left(\mathcal{M}(\rhoo)\right)_i + e_i, ~\forall i = 1, \dots, m$. Let the step $\eta$ in \texttt{ProjFGD} satisfy:
\begin{align}\label{eq:step_size}
\eta \leq \tfrac{1}{128\left(\widehat{L} \sigma_1(\rho_0) + \sigma_1(\gradf(\rho_0))\right)},
\end{align}
where $\sigma_1(\rho)$ denotes the leading singular value of $\rho$.
Here, $\widehat{L} \in (1,2)$ and $\rho_0 = \A_0 \A_0^\dagger$ is the initial point such that:
\begin{align*}
\dist(\U_0, \Uo) \leq \gamma' \sigma_{r}(\Uo),
\end{align*}
for $\gamma' := c \cdot \tfrac{(1-\delta_{4r})}{(1+\delta_{4r})} \cdot \tfrac{\sigma_r(\Xo)}{\sigma_1(\Xo)}, ~c \leq \tfrac{1}{200}$, where $\delta_{4r}$ is the RIP constant. 
Let $\U_t$ be the estimate of \texttt{ProjFGD} at the $t$-th iteration.; then, the new estimate $\U_{t+1}$ satisfies
\begin{equation}
\dist(\U_{t+1}, \Uo)^2 \leq \alpha \cdot \dist(\U_t, \Uo)^2, \label{conv:eq_01}
\end{equation}
where $\alpha := 1 - \frac{(1 - \delta_{4r}) \cdot \sigma_r(\Xo)}{550((1+\delta_{4r}) \sigma_1(\Xo) + \|e\|_2)} < 1$. 
Further, $\U_{t+1}$ satisfies $ \dist(\U_{t+1}, \Uo) \leq \gamma' \sigma_{r}(\Uo)$, $\forall t$.
\end{theorem}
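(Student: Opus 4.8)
The plan is to combine the non-expansiveness of the projection $\Pi_{\C}$ with a factored descent (regularity) inequality, and then close the argument by induction so that the iterates never leave the basin of attraction. First I would fix notation for the factored distance: let $R_t \in \argmin_{R \in \mathcal{U}} \norm{\U_t - \Uo R}_F$ be the optimal rotation at iteration $t$ and write $\Delta_t := \U_t - \Uo R_t$, so that $\dist(\U_t, \Uo) = \norm{\Delta_t}_F$. The key observation is that the trace constraint decouples cleanly: since $\Xo$ is a density matrix, $\norm{\Uo}_F^2 = \tr(\Xo) \leq 1$, hence $\Uo R_t \in \C$ for every unitary $R_t$ and $\Pi_{\C}(\Uo R_t) = \Uo R_t$. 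Because $\Pi_{\C}$ is the projection onto a convex set it is (firmly) non-expansive, so
\begin{align*}
\dist(\U_{t+1}, \Uo)^2 &\leq \norm{\U_{t+1} - \Uo R_t}_F^2 \\
&= \norm{\Pi_{\C}\!\left(\U_t - \eta \gradg(\U_t)\right) - \Pi_{\C}(\Uo R_t)}_F^2 \\
&\leq \norm{\Delta_t - \eta \gradg(\U_t)}_F^2.
\end{align*}
This reduces the analysis to the \emph{unconstrained} factored gradient step; the only residual effect of the constraint is through the scale quantities $\sigma_1(\X_0)$ and $\sigma_1(\gradf(\X_0))$ entering the step size.

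Next I would expand the right-hand side and isolate the descent. Writing $\X_t = \U_t\U_t^\dagger$ and using $\gradg(\U_t) = 2\gradf(\X_t)\U_t$,
\begin{align*}
\norm{\Delta_t - \eta \gradg(\U_t)}_F^2 = \norm{\Delta_t}_F^2 - 2\eta\,\ip{\gradg(\U_t)}{\Delta_t} + \eta^2 \norm{\gradg(\U_t)}_F^2.
\end{align*}
The algebraic identity $\ip{\gradf(\X_t)\U_t}{\Delta_t} = \ip{\gradf(\X_t)}{\X_t - \Xo} - \ip{\gradf(\X_t)}{\Delta_t (\Uo R_t)^\dagger}$ splits the cross term into a curvature part and a correction. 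For the curvature part, restricted strong convexity coming from the RIP of Definition~\ref{def:RIP1} gives $\ip{\gradf(\X_t)}{\X_t - \Xo} \gtrsim (1-\delta_{2r})\norm{\X_t - \Xo}_F^2$ in the noiseless model, with an additive term $\ip{\linmap^\dagger(e)}{\X_t - \Xo}$ once noise is present. Converting this matrix-space curvature into factor-space curvature is where the target singular value enters: the standard lower bound $\norm{\X_t - \Xo}_F^2 \geq 2(\sqrt{2}-1)\,\sigma_r(\Xo)\,\dist(\U_t, \Uo)^2$ yields a negative-curvature contribution of order $\sigma_r(\Xo)\norm{\Delta_t}_F^2$.

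The main obstacle is the descent lemma itself: controlling the correction term $\ip{\gradf(\X_t)}{\Delta_t(\Uo R_t)^\dagger}$ and the gradient norm $\norm{\gradg(\U_t)}_F^2 = 4\norm{\gradf(\X_t)\U_t}_F^2$ tightly enough that the positive $\sigma_r(\Xo)\norm{\Delta_t}_F^2$ term survives after subtracting the $\eta^2$ term. Here I would invoke the RIP-smoothness estimate $\norm{\gradf(\X_t)}_F \lesssim (1+\delta_{4r})\norm{\X_t - \Xo}_F + \norm{\linmap^\dagger(e)}_F$ together with the basin hypothesis $\dist(\U_t, \Uo) \leq \gamma'\sigma_r(\Uo)$, which bounds $\norm{\U_t}_F$ and $\sigma_1(\X_t)$ in terms of $\sigma_1(\Xo)$. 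The rank budget is the reason $\delta_{4r}$ (rather than $\delta_{2r}$) appears: the products $\Delta_t(\Uo R_t)^\dagger$ and the smoothness bookkeeping act on matrices of rank up to $4r$, and monotonicity of the RIP constants lets me bound everything by $\delta_{4r}$. Balancing these contributions, and using the step-size constraint~\eqref{eq:step_size} to dominate the $\eta^2\norm{\gradg(\U_t)}_F^2$ term, collapses the expansion to
\begin{align*}
\dist(\U_{t+1}, \Uo)^2 \leq \alpha\,\dist(\U_t, \Uo)^2, \qquad \alpha = 1 - \tfrac{(1-\delta_{4r})\sigma_r(\Xo)}{550\left((1+\delta_{4r})\sigma_1(\Xo) + \norm{e}_2\right)},
\end{align*}
with the $\norm{e}_2$ term arising from the noise contributions collected above (the adjoint bound reducing $\norm{\linmap^\dagger(e)}_F$ to $\norm{e}_2$).

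Finally I would close the induction. Since $\alpha < 1$, the contraction immediately gives $\dist(\U_{t+1}, \Uo) \leq \dist(\U_t, \Uo) \leq \gamma'\sigma_r(\Uo)$, so the basin hypothesis propagates from $t$ to $t+1$; this is precisely what licenses the use of the closeness bounds at every iteration and simultaneously proves the stated invariance $\dist(\U_{t+1}, \Uo) \leq \gamma'\sigma_r(\Uo)$ for all $t$. The base case is exactly the initialization assumption on $\X_0$ guaranteed in Section~\ref{subsec:initialization}.
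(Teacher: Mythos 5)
Your proposal is correct in its overall architecture, but it follows a genuinely different route from the paper's proof, and the difference is instructive. The decisive divergence is the treatment of the projection. You use plain non-expansiveness of $\Pi_{\C}$, together with the fact that $\Uo R_t \in \C$ (the paper's ``faithfulness'' assumption), to delete the projection entirely and reduce to the unconstrained factored gradient step. The paper instead uses only the obtuse-angle property of convex projections (Lemma \ref{lem:proj}, i.e.\ firm non-expansiveness): it keeps the extra term $-\norm{\U_{t+1}-\Uw_{t+1}}_F^2$ and proves a projection-aware descent lemma (Lemma \ref{lem:gradU,U-U_r_ bound}) in which that projection error must offset scaling losses of the form $\left(1-\xi^2\right)\ip{\gradf(\X_t)\U_t}{\U_t}$; this is what forces the side condition $\xi_t \gtrsim 0.78$ (automatic for the QST ball by Corollary \ref{cor:projFGD_schatten}). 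The reason the paper cannot take your shortcut is that its descent lemma is proven for a general $L$-smooth, $\mu$-strongly convex $f$ (Theorem \ref{thm:main3}): there the curvature term $\ip{\gradf(\X_t)}{\X_t-\Xo}$ is lower-bounded by passing through the \emph{feasible} next iterate $\X_{t+1}$ and invoking $f(\X_{t+1}) \geq f(\Xo)$, i.e.\ optimality of $\Xo$ over $\C'$ --- an argument that is unavailable after your reduction, since $\Uw_{t+1}\Uw_{t+1}^\dagger$ need not be feasible. Your plan survives only because you replace constrained optimality by the quadratic/RIP structure of the QST objective itself, writing $\ip{\gradf(\X_t)}{\X_t-\Xo} = \norm{\linmap(\X_t-\Xo)}_2^2 - \ip{e}{\linmap(\X_t-\Xo)}$ and applying RIP to matrices of rank at most $2r$ (Procrustes-flow style); that is legitimate for Theorem \ref{thm:main2}, which is all the statement requires. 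In short: your route buys a shorter proof specialized to QST; the paper's route buys the generalization to arbitrary smooth, strongly convex objectives over faithful constraint sets.

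Two caveats on your sketch. First, the estimate $\norm{\gradf(\X_t)}_F \lesssim (1+\delta_{4r})\norm{\X_t-\Xo}_F + \norm{\linmap^\dagger(e)}_F$ is false as written: RIP gives no control of the full Frobenius norm of $\linmap^\dagger\linmap(\X_t-\Xo)$, which is generically full rank. The bound must be applied to the gradient's action along low-rank directions, e.g.\ $\norm{\gradf(\X_t)\U_t}_F \leq \sqrt{1+\delta_{2r}}\,\norm{\linmap(\X_t-\Xo)}_2\,\sigma_1(\U_t) + \sigma_1\bigl(Q_{\U_t}Q_{\U_t}^\dagger \linmap^\dagger(e)\bigr)\norm{\U_t}_F$; your rank-$4r$ bookkeeping remark is where this must be made precise, and it is exactly the role played by the paper's Lemma \ref{lem:DD_bound_sc}. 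Second, with $e \neq 0$ the noise contribution to your curvature bound is linear, not quadratic, in $\dist(\U_t,\Uo)$, so it cannot be absorbed into a purely multiplicative contraction factor $\alpha$ toward the true $\Uo$; an honest execution of your argument contracts to the constrained minimizer (or yields an additive noise floor). This looseness is shared by the paper, whose general Theorem \ref{thm:main3} contracts to the optimum $\Xo$ of \eqref{eq:appe_00} and whose QST specialization silently identifies that optimum with the true state, so it does not disadvantage your approach relative to the paper's.
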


The above theorem provides a \emph{local} convergence guarantee: 
given an initialization point $\rho_0 = \A_0 \A_0^\dagger$ close enough to the optimal solution --in particular, where $\dist(\U_0, \Uo) \leq \gamma' \sigma_{r}(\Uo)$ is satisfied-- our algorithm converges locally with linear rate.
In particular, in order to obtain $\dist(A_T, \Uo)^2 \leq \varepsilon$, \texttt{ProjFGD} requires $T = \mathcal{O}\left( \log \tfrac{\gamma' \cdot \sigma_r(\Uo)}{\varepsilon} \right)$ number of iterations.
We conjecture that this further translates into linear convergence in the infidelity metric, $1-\tr(\sqrt{\sqrt{\rho_T}\rhoo\sqrt{\rho_T}})^2$.

The per-iteration complexity of \texttt{ProjFGD} is dominated by the application of the linear map $\mathcal{M}$ and by matrix-matrix multiplications.
We note that, while both eigenvalue decomposition and matrix multiplication are known to have $\mathcal{O}((2^n)^2 r)$ complexity in Big-Oh notation, the latter is at least two-orders of magnitude faster than the former on dense matrices \citep{park2016finding}.

The proof of Theorem~\ref{thm:main2} is provided in the Appendix~\ref{sec:proof}. 
We believe that our result, as stated in its most generality, complements recent results from the machine learning and optimization communities, where different assumptions were made~\cite{chen2015fast}, or where constraints on $A$ cannot be accommodated~\cite{bhojanapalli2016dropping}.

So far, we assumed $\rho_0$ is provided such that $\dist(\U_0, \Uo) \leq \gamma' \sigma_{r}(\Uo)$.
The next theorem shows that our initialization could achieve this guarantee (under assumptions) and turn the above local convergence guarantees to convergence to the global minimum.

\begin{lemma}{\label{lem:init}}
Let $\U_0$ be such that $\X_0 = \U_0\U_0^\dagger = \Pi_{\C'}\left(\tfrac{-1}{L} \cdot \nabla f(0) \right)$. 
Consider the problem in \eqref{eq:nonCVX} where $\mathcal{M}$ satisfies the RIP property for some constant $\delta_{4r} \in (0, 1)$.
Further, assume the optimum point $\rhoo$ satisfies $\text{rank}(\rhoo) = r$. 
Then, $\U_0$ computed as above satisfies:
\begin{align*}
\dist(\U_0, \Uo) \leq \gamma' \cdot \sigma_r(\Uo),
\end{align*} where $\gamma' = \sqrt{\tfrac{1 - \tfrac{1- \delta_{4r}}{1 + \delta_{4r}}}{2(\sqrt{2}-1)}} \cdot \tau(\rhoo) \cdot \sqrt{\texttt{srank}(\rhoo)}$ and $\texttt{srank}(\X) = \tfrac{\|\X\|_F}{\sigma_1(\X)}$.
\end{lemma}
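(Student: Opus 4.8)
The plan is to reduce the factored bound to a bound in the lifted matrix space, and then convert back to the factors with a standard factored-distance inequality. I read the statement in the noiseless regime $y=\linmap(\rhoo)$, consistent with the residual-free form of $\gamma'$. Here $\gradf(\rhoo) = -2\linmap^\dagger(y-\linmap(\rhoo)) = 0$ and $-\gradf(0) = 2\linmap^\dagger(y) = 2\linmap^\dagger\linmap(\rhoo)$. Since $\rhoo$ is a density matrix it is PSD with $\tr(\rhoo)=1$, so $\rhoo\in\C'$ and, because $\gradf(\rhoo)=0$, it is a fixed point of the projected-gradient map: $\rhoo = \Pi_{\C'}(\rhoo)$. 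Writing $\rho_0 = \Pi_{\C'}\!\left(\tfrac2L\linmap^\dagger\linmap(\rhoo)\right)$ and using that $\Pi_{\C'}$ is a projection onto a convex set, hence nonexpansive, I obtain
\[
\norm{\rho_0 - \rhoo}_F \;\leq\; \norm{\tfrac2L\linmap^\dagger\linmap(\rhoo) - \rhoo}_F .
\]

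Second — and this is where the work lies — I would bound the right-hand side through the RIP. Because $f$ is a quadratic with Hessian $\linmap^\dagger\linmap$, the RIP of Definition~\ref{def:RIP1} makes $f$ restrictedly $\mu$-strongly convex and $L$-smooth on rank-$\le 4r$ matrices with $\mu = 2(1-\delta_{4r})$ and $L = 2(1+\delta_{4r})$; the step $\tfrac2L$ is exactly the reciprocal smoothness, so a single projected-gradient step from $0$ toward the constrained optimum contracts the distance by $\sqrt{1-\mu/L}=\sqrt{1-\tfrac{1-\delta_{4r}}{1+\delta_{4r}}}$, which is precisely the coefficient appearing in $\gamma'$. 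The obstacle is that $\linmap^\dagger\linmap(\rhoo)$ is generically full rank, so the RIP — which only compares $\linmap$ against low-rank matrices — cannot be applied to $\norm{(\tfrac2L\linmap^\dagger\linmap - \mathcal{I})\rhoo}_F$ verbatim. I would instead test $\tfrac2L\linmap^\dagger\linmap - \mathcal{I}$ against low-rank directions only: rank-one directions $uv^\dagger$ control its operator norm through the RIP, the rank-$\le 2r$ tangent directions of $\rhoo$ control the in-span component, and the PSD/trace structure enforced by $\C'$ tames the orthogonal remainder. Carrying this out should sharpen the crude factor $\norm{\rhoo}_F^2$ down to the target estimate
\[
\norm{\rho_0 - \rhoo}_F^2 \;\leq\; \left(1 - \tfrac{1-\delta_{4r}}{1+\delta_{4r}}\right)\sigma_1(\rhoo)\,\norm{\rhoo}_F ,
\]
where the replacement of a second $\norm{\rhoo}_F$ by $\sigma_1(\rhoo)$ is exactly what later demotes $\texttt{srank}(\rhoo)$ to its square root. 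Establishing this mixed operator/Frobenius bound is the main difficulty.

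Third, I would pass back to the factors with the unitarily invariant metric of Definition~\ref{def:metric}. The hypothesis $\rank(\rhoo)=r$ lets me invoke the standard factored-distance inequality (as in the Procrustes-flow literature),
\[
\dist(\U_0, \Uo)^2 \;\leq\; \tfrac{1}{2(\sqrt2-1)\,\sigma_r(\rhoo)}\,\norm{\rho_0 - \rhoo}_F^2 ,
\]
which is where the constant $2(\sqrt2-1)$ and the factor $\sigma_r(\rhoo)$ enter. Since $\sigma_r(\rhoo) = \sigma_r(\Uo)^2$, combining this with the estimate above gives
\[
\dist(\U_0, \Uo)^2 \;\leq\; \tfrac{\,1 - (1-\delta_{4r})/(1+\delta_{4r})\,}{2(\sqrt2-1)}\cdot\tfrac{\sigma_1(\rhoo)\,\norm{\rhoo}_F}{\sigma_r(\rhoo)^2}\cdot\sigma_r(\Uo)^2 .
\]

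Finally, I rewrite the middle factor using the quantities in the statement. With $\tau(\rhoo)=\sigma_1(\rhoo)/\sigma_r(\rhoo)$ and $\texttt{srank}(\rhoo)=\norm{\rhoo}_F/\sigma_1(\rhoo)$, one has $\tfrac{\sigma_1(\rhoo)\norm{\rhoo}_F}{\sigma_r(\rhoo)^2} = \tau(\rhoo)^2\,\texttt{srank}(\rhoo)$, so that $\dist(\U_0,\Uo)^2 \leq \gamma'^2\,\sigma_r(\Uo)^2$ with $\gamma'$ exactly as claimed; taking square roots yields $\dist(\U_0,\Uo)\leq\gamma'\,\sigma_r(\Uo)$ and finishes the proof. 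I expect the only nonroutine step to be the second one: all of steps one, three, and four are clean applications of nonexpansiveness, a cited lemma, and algebra, whereas the sharp RIP estimate that produces $\sigma_1(\rhoo)\norm{\rhoo}_F$ (and hence the $\sqrt{\texttt{srank}(\rhoo)}$ dependence) requires the careful tangent-space-plus-PSD decomposition described above.
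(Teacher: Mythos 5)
There is a genuine gap, and it sits exactly where you placed all the weight: your step two is not proved, and the route you chose makes it unprovable. After invoking nonexpansiveness you must bound $\norm{\tfrac{2}{L}\linmap^\dagger\linmap(\rhoo)-\rhoo}_F$, but the RIP cannot control this quantity: it constrains $\linmap^\dagger\linmap-\mathcal{I}$ only when tested against low-rank matrices, whereas the Frobenius norm tests against all directions. Concretely, $\linmap^\dagger\linmap(\rhoo)=\tfrac{4^n}{m}\sum_{i=1}^m \text{Tr}(P_i\rhoo)P_i$, and by orthogonality of the Paulis ($\text{Tr}(P_iP_j)=2^n\delta_{ij}$) its squared Frobenius norm concentrates around $16^n/m$ for a pure state; with $m=\mathcal{O}(r\,2^n n^6)\ll 4^n$ this is exponentially large, while your target bound $\left(1-\tfrac{1-\delta_{4r}}{1+\delta_{4r}}\right)\sigma_1(\rhoo)\norm{\rhoo}_F$ is less than $1$ for a pure state. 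So the inequality you hope to establish for the right-hand side of your step-one bound is simply false in the compressed-sensing regime, and no refinement of the tangent-space argument can rescue it. If instead you intended to bound $\norm{\X_0-\Xo}_F$ directly for the \emph{projected} point (your mention of the PSD/trace structure of $\C'$ suggests this), then step one is vacuous — nonexpansiveness has already discarded the projection — and the actual argument, the ``tangent-space-plus-PSD decomposition,'' is never carried out; you yourself flag it as the main difficulty. Either way, the core of the proof is absent.

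The paper avoids this trap precisely by never comparing $\X_0$ to the unprojected matrix in Frobenius norm. It keeps the projection in play through its variational (obtuse-angle) inequality, Lemma~\ref{lem:proj} with $U=\Xo$ and $V=-\tfrac{1}{L}\gradf(0)$, which gives $\ip{-\tfrac{1}{L}\gradf(0)}{\X_0-\Xo}\geq\ip{\X_0}{\X_0-\Xo}$, and combines it with three function-level facts: strong convexity of $f$ from $0$ toward $\Xo$, smoothness of $f$ from $0$ toward $\X_0$, and optimality $f(\Xo)\leq f(\X_0)$. These yield $\ip{\X_0}{\Xo}\geq\tfrac{1}{2}\norm{\X_0}_F^2+\tfrac{\mu}{2L}\norm{\Xo}_F^2$, hence $\norm{\X_0-\Xo}_F\leq\sqrt{1-\sfrac{\mu}{L}}\cdot\norm{\Xo}_F$ with $\sfrac{\mu}{L}=\tfrac{1-\delta_{4r}}{1+\delta_{4r}}$ — no noiseless assumption, no RIP-specific decomposition, no control of $\linmap^\dagger\linmap(\rhoo)$ itself ever needed. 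Your step three (Lemma~\ref{lem:tu}) and the closing algebra do coincide with the paper's. One side remark: the paper's own bound produces the factor $\norm{\rhoo}_F/\sigma_1(\rhoo)$, which equals $\sqrt{\texttt{srank}(\rhoo)}$ only under the standard stable-rank convention $\texttt{srank}(\X)=\norm{\X}_F^2/\sigma_1(\X)^2$; your stronger target involving $\sigma_1(\rhoo)\norm{\rhoo}_F$ was evidently engineered to match the lemma's printed (likely typographical) definition, but that refinement is moot given the gap above.
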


This initialization  introduces further restrictions on the condition number of $\rhoo$, $\tau(\rhoo) = \frac{\sigma_1(\rhoo)}{\sigma_r(\rhoo)}$, and the condition number of the objective function, which is proportional to $\propto \frac{1+\delta_{4r}}{1 - \delta_{4r}}$.
In particular, the initialization assumptions in Theorem \ref{thm:main2} are satisfied by Lemma \ref{lem:init} if and only if $\mathcal{M}$ satisfies RIP with constant $\delta_{4r}$ fulfilling the following expression:
\begin{align*}
\frac{1 + \delta_{4r}}{1 - \delta_{4r}} \cdot \sqrt{1 - \tfrac{1 - \delta_{4r}}{1 + \delta_{4r}}} \leq \tfrac{\sqrt{2 (\sqrt{2} - 1)}}{200} \cdot \frac{1}{\sqrt{r} \cdot \tau^2(\rhoo)}.
\end{align*} 
While such conditions are hard to check a priori, our experiments showed that our initialization, as well as the random initialization, work well in practice, and this behavior has been observed repeatedly in all the experiments we conducted.
Thus, the method returns the exact solution of the convex programming problem, while being orders of magnitude faster.

\section{Related work}{\label{sec:related}}
We focus on efficient methods for QST;
for a broader set of citations that go beyond QST, we defer the reader to \citep{park2016finding} and references therein.

The use of non-convex algorithms in QST is not new~\citep{rehacek2007diluted, siah2013informationally}, and dates before the introduction of the CS protocol in QST settings \citep{gross2010quantum}.
Assuming a multinomial distribution, \citep{rehacek2007diluted} focus on the normalized negative log-likelihood objective (see Eq.~(2) in \citep{rehacek2007diluted}) and propose an \emph{diluted} non-convex iterative algorithm for its solution.
The suggested algorithm exhibits good convergence and monotonic increase of the likelihood objective in practice; despite its success, there is no argument that guarantees its performance, neither a provable setup for its execution.

\citep{banaszek1999maximum, paris2001maximum} use the reparameterization $\rho = AA^\dagger$ in a Lagrange augmented maximum log-likelihood (ML) objective (see Eq.~(3) in \citep{banaszek1999maximum} and Eq.~(9) in \citep{paris2001maximum}), under the multinomial distribution assumption.
The authors state that such a problem can be solved by standard numerical procedures for searching the maximum of the ML objective \citep{banaszek1999maximum}, and use the downhill simplex method for its solution, over the parameters of the matrix $A$ \citep{nelder1965simplex}.
Albeit \citep{banaszek1999maximum, paris2001maximum} rely on the uniqueness of the ML solution before the reformulation $\rho = AA^\dagger$ (due to the convexity of the original problem), there are no theoretical results on the non-convex nature of the transformed objective (\emph{e.g.}, the presence of spurious local minima).

\citep{smolin2012efficient} consider the case of maximum likelihood quantum state tomography, under additive Gaussian noise, in the informationally complete case.
Assuming the measurement operators are traceless, simple linear inversion techniques are shown to work accurately to infer the constrained ML state in a single projection step, from the unconstrained ML state. 
As an extension, \citep{hou2016full} present a GPU implementation of the algorithm that recovers a simulated 14-qubit density matrix within four hours; however,  implementations of a linear system inversion could increase dramatically the computational and storage complexity, as the dimension of the problem grows.

Based on the extremal equations for the multinomial ML objective, \citep{siah2013informationally} propose a fixed-point iteration steepest-ascent method on $A$ (with user-defined hyperparameters, such as the step size of the ascent). 
How many iterations required and how to set up initial conditions are heuristically defined. 
Typically these methods, as discussed in~\cite{shang2017superfast}, lead to ill-conditioned optimization problems, resulting in slow convergence.  

\citep{shang2017superfast} propose a hybrid algorithm that $(i)$ starts with a conjugate-gradient (CG) algorithm in the $A$ space, in order to get initial rapid descent, and $(ii)$ switch over to accelerated first-order methods in the original $\rho$ space, provided one can determine the switchover point cheaply.
Under the multinomial ML objective, in the initial CG phase, the Hessian of the objective is computed per iteration (\emph{i.e.}, a $d^2 \times d^2$ matrix), along with its eigenvalue decomposition.
Such an operation is costly, even for moderate values of $d$, and heuristics are proposed for its completion.
In the later phase, the authors exploit ``momentum'' techniques from convex optimization, that lead to provable acceleration when the objective is convex; as we state in the Conclusions section, such acceleration techniques have not considered in the factored space $A$, and constitute an interesting research direction.
From a theoretical perspective, \citep{shang2017superfast} provide no convergence or convergence rate guarantees.

\citep{teo2013informationally} use in practice the general parameterization of density matrices, $\rho = \tfrac{AA^\dagger}{\text{Tr}(AA^\dagger)}$, that ensures jointly positive definiteness and unity in the trace.
There, in order to attain the maximum value of the log-likehoood objective, a steepest ascent method is proposed over $A$ variables, where the step size $\eta$ is an arbitrarily selected but sufficient small parameter.
There is no discussion regarding convergence and convergence rate guarantees, as well as any specific set up of the algorithm (step size, initialization, etc.).

\citep{gonccalves2016projected} study the QST problem in the original parameter space, and propose a projected gradient descent algorithm.
The proposed algorithm applies both in convex and non-convex objectives, and convergence only to stationary points could be expected. 
\citep{bolduc2017projected} extend the work in \citep{gonccalves2016projected} with two first-order variants, using momentum motions, similar to the techniques proposed by Polyak and Nesterov for faster convergence in convex optimization \citep{nesterov1983method}.
The above algorithms operate in the informationally complete case. 
Similar ideas in the informationally incomplete case can be found in \citep{kyrillidis2014matrix, becker2013randomized}.

Very recently, \citep{riofrio2017experimental} presented an experimental implementation of CS tomography of a $n = 7$ qubit system, where only $127$ Pauli basis measurements are available. 
To achieve recovery in practice --within a reasonable time frame over hundreds of problem instances-- the authors proposed a computationally efficient estimator, based on the factorization $\rho = AA^\dagger$.
The resulting method resembles the gradient descent on the factors $A$, as the one presented in this paper.
However, the authors focus only on the experimental efficiency of the method and provide no specific results on the optimization efficiency of the algorithm, what are its theoretical guarantees, and how its components (such as initialization and step size) affect its performance (\emph{e.g.}, the step size is set to a sufficiently small constant).

One of the first provable algorithmic solutions for the QST problem was through convex approximations \cite{recht2010guaranteed}:
this includes nuclear norm minimization approaches \cite{gross2010quantum}, as well as proximal variants, as the one that follows:
\begin{equation} \label{eq:conventional}
\begin{aligned}
	& \underset{\X \succeq 0}{\text{minimize}}
	& & \|\linmap(\X) - \obs\|_F^2 + \lambda \trace(\X).
\end{aligned}
\end{equation}
See also \cite{gross2010quantum} for the theoretical analysis. 
Within this context, we mention the work of \cite{yurtsever2015universal}: there,  the \texttt{AccUniPDGrad} algorithm is proposed --a universal primal-dual convex framework with sharp operators, in lieu of proximal low-rank operators-- where QST is considered as an application. 
\texttt{AccUniPDGrad} combines the flexibility of proximal primal-dual methods with the computational advantages of conditional gradient (Frank-Wolfe-like) methods.
We will use this algorithm for comparisons in the experimental section. 

\cite{hazan2008sparse} presents \texttt{SparseApproxSDP} algorithm that solves the QST problem in \eqref{eq:CVX2},
when the objective is a generic gradient Lipschitz smooth function, by updating a putative low-rank solution with rank-1 refinements, coming from the gradient. 
This way, \texttt{SparseApproxSDP} avoids computationally expensive operations per iteration, such as full eigen-decompositions.
In theory, at the $r$-th iteration, \texttt{SparseApproxSDP}
is guaranteed to compute a $\tfrac{1}{r}$-approximate solution, with rank at most $r$, \textit{i.e.}, achieves a sublinear 
$O\left(\tfrac{1}{\varepsilon}\right)$ convergence rate. 
However, depending on $\varepsilon$, \texttt{SparseApproxSDP} might not return a low rank solution. 

Finally, \cite{becker2013randomized} propose Randomized Singular Value Projection (\texttt{RSVP}), a projected gradient descent algorithm for QST, which merges gradient calculations with truncated eigen-decompositions, via randomized approximations for computational efficiency. 

Overall, our program is tailored for tomography of highly-pure quantum states, by incorporating this constraint into the structure of $A$.  
This has two advantages. 
First, it results in a faster algorithm that enables us to deal many-qubit state reconstruction in a reasonable time; and, second, it allows us prove the accuracy of the \texttt{ProjFGD} estimator under model errors and experimental noise, similar to the CS results.

\section{Numerical experiments}{\label{sec:experiments}}

\begin{table*}[!ht]
\centering
\begin{tabular}{c c c c c c c c c c c c c c c c c c c c c}
  \toprule
& & \multicolumn{9}{c}{$d = 2^7$} & & \multicolumn{9}{c}{$d = 2^{13}$} \\ 
\cmidrule{3-11} \cmidrule{13-21}
  & & \multicolumn{3}{c}{$\sigma = 0$} & & \multicolumn{5}{c}{$\sigma = 0.05$} & & \multicolumn{3}{c}{$\sigma = 0$} & & \multicolumn{5}{c}{$\sigma = 0.05$} \\
  \cmidrule{3-5} \cmidrule{7-11} \cmidrule{13-15} \cmidrule{17-21}
  Algorithm   & & Time [s]& & $\tfrac{\|\widehat{\rho} - \rhoo\|_F}{\|\rhoo\|_F}$ & & Time [s] & & $\tfrac{\|\widehat{\rho} - \rhoo\|_F}{\|\rhoo\|_F}$ & & Infidelity & & Time [s] & & $\tfrac{\|\widehat{\rho} - \rhoo\|_F}{\|\rhoo\|_F}$  & & Time [s] & & $\tfrac{\|\widehat{\rho} - \rhoo\|_F}{\|\rhoo\|_F}$ & & Infidelity  \\
  \cmidrule{1-1}   \cmidrule{3-3} \cmidrule{5-5} \cmidrule{7-7} \cmidrule{9-9} \cmidrule{11-11} \cmidrule{13-13} \cmidrule{15-15} \cmidrule{17-17} \cmidrule{19-19} \cmidrule{21-21}
  \eqref{eq:CVX1}  & & 46.01 & &  5.3538e-07 & &  58.48 & &  6.0405e-02 & &  3.0394e-02  & &  - & &  - & &  - & &  - & & -\\ 
  \eqref{eq:CVX2} & & 77.12 & &  3.0645e-04  & &  65.53 & &  6.1407e-02 & &  3.0559e-02  & &  - & &  - & &  - & &  - & & -\\ 
  \texttt{ProjFGD} & & 0.28 & &  3.2224e-08  & &  0.30 & &  2.3540e-02 & &  1.3820e-04  & &  1314.01 & &  6.8469e-08  && 1487.22 && 3.1104e-02 && 1.9831e-03\\
  \bottomrule
\end{tabular}
\caption{All values are \emph{median} values over 10 independent Monte Carlo iterations.} \label{table1} 
\end{table*}

We conducted experiments in an Matlab environment, installed in a Linux-based system with 256 GB of RAM, and equipped with two Intel Xeon E5-2699 v3 2.3GHz, 45M Cache, 9.60GT/s. 
In all the experiments, the error reported in the Frobenius metric,  $\|\widehat{\rho} - \rhoo\|_F / \|\rhoo\|_F$, where $\widehat{\rho}$ is the estimation of the true state $\rhoo$. 
Note that for a pure state $\rho$, $\|\rho\|_F=1$. For some experiments we also report the infidelity metric $1 - \text{Tr}\left(\sqrt{\sqrt{\rhoo} \widehat{\rho} \sqrt{\rhoo}}\right)^2$. 
We will also use $\DM_{d}$ to denote the set of $d\times d$ density matrices $\DM_{d}=\{\varrho: \varrho\in \mathbb{C}^{d \times d}, \varrho \succeq 0, \tr(\varrho)=1 \}$.

\subsection{Comparison of \texttt{ProjFGD} with second-order methods}

As a first set of experiments, we compare the efficiency of \texttt{ProjFGD} with \emph{second-order} cone convex programs.
State of the art solvers within this class of solvers are the SeDuMi \citep{sturm1999using} and SDPT3 \citep{tutuncu2003solving} methods; for their use, we rely on the off-the-shelf Matlab wrapper \texttt{CVX} \citep{cvx}.
In our experiments, we observed that SDPT3 was faster and we select it for our comparison.

The setting is as described in Section \ref{sec:QST}: 
we consider rank-1 normalized density matrices $\rhoo \in \DM_{2^n}$, from we which we obtain Pauli measurements such that 
$y_i = \tfrac{2^n}{\sqrt{m}}\text{Tr}(P_i \cdot \rho_\star)+e_i,~i = 1, \dots, m$, for some i.i.d. Gaussian measurement error $e_i$, with variance $\sigma$, \emph{i.e.}, $\sim \mathcal{CN}(0, \sigma \cdot I)$.
We consider both convex formulations \eqref{eq:CVX1}-\eqref{eq:CVX2} and compare it to the  \texttt{ProjFGD} estimator with $r=1$; in figures we use the notation \texttt{CVX 1} and \texttt{CVX 2} for simplicity.

We consider two cases: $(i)$ $n = 7$, and $(ii)$ $n = 13$.
Table \ref{table1} shows median values of 10 independent experimental realizations for $m = \tfrac{7}{3} rd \log d$; this selection of $m$ was made so that all algorithms return a solution close to the optimum $\rhoo$.
Empirically, we have observed that $\texttt{ProjFGD}$ succeeds even for cases $m = \mathcal{O}(rd)$.
We consider both noiseless $\sigma = 0$ and noisy $\sigma = 0.05$ settings.

In order to accelerate the execution of convex programs, we set the solvers in \texttt{CVX} to low precision.
From Table \ref{table1}, we observe that our method is two orders of magnitude faster than second-order methods for $n = 7$: \texttt{ProjFGD} achieves better performance (in both error metrics, and in both noisy/noiseless cases), faster. 
For the higher $n = 13$ qubit case, we could not complete the experiments for \eqref{eq:CVX1}-\eqref{eq:CVX2} due to system crash (RAM overflow). 
Contrariwise, our method was able to complete the task with success within about $22$ minutes of CPU time.

Figures \ref{fig:time_vs_measurements}-\ref{fig:time_vs_dimensions} show graphically how second-order convex vs. our first-order non-convex schemes scale, as a function of time.
In Figure \ref{fig:time_vs_measurements}, we fix the dimension to $d = 2^7$ and study how increasing the number of observations $m$ affects the performance of the algorithms.
We observe that, while in the \texttt{ProjFGD}, more observations lead to faster convergence \citep{chandrasekaran2013computational}, the same does not hold for the second-order cone programs.
In Figure \ref{fig:time_vs_dimensions}, we fix the number data points to $m = \tfrac{7}{3}rd \log d$, and we scale the dimension $d$. 
It is obvious that the convex solvers do not scale easily beyond $n = 7$, whereas our method handles cases up to $n = 13$, within reasonable time.

\begin{figure}[!h]
\centering
\includegraphics[width=1\columnwidth]{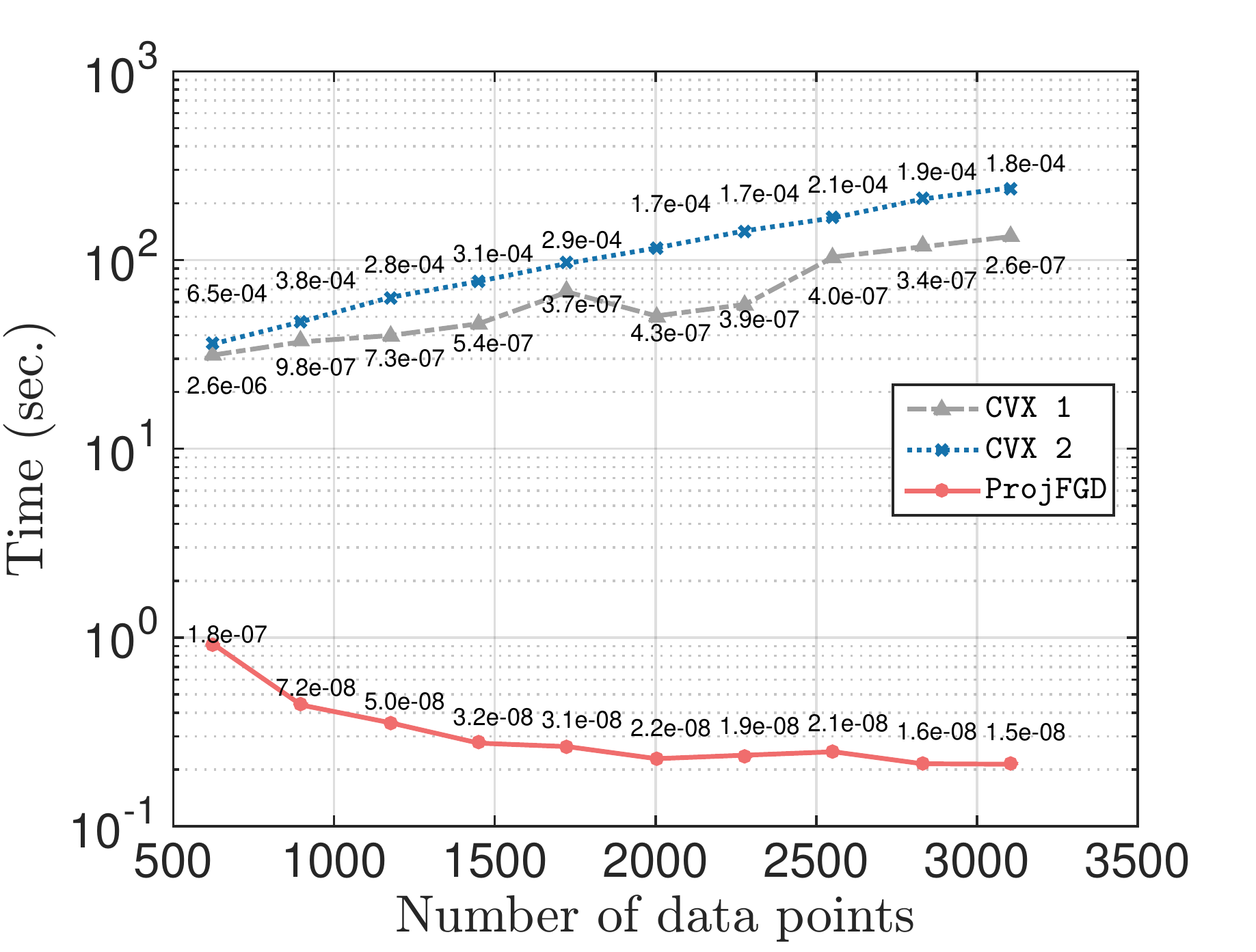}
\caption{Dimension fixed to $d = 2^7$ with $\texttt{rank}(\rhoo) = 1$. The figure depicts the noiseless setting, $\sigma=0$. Numbers within figure are the error in Frobenius norm achieved (median values). }\label{fig:time_vs_measurements}
\end{figure}

\begin{figure}[!h]
\centering
\includegraphics[width=1\columnwidth]{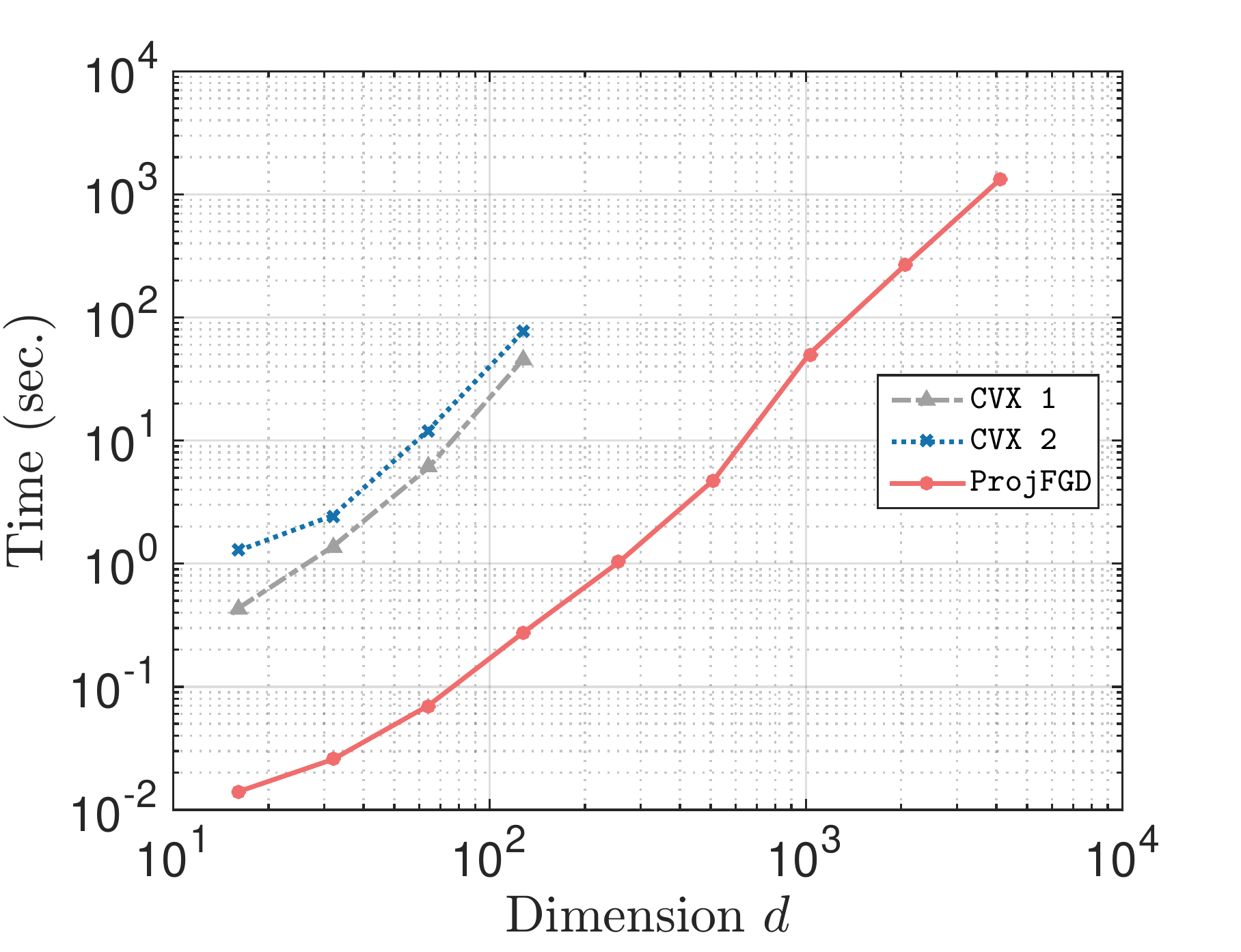}
\caption{Number of data points set to $m = \tfrac{7}{3} r d \log d$. Rank of optimum point is set to $\texttt{rank}(\rhoo) = 1$. The figure depicts the noiseless setting. }\label{fig:time_vs_dimensions}
\end{figure}

\subsection{Comparison of \texttt{ProjFGD} with first-order methods}

\begin{figure*}[t!]
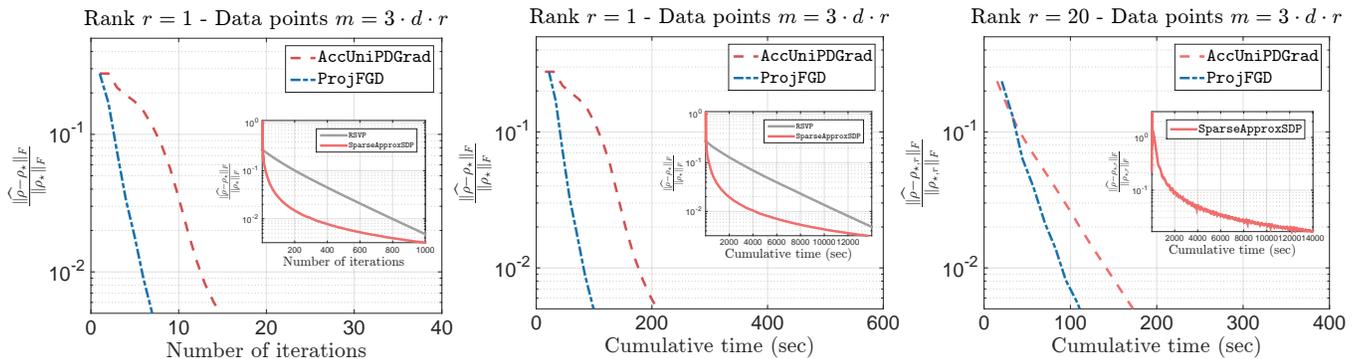

\centering
\includegraphics[width=0.33\textwidth]{./pureQST_q_12_C_3_init_ours_XcurvesIter} \hspace{-0.2cm}
\includegraphics[width=0.33\textwidth]{./pureQST_q_12_C_3_init_ours_XcurvesTime} \hspace{-0.2cm}
\includegraphics[width=0.33\textwidth]{./QST_q_10_C_3_r_20_init_ours_XcurvesTime} 
\caption{\small{\textbf{Left and middle panels}: Convergence performance of algorithms under comparison w.r.t. error in Frobenius norm vs. $(i)$ the total number of iterations (left) and $(ii)$ the total execution time. Both cases correspond to $C_{\rm sam} = 3$, $r = 1$ (pure state) and $n = 12$ (\textit{i.e.}, $d = 4096$). \textbf{Right panel}: Nearly low-rank state case---we approximate $\rhoo$ with $\rho_{\star, r}$, with $r = 20$. In this setting, $n = 12$ (\textit{i.e.}, $d = 4096$) and $C_{\rm sam} = 3$. }
}
\label{fig:exp1}
\end{figure*}

Here, we compare our method with more efficient first-order methods, both convex (\texttt{AccUniPDGrad}~\cite{yurtsever2015universal}) and non-convex (\texttt{SparseApproxSDP}~\cite{hazan2008sparse} and \texttt{RSVP}~\cite{becker2013randomized}). 

We consider two settings: $\rhoo \in \DM_{d}$ is $(i)$ a pure state (\emph{i.e.}, $\text{rank}(\rhoo) = 1$) and, 
$(ii)$ a nearly low-rank state.
In the latter case, we construct $\rhoo = \rho_{\star, r} + \zeta$, where $\rho_{\star, r}$ is a rank-deficient PSD satisfying $\text{rank}(\rho_{\star, r}) = r$, and $\zeta \in \mathbb{C}^{d \times d}$ is a full-rank PSD noise term with a fast decaying eigen-spectrum, significantly smaller than the leading eigenvalues of $\rho_{\star, r}$.
In other words, we can well-approximate $\rhoo$ with $\rho_{\star, r}$.
For all cases, we model the measurement vector as $\obs = \linmap(\rhoo) + e$; here, the noise is such that $\|e\| = 10^{-3}$.
The number of data points $m$ satisfy $m = C_{\rm sam} \cdot r d $, for various values of $C_{\rm sam} > 0$.

For all algorithms, we assumed $r=\text{rank}(\rho_{\star, r})$ is known and use it to reconstruct a rank-$r$ approximation of $\rhoo$. 
All methods that require an SVD routine use \texttt{lansvd}$(\cdot)$ from the \texttt{PROPACK} software package.   
Experiments and algorithms are implemented in a \textsc{Matlab} environment; we used non-specialized and non-\texttt{mex}ified code parts for all algorithms. 
For initialization, we use the same starting point for all algorithms, which is either specific (Section \ref{subsec:initialization}) or random. 
As a stopping criterion, we use $\tfrac{\|\rho_{t+1} - \rho_t\|_F}{\|\rho_{t+1}\|_F} \leq \texttt{tol}$; we set the tolerance parameter to $\texttt{tol} := 5\cdot 10^{-6}$.

\emph{Convergence plots.} 
Figure \ref{fig:exp1} (two-leftmost plots) illustrates the iteration and timing complexities of each algorithm under comparison, for a pure state recovery setting ($r = 1$) of a highly-pure $\rhoo$. 
Here, $n = 12$ which corresponds to a ${d^2} = 16,777,216$ dimensional problem; moreover, we assume $C_{\rm sam} = 3$ and thus the number of data points are $m = 12,288$. 
For initialization, we use the proposed initialization in Section \ref{subsec:initialization} for all algorithms: we compute $-\mathcal{M}^\dagger(y)$, extract factor $\U_0$ as the best-$r$ PSD approximation of $-\mathcal{M}^\dagger(y)$, and project $\U_0$ onto $\C$. 

It is apparent that \texttt{ProjFGD} converges faster to a vicinity of $\rhoo$, compared to the rest of the algorithms; observe also the sublinear rate of \texttt{SparseApproxSDP} in the inner plots, as reported in \cite{hazan2008sparse}.

Table \ref{tbl:small_Comp} contains recovery error and execution time results for the case $n = 13$ ($d = 8096$); in this case, we solve a $d^2 = 67,108,864$ dimensional problem. 
For this case, \texttt{RSVP} and \texttt{SparseApproxSDP} algorithms were excluded from the comparison, due to excessive execution time.
Appendix~\ref{sec:add_exp} provides extensive results, where similar performance is observed for other values of $d=2^n$ and $C_{\rm sam}$.

\begin{table}[!ht]
	\centering
		\begin{tabular}{c c c c c} \toprule
			& \phantom{a} & \multicolumn{3}{c}{} \\
			\cmidrule {3-5}
			Algorithm & \phantom{a} & $\tfrac{\|\widehat{\rho} - \rhoo \|_F}{\|\rhoo\|_F}$ & \phantom{a}  & Time [s]\\
			\cmidrule{1-1} \cmidrule {3-3} \cmidrule{5-5} 
			\texttt{AccUniPDGrad} & & 7.4151e-02 & & 2354.4552 \\ 
			\texttt{ProjFGD} & & 8.6309e-03 & & 1214.0654 \\ 
			\bottomrule
		\end{tabular}
	\caption{\small{Comparison results for reconstruction and efficiency, for $n = 13$ qubits and  $C_{\rm sam} = 3$.}} \label{tbl:small_Comp}
\end{table}

Figure \ref{fig:exp1} (rightmost plot) considers the more general case where $\rhoo$ is nearly low-rank: \emph{i.e.}, it can be well-approximated by a density matrix $\rho_{\star, r}$ where $r=20$ (low-rank density matrix).
In this case, $n = 12$, $m = 245,760$ for $C_{\rm sam} = 3$. 
As the rank in the model, $r$, increases, algorithms that utilize an SVD routine spend more CPU time on singular value/vector calculations. 
Certainly, the same applies for matrix-matrix multiplications; however, in the latter case, the complexity scale is milder than that of the SVD calculations.  Further metadata are also provided in Table \ref{tbl:small_Comp2}. 

\begin{table}[!ht]
	\centering
		\begin{tabular}{c c c c c} \toprule
		& \multicolumn{2}{c}{Setting: $r = 5$.}  & \multicolumn{2}{c}{Setting: $r = 20$.} \\
		\cmidrule {2-3} \cmidrule{4-5}
		Algorithm & $\tfrac{\|\widehat{\rho} - \rho_{\star, r} \|_F}{\|\rho_{\star, r}\|_F}$ & Time [s] & $\tfrac{\|\widehat{\rho} - \rho_{\star, r} \|_F}{\|\rho_{\star, r}\|_F}$ & Time [s] \\
		\midrule
     \texttt{SparseApproxSDP} & 3.17e-02 & 3.74 & 5.49e-02 & 4.38 \\ 
		\texttt{RSVP} & 5.15e-02 &  0.78 & 1.71e-02 & 0.38  \\ 
		\texttt{AccUniPDGrad} & 2.01e-02 & 0.36 & 1.54e-02 & 0.33\\ 
		\texttt{ProjFGD} & 1.20e-02 & 0.06 & 7.12e-03 & 0.04 \\ 
		\bottomrule
	\end{tabular}
	\caption{\small{Results for reconstruction and efficiency. Time reported is in seconds. For all cases, $C_{\rm sam} = 3$ and $n = 10$.}} \label{tbl:small_Comp2}
\end{table}
For completeness, in Appendix~\ref{sec:add_exp} we provide results that illustrate the effect of random initialization:
Similar to above, \texttt{ProjFGD} shows competitive behavior by finding a better solution faster, irrespective of initialization point. 

\emph{Timing evaluation (total and per iteration)}. 
Figure \ref{fig:exp2} highlights the efficiency of our algorithm in terms of time complexity, for various problem configurations.
Our algorithm has fairly low per iteration complexity (where the most expensive operation for this problem is matrix-matrix and matrix-vector multiplications). 
Since our algorithm shows also fast convergence in terms of the number of iterations, this overall results into faster convergence towards a good approximation of $\rhoo$, even as the dimension increases. 
Figure \ref{fig:exp2}  shows how the total execution time scales with parameters $n$ and $r$. 
\begin{figure}[t!]
\centering
\includegraphics[width=1\columnwidth]{./pureQST_timeTotal_C_6_init_ours} 
\includegraphics[width=1\columnwidth]{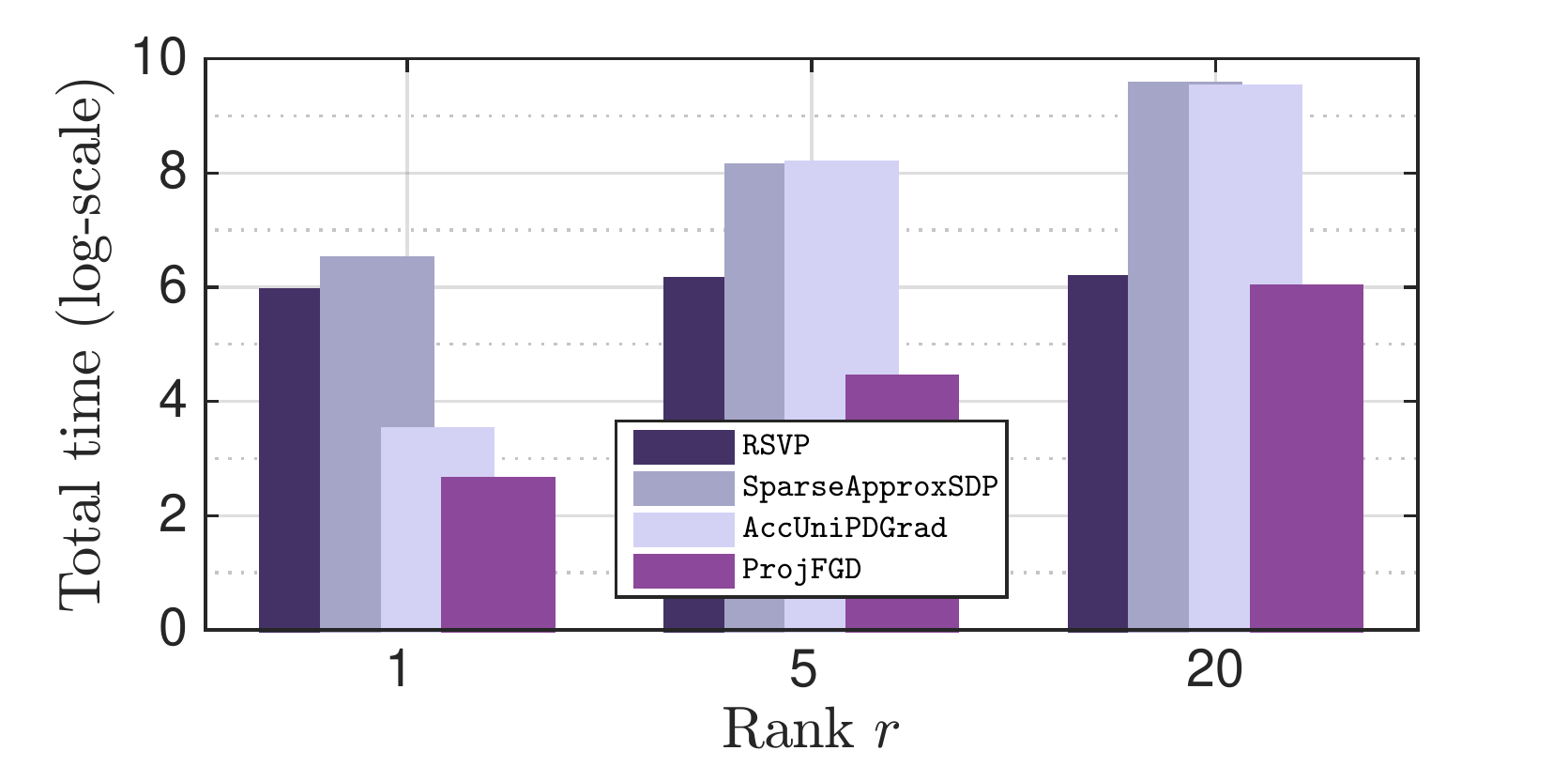}
\caption{\small{Timing bar plot.$y$-axis shows total execution time in ($\log_{10}$-scale) while $x$-axis corresponds to different $n$ values. Top panel corresponds to $r = 1$ and $C_{\rm sam} = 6$; bottom panel corresponds to $n = 10$ and $C_{\rm sam} = 3$. All cases are noiseless.}
}
\label{fig:exp2}
\end{figure}

\emph{Overall performance}. 
\texttt{ProjFGD} shows a substantial improvement in performance, as compared to the state-of-the-art algorithms; we would like to emphasize also that projected gradient descent schemes, such as in \cite{becker2013randomized}, are also efficient in small- to medium-sized problems, due to their
fast convergence rate. 
Further, convex approaches might show better sampling complexity performance (\emph{i.e.}, as $C_{\rm sam}$ decreases).
Nevertheless, one can perform accurate MLE reconstruction for larger systems in the same amount of time using our methods for such small- to medium-sized problems.
We defer the reader to Appendix~\ref{sec:add_exp}, due to space restrictions.

\section{Summary and conclusions}

In this work, we propose a non-convex algorithm, dubbed as \texttt{ProjFGD}, for estimating a highly-pure quantum state, in a high-dimensional Hilbert space, from relatively small number of data points. 
We showed empirically that \texttt{ProjFGD} is orders of magnitude faster than state-of-the-art convex and non-convex programs, such as \citep{yurtsever2015universal},\citep{hazan2008sparse}, and \citep{becker2013randomized}. 
More importantly, we prove that under proper initialization and step-size, the \texttt{ProjFGD} is guaranteed to converge to the global minimum of the problem, thus  ensuring a provable tomography procedure; see Theorem~\ref{thm:main2} and Lemma~\ref{lem:init}.

In our setting, we model the state as a low-rank PSD matrix. 
This, in turn, means that the estimator is biased towards low-rank states. 
However, such bias is inherent to all CS-like QST protocols by the imposition of the positivity constraint \citep{kalev2015quantum}.

Our techniques and proofs can be applied --in some cases under proper modifications-- to scenaria beyond the ones considered in this work. 
We restricted our discussions to a measurement model of random Pauli observables, that satisfies RIP.
We conjecture that our results apply for other ``sensing'' settings, that are informationally complete for low-rank states; see \emph{e.g.}, \citep{baldwin2016strictly}. 
The results presented here are independent of the noise model and could be applied for non-Gaussian noise models, such as those stemming from finite counting statistics.  
Lastly, while here we focus on state tomography, it would be interesting to explore similar techniques for the problem of process tomography.
 
We conclude with a short list of interesting future research directions. 
Our immediate goal is the application of \texttt{ProjFGD} in real-world scenaria; this could be completed by utilizing the infrastructure at the IBM T.J. Watson Research Center \citep{IBMQ}.
This could complement the results found in \citep{riofrio2017experimental} for a different quantum system.

Beyond this practical implementation, we identify the following interesting open questions.
First, the ML estimator is one of the most frequently-used methods for QST experiments. 
Beyond its use as point estimator, it is also used as a basis for inference around the point estimate, via confidence intervals~\citep{christandl2012reliable} and credible regions~\citep{shang13optimal}. 
However, there is still no rigorous analysis when the factorization $\rho = AA^\dagger$ is used. 

The work of \citep{shang2017superfast} considers accelerated gradient descent methods for QST in the original parameter space $\rho$: 
Based on the seminal work of Polyak and Nesterov \citep{nesterov1983method} on convex optimization first-order methods, one can achieve orders of magnitude acceleration (both in theory and practice), by exploiting the \emph{momentum} from previous iterates.
It remains an open question how our approach could exploit acceleration techniques that lead to faster convergence in practice, along with rigorous approximation and convergence guarantees. 
Further, distributed/parallel implementations, like the one in \citep{hou2016full}, remain widely open using our approach, in order to accelerate further the execution of the algorithm.
Research along these directions is very interesting and is left for future work.

Finally, while we saw numerically that a random initialization under noisy and constrained settings works well, a careful theoretical treatment for this case is an open problem.

\begin{acknowledgments}
Anastasios Kyrillidis is supported by the IBM Goldstine Fellowship and Amir Kalev is supported by the Department of Defense.
\end{acknowledgments}

\appendix
\small
\section{Theory}{\label{sec:proof}}

\noindent \textbf{Notation.} For matrices $\rho, \zeta \in \mathbb{C}^{d \times d}$, $\ip{\rho}{\zeta} = \trace\left(\rho^\dagger \zeta \right)$ represents their inner product. 
We use $\norm{\rho}_F$ and $\sigma_1(\rho)$ for the Frobenius and spectral norms of a matrix, respectively. 
We denote as $\sigma_i(\rho)$ the $i$-th singular value of $\rho$.
$\rho_r$ denotes the best rank-$r$ approximation of $\rho$. 

\subsection{Problem generalization, notation and definitions}
To expand the generality of our scheme, we re-state the problem setting for a broader set of objectives.
The following arguments hold for both real and complex matrices. 
We consider criteria of the following form:
\begin{equation}{\label{eq:appe_00}}
\begin{aligned}
	& \underset{\rho \in \mathbb{C}^{d \times d}}{\text{minimize}}
	& & f(\rho) \quad \quad \text{subject to} \quad \rho \succeq 0, ~\rho \in \C'.
\end{aligned}
\end{equation} 
To make connection with the QST objective, set $f(\rho) = \tfrac{1}{2} \cdot \|y - \mathcal{M}(\rho)\|_2^2$, and $\rho \in \C' ~\Leftrightarrow~ \trace(\rho) \leq 1$.
Apart from the least-squares objective in QST, our theory extends to applications that can be described by \emph{strongly} convex functions $f$ with \emph{gradient Lipschitz continuity}.
Further, our ideas can be applied in a similar fashion to the case of restricted smoothness and restricted strong convexity \cite{agarwal2010fast}. 
We state these standard definitions below for the square case.

\begin{definition}{\label{prelim:def_00}}
Let $f: \mathbb{C}^{d \times d} \rightarrow \R$ be convex and differentiable. 
$f$ is restricted $\mu$-strongly convex if for all rank-$r$ matrices $\rho, \zeta \in  \mathbb{C}^{d \times d}$, 
\begin{equation}\label{eq:sc}
f(\zeta) \geq f(\rho) + \ip{\gradf\left(\rho\right)}{\zeta - \rho} + \tfrac{\mu}{2} \norm{\zeta - \rho}_F^2.
\end{equation}
\end{definition}

\begin{definition}{\label{prelim:def_01}}
Let $f: \mathbb{C}^{d \times d} \rightarrow \R$ be a convex differentiable function. 
$f$ is restricted gradient Lipschitz continuous with parameter $L$ (or $L$-smooth) if for all rank-$r$ matrices $\rho, \zeta \in  \mathbb{C}^{d \times d}$, 
\begin{equation}
\norm{\gradf\left(\rho\right) - \gradf\left(\zeta\right)}_F \leq L \cdot \norm{\rho - \zeta}_F.
\end{equation}
\end{definition} 

To shed some light on the notions of (restricted) strong convexity and smoothness and how they relate to the QST objective, consider the \emph{restricted isometry property}, which holds with high probability under Pauli measurements for low rank $\rhoo$ \cite{candes2011tight, liu2011universal}; here, we present a simplified version of the definition in the main text:
\begin{definition}[Restricted Isometry Property (RIP)]\label{def:RIP}
A linear map $\mathcal{M}$ satisfies the $r$-RIP with constant $\delta_r$, if
\begin{align*}
(1 - \delta_r)\|\rho\|_F^2 \leq \|\mathcal{M}(\rho)\|_2^2 \leq (1 + \delta_r)\|\rho\|_F^2,
\end{align*} is satisfied for all matrices $\rho \in \mathbb{C}^{d \times d}$ such that ${\rm \text{rank}}(\rho) \leq r$. 
\end{definition} 

According to the quadratic loss function in QST:
\begin{align*}
f(\rho) = \tfrac{1}{2} \|y - \mathcal{M}(\rho)\|_2^2,
\end{align*} 
its Hessian is given by $\mathcal{M}^\dagger\mathcal{M}(\cdot)$. 
Then, (restricted) strong convexity suggests that: 
\begin{align*}
\|\mathcal{M}(\rho)\|_2^2\geq C \cdot \|\rho\|_F^2, \quad \rho \in \mathbb{C}^{d \times d},
\end{align*} 
for a restricted set of directions $\rho$, where $C > 0 $ is a small constant. 
Then, the correspondence of restricted strong convexity and smoothness with the RIP is obvious: both lower and upper bound the quantity $\|\mathcal{M}(\rho)\|_2^2$, where $\rho$ is drawn from a restricted (low-rank) set. 
It turns out that linear maps that satisfy the RIP for low rank matrices, also satisfy the restricted strong convexity; see Theorem 2 in \cite{chen2010general}.

By assuming RIP, the condition number $\kappa$ of $f$ depends on the RIP constants of the linear map $\mathcal{M}$; in particular, one can show that $\kappa = \tfrac{L}{\mu} \propto \tfrac{1 + \delta}{1-\delta}$, since the eigenvalues of $\mathcal{M}^\dagger\mathcal{M}$ lie between $1-\delta$ and $1 + \delta$, when restricted to low-rank matrices. 
Observe that for $\delta$ sufficiently small and dimension $d$ sufficiently large, $\kappa \approx 1$, with high probability.

We assume the optimum $\rho_\star$ satisfies $\text{rank}(\rho_\star) = r_\star$.
For our analysis, we further assume we know $r_\star$ and set $r_\star \equiv r$. 

As suggested in the main text, we solve \eqref{eq:appe_00} in the factored space, as follows:
\begin{equation}{\label{eq:appe_01}}
\begin{aligned}
	& \underset{\A \in \mathbb{C}^{d \times r}}{\text{minimize}}
	& & f(\A\A^\dagger) \quad \quad \text{subject to} \quad A \in \C.
\end{aligned}
\end{equation} 
In the QST setting,  $\A \in \C ~\Leftrightarrow~ \|\A\|_F^2 \leq 1$.

In our theory we mostly focus on sets $\C$ that satisfy the following assumptions.
\begin{assumption}{\label{ass:00}}
For $\rho \succeq 0$, there is $\A \in \mathbb{C}^{d \times r}$ and $r \leq d$ such that $\rho = \A\A^\dagger$. 
Then, $\C' \subseteq \mathbb{C}^{d \times d}$ is endowed with a constraint set $\C \subseteq \mathbb{C}^{d \times r}$ that $(i)$ for each $\rho \in \C'$, there is an subset in $\C$ where each $\A \in \C$ satisfies $\rho = \A\A^\dagger$ 
and $(ii)$ its projection operator, say $\Pi_{\C}(B) = \argmin_{\A \in \C} \tfrac{1}{2} \|\A - B\|_F^2$ for $B \in \mathbb{C}^{d \times r}$, is an entrywise scaling operation on the input $B$.
\end{assumption}

We also require the following \emph{faithfulness} assumption  \cite{chen2015fast}:
\begin{assumption}{\label{prelim:def_02}}
Let $\E$ denote the set of equivalent factorizations that lead to a rank-$r$ matrix $\rhoo \in \mathbb{C}^{d \times d}$; 
i.e.,
$\E := \left\{ \Ao \in \mathbb{C}^{d \times r}~:~ \rhoo = \Ao \Ao^\dagger \right\}.$
Then, we assume $\E \subseteq \C$, i.e., the resulting convex set $\C$ in \eqref{eq:appe_01} (from $\C'$ in \eqref{eq:appe_00}) \emph{respects the structure of $\E$}.
\end{assumption}

Summarizing, by faithfulness of $\C$ (Assumption \ref{prelim:def_02}), we assume that $\E \subseteq \C$. 
This means that the feasible set $\C$ in \eqref{eq:appe_01} contains all matrices $\Ao$ that lead to $\rhoo = \Ao \Ao^\dagger $ in \eqref{eq:appe_00}.
Moreover, we assume both $\C, ~\C'$ are convex sets and there exists a ``mapping" from $\C'$ to $\C$, such that the two constraints are ``equivalent": \emph{i.e.}, $\forall \A \in \C$, we are guaranteed that $\rho = \A\A^\dagger \in \C'$. 
We restrict our discussion on norm-based sets for $\C$	such that Assumption \ref{ass:00} is satisfied.
As a representative example, consider the QST case where, for any $\rho = \A\A^\dagger$, $\trace(\rho) \leq 1 \Leftrightarrow \|\A\|_F^2 \leq 1$. 

For our analysis, we will use the following step sizes: 
\begin{small}
\begin{align*}
\widehat{\eta} &= \frac{1}{128 \left(L \sigma_1(\rho_t) + \sigma_1\left(Q_{\A_t} Q_{\A_t}^\dagger\gradf(\rho_t)\right)\right)}, \\
\eta_\star &= \frac{1}{128 \left( L \sigma_1(\rhoo) + \sigma_1(\gradf(\rhoo))\right)}.
\end{align*}
\end{small}
Here, $L$ is the Lipschitz constant in \eqref{eq:lipschitz} and $Q_{\A_t}Q_{\A_t}^\dagger$ represents the projection onto the column space of $\A_t$. 
In our algorithm, as described in the main text, we use the following step size:
\begin{align*}
\eta \leq \frac{1}{128\left(L \sigma_1(\rho_0) + \sigma_1(\gradf(\rho_0) )\right)}  \quad \text{for given initial } \rho_0.
\end{align*}
While different, by Lemma A.5 in \citep{bhojanapalli2016dropping}, we know that $\widehat{\eta} \geq \tfrac{5}{6} \eta$ and $\tfrac{10}{11}\eta_\star \leq \eta \leq \tfrac{11}{10} \eta_\star$. 
Thus, in our proof, we will work with step size $\widehat{\eta}$, which is equivalent --up to constants-- to the original step size $\eta$ in the proposed algorithm. 

For ease of exposition, we re-define the sequence of updates:
$\A_t $ is the current estimate in the factored space, 
$\widetilde{\A}_{t+1} = \A_t - \widehat{\eta} \gradf(\rho_t)\A_t$ is the putative solution after the gradient step (observe that $\widetilde{\A}_{t+1}$ might not belong in $\C$),
and $\A_{t+1} = \Pi_{\C}(\widetilde{A}_{t+1})$ is the projection step onto $\C$. 
Observe that for the constraint cases we consider in this paper, $\A_{t+1} = \Pi_{\C}(\widetilde{\A}_{t+1}) = \xi_t(\widetilde{\A}_{t+1}) \cdot \widetilde{A}_{t+1}$, where $\xi_t(\cdot) \in (0, 1)$; in the case $\xi_t(\cdot) = 1$, the algorithm simplifies to the algorithm in \citep{bhojanapalli2016dropping}.
For simplicity, we drop the subscript and the parenthesis of the $\xi$ parameter; these values are apparent from the context.

An important issue in optimizing $f$ over the factored space is the existence of non-unique possible factorizations. 
We use the following rotation invariant distance metric:
\begin{definition}{\label{prelim:def_04}}
Let matrices $\A, \B \in \mathbb{C}^{d \times r}$. Define:
\begin{align*}
\dist\left(A, B\right) :=\min_{R: R \in \mathcal{U}} \norm{A - B R}_F, 
\end{align*} where $\mathcal{U}$ is the set of $r \times r$ unitary matrices $R$.
\end{definition}

We assume that \texttt{ProjFGD} is initialized with a ``good'' starting point $\rho_0 = \A_0 \A_0^\dagger$, such that:
\begin{itemize}
\item [$(A1)$] \quad $\A_0 \in \C$ \quad and \quad $ \dist(\A_0, \Ao) \leq \gamma' \sigma_{r}(\Ao)$ ~~\text{for } $\gamma' := c \cdot \tfrac{\mu}{L} \cdot \tfrac{\sigma_r(\rhoo)}{\sigma_1(\rhoo)}  $, where $c \leq \tfrac{1}{200}$.
\end{itemize}
Here, $\sigma_r(\cdot)$ denotes the $r$-singular value of the input matrix, in descending order.
Later in the text, we present an initialization that, under assumptions, leads further to global convergence results.

\subsection{Generalized theorem}
Next, we present the full proof of the following generalization of Theorem \ref{thm:main2}:

\begin{theorem}{\label{thm:main3}}
Let $\C \subseteq \mathbb{C}^{d \times r}$ be a convex, compact, and faithful set, with projection operator satisfying the assumptions described above.
Let $f$ be a convex function satisfying Definitions \ref{prelim:def_00} and \ref{prelim:def_01}.

Let $\U_t \in \C$ be the current estimate and $\X_t = \U_t\U_t^\dagger$.
Assume current point $\U_t$ satisfies $ \dist(\U_t, \Uo) \leq \gamma' \sigma_{r}(\Uo)$, for $\gamma' := c \cdot \tfrac{\mu}{L} \cdot \tfrac{\sigma_r(\X_\star)}{\sigma_1(\X_\star)}, ~c \leq \tfrac{1}{200}$, and given $\xi_t(\cdot) \gtrsim 0.78 $ per iteration, the new estimate of \texttt{ProjFGD}, $\U_{t+1} = \Pi_{\mathcal{C}}\left(\U_t - \widehat{\eta} \gradf(\U_t\U_t^\dagger)  \cdot \U_t\right) = \xi_t \cdot \left(\U_t - \widehat{\eta} \gradf(\U_t\U_t^\dagger)  \cdot \U_t\right)$ satisfies
\begin{equation}
\dist(\U_{t+1}, \Uo)^2 \leq \alpha \cdot \dist(\U_t, \Uo)^2, \label{conv:eq_01}
\end{equation}
where $\alpha := 1 - \frac{\mu \cdot \sigma_r(\Xo)}{550(L \sigma_1(\Xo) + \sigma_1(\gradf(\Xo)))} < 1$. 
Further, $\U_{t+1}$ satisfies $ \dist(\U_{t+1}, \Uo) \leq \gamma' \sigma_{r}(\Uo). $
\end{theorem}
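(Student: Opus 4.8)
The plan is to follow the single-step factored-gradient-descent template of \citep{bhojanapalli2016dropping}, augmenting it with a non-expansiveness argument that absorbs the new projection $\Pi_{\C}$. Throughout, let $R_t := \argmin_{R \in \mathcal{U}} \norm{\U_t - \Uo R}_F$ be the optimal rotation, write $\Delta_t := \U_t - \Uo R_t$ so that $\dist(\U_t,\Uo) = \norm{\Delta_t}_F$, and set $\widetilde{A}_{t+1} := \U_t - \widehat{\eta}\,\gradf(\X_t)\U_t$ for the unprojected gradient step. First I would verify that the target $\Uo R_t$ is feasible: by faithfulness (Assumption~\ref{prelim:def_02}), $\Uo \in \E \subseteq \C$, and since $\C$ is a rotation-invariant norm constraint (Assumption~\ref{ass:00}), $\Uo R_t \in \C$ as well, so $\Pi_{\C}(\Uo R_t) = \Uo R_t$. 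Because $\C$ is convex, $\Pi_{\C}$ is non-expansive, giving
\begin{align*}
\dist(\U_{t+1},\Uo)^2 &\leq \norm{\U_{t+1} - \Uo R_t}_F^2 \\
&= \norm{\Pi_{\C}(\widetilde{A}_{t+1}) - \Pi_{\C}(\Uo R_t)}_F^2 \leq \norm{\widetilde{A}_{t+1} - \Uo R_t}_F^2.
\end{align*}
This reduces the claim to a one-step bound for the \emph{unconstrained} iterate. (If instead one tracks the scaling $\U_{t+1} = \xi_t \widetilde{A}_{t+1}$ directly, the hypothesis $\xi_t \gtrsim 0.78$ is exactly what controls the residual $(1-\xi_t)^2\norm{\Uo R_t}_F^2$ and prevents the scaling from collapsing $\X_t$ out of the good neighborhood, where the step-size and strong-convexity calibrations apply.)

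Next I would expand
\begin{align*}
\norm{\widetilde{A}_{t+1} - \Uo R_t}_F^2 = \norm{\Delta_t}_F^2 - 2\widehat{\eta}\,\ip{\gradf(\X_t)\U_t}{\Delta_t} + \widehat{\eta}^2\norm{\gradf(\X_t)\U_t}_F^2,
\end{align*}
isolating a \emph{descent} inner product and a \emph{smoothness} gradient norm. The last term is bounded above through $L$-smoothness (Definition~\ref{prelim:def_01}), producing a factor governed by $\sigma_1(\X_t)$ and $\sigma_1(Q_{\U_t}Q_{\U_t}^\dagger\gradf(\X_t))$ — precisely the quantities in the denominator of $\widehat{\eta}$ — so that $\widehat{\eta}^2\norm{\gradf(\X_t)\U_t}_F^2$ contributes only a small fraction of the descent gained.

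The core of the argument, and the main obstacle, is the lower bound on $\ip{\gradf(\X_t)\U_t}{\Delta_t}$. Here I would mirror the key lemma of \citep{bhojanapalli2016dropping}: decompose the inner product into a first-order piece $\ip{\gradf(\X_t)}{\X_t - \Xo}$, bounded below by restricted $\mu$-strong convexity (Definition~\ref{prelim:def_00}) together with the first-order optimality of $\Uo$, plus higher-order remainders of the type $\ip{\gradf(\X_t)}{\Delta_t\Delta_t^\dagger}$. These remainders are the delicate part: they are cubic and quartic in $\Delta_t$ and must be dominated using the closeness hypothesis $\dist(\U_t,\Uo)\leq\gamma'\sigma_r(\Uo)$ together with the spectral inequality $\norm{\X_t - \Xo}_F^2 \geq 2(\sqrt{2}-1)\,\sigma_r(\Xo)\,\dist(\U_t,\Uo)^2$. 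The smallness of $\gamma'$ — its dependence on $\tfrac{\mu}{L}\cdot\tfrac{\sigma_r(\Xo)}{\sigma_1(\Xo)}$ with $c\leq\tfrac{1}{200}$ — is calibrated precisely so these remainders are absorbed, leaving a net lower bound proportional to $\mu\,\sigma_r(\Xo)\,\norm{\Delta_t}_F^2$.

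Finally I would substitute $\widehat{\eta}$ and combine the three terms. Using the step-size equivalences $\widehat{\eta}\geq\tfrac{5}{6}\eta$ and $\tfrac{10}{11}\eta_\star\leq\eta\leq\tfrac{11}{10}\eta_\star$ (Lemma A.5 of \citep{bhojanapalli2016dropping}) to replace the current-iterate spectral quantities by their values at $\Xo$, the coefficient collapses to $\alpha = 1 - \tfrac{\mu\,\sigma_r(\Xo)}{550(L\sigma_1(\Xo)+\sigma_1(\gradf(\Xo)))}$, and the closeness hypothesis guarantees $\alpha\in(0,1)$. Since $\alpha<1$, the contraction~\eqref{conv:eq_01} follows, and it immediately gives $\dist(\U_{t+1},\Uo)\leq\dist(\U_t,\Uo)\leq\gamma'\sigma_r(\Uo)$, establishing invariance of the neighborhood and closing the inductive step.
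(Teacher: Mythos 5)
Your reduction via non-expansiveness of \(\Pi_{\C}\) contains a genuine gap, and it is the central step of your argument. The reduction itself is a valid inequality (and your feasibility check \(\Uo R_t \in \E \subseteq \C\) is correct), but it discards exactly the corrective effect of the projection, and the claim you reduce to --- a one-step contraction of the \emph{unprojected} iterate \(\widetilde{A}_{t+1}\) toward the \emph{constrained} optimum --- is false in general. The point is that \(\Xo\) is only the minimizer of \(f\) over \(\C'\), not an unconstrained minimizer, so \(\gradf(\Xo)\U_\star \neq 0\) whenever the norm constraint is active: the KKT condition for \eqref{eq:appe_01} gives \(\gradf(\Xo)\Uo = -\lambda \Uo\) with multiplier \(\lambda > 0\). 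Take \(\U_t = \Uo\) with \(\|\Uo\|_F = 1\). Then
\begin{align*}
\widetilde{A}_{t+1} = \Uo - \widehat{\eta}\,\gradf(\Xo)\Uo = (1+\widehat{\eta}\lambda)\,\Uo ,
\qquad
\bigl\|\widetilde{A}_{t+1} - \Uo R_t\bigr\|_F = \widehat{\eta}\lambda > 0 = \dist(\U_t,\Uo),
\end{align*}
so no contraction factor \(\alpha < 1\) can hold for the quantity you bound (the projection rescales \((1+\widehat{\eta}\lambda)\Uo\) back to \(\Uo\), but non-expansiveness has already thrown that information away). The same issue breaks the descent lemma you intend to import from \citep{bhojanapalli2016dropping}: its proof needs \(f(\text{next iterate}) \geq f(\Xo)\), which holds there because the next iterate is feasible (there is no constraint); for your unprojected \(\widetilde{\X}_{t+1} = \widetilde{A}_{t+1}\widetilde{A}_{t+1}^\dagger\), which may lie outside \(\C'\), this comparison with the constrained minimizer is unavailable. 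Likewise, the ``first-order optimality of \(\Uo\)'' you invoke is, in this constrained setting, a variational inequality rather than a vanishing gradient, so it cannot be used the way the unconstrained analysis uses it; without it, nothing cancels the \(\widehat{\eta}^2\|\gradf(\X_t)\U_t\|_F^2\) term in your expansion.

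The paper's proof is organized precisely to avoid this. Instead of non-expansiveness, it applies the obtuse-angle characterization of convex projections (Lemma \ref{lem:proj}) to the cross term, which yields the \emph{retained} negative term \(-\|\U_{t+1}-\widetilde{\U}_{t+1}\|_F^2\) in the recursion; this projection-error term is then carried into the descent lemma (Lemma \ref{lem:gradU,U-U_r_ bound}), where it is essential. Inside that lemma, smoothness is applied between \(\X_t\) and the \emph{projected, feasible} iterate \(\X_{t+1}\), so that constrained optimality \(f(\X_{t+1}) \geq f(\Xo)\) is legitimate; the resulting \(\xi_t\)-dependent residuals are exactly what the hypothesis \(\xi_t \gtrsim 0.78\) is used to control. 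Your parenthetical remark about tracking \(\U_{t+1} = \xi_t \widetilde{A}_{t+1}\) directly gestures at this route, but it is not carried out, and it cannot be grafted onto the non-expansiveness reduction, since in that reduction the projected iterate never appears.
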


When applied to the QST setting, we obtain the following variation of the above theorem:
\begin{theorem}[Local convergence rate for QST]
Let $\rho_\star$ be the quantum state of an $n$-qubit system, $y\in{\mathbb R}^m$ be the measurement vector of $m={\cal O}(r n^6 2^n)$ random $n$-qubit Pauli observables, and $\mathcal{M}$ be the corresponding sensing map, such that $y_i = \left(\mathcal{M}(\rhoo)\right)_i + e_i, ~\forall i = 1, \dots, m$.

Let $\U_t$ be the current estimate of \texttt{ProjFGD}.
Assume $\U_t$ satisfies $ \dist(\U_t, \Uo) \leq \gamma' \sigma_{r}(\Uo)$, for $\gamma' := c \cdot \tfrac{(1-\delta_{4r})}{(1+\delta_{4r})} \cdot \tfrac{\sigma_r(\X_\star)}{\sigma_1(\X_\star)}, ~c \leq \tfrac{1}{200}$, where $\delta_{4r}$ is the RIP constant. 
Then, the new estimate $\U_{t+1}$ satisfies
\begin{equation*}
\dist(\U_{t+1}, \Uo)^2 \leq \alpha \cdot \dist(\U_t, \Uo)^2, 
\end{equation*}
where $\alpha := 1 - \frac{(1 - \delta_{4r}) \cdot \sigma_r(\Xo)}{550((1+\delta_{4r}) \sigma_1(\Xo) + \|e\|_2)} < 1$. 
Further, $\U_{t+1}$ satisfies $ \dist(\U_{t+1}, \Uo) \leq \gamma' \sigma_{r}(\Uo)$.
\end{theorem}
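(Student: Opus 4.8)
The plan is to obtain this statement as a direct specialization of the general Theorem~\ref{thm:main3}: that result already establishes the linear contraction and the invariance of the neighborhood for any convex $f$ that is restricted $\mu$-strongly convex and restricted $L$-smooth over a faithful, convex, compact set $\C$ with entrywise-scaling projection. Hence the whole task reduces to reading off the correct values of $\mu$, $L$, and $\sigma_1(\gradf(\Xo))$ for the QST least-squares objective $f(\rho) = \tfrac{1}{2}\|y - \linmap(\rho)\|_2^2$, and to checking that the Frobenius-ball constraint $\C = \{A : \|A\|_F^2 \leq 1\}$ satisfies Assumptions~\ref{ass:00} and~\ref{prelim:def_02}.

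First I would translate the RIP into the two curvature constants. The Hessian of $f$ is $\linmap^\dagger\linmap$, and since every iterate $\U_t\U_t^\dagger$ and $\Xo$ are rank-$r$, the matrix differences that enter the strong-convexity and smoothness inequalities live in a low-rank subspace whose rank the analysis of \citep{bhojanapalli2016dropping} controls by $4r$ (products of rank-$2r$ differences together with the rotation built into $\dist$). On these directions Definition~\ref{def:RIP1} gives $(1-\delta_{4r})\|\cdot\|_F^2 \leq \|\linmap(\cdot)\|_2^2 \leq (1+\delta_{4r})\|\cdot\|_F^2$, which is precisely restricted strong convexity with $\mu = 1 - \delta_{4r}$ (Definition~\ref{prelim:def_00}) and restricted smoothness with $L = 1 + \delta_{4r}$ (Definition~\ref{prelim:def_01}). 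Substituting $\mu = 1-\delta_{4r}$ and $L = 1+\delta_{4r}$ into the expressions for $\gamma'$ and $\alpha$ of Theorem~\ref{thm:main3} already reproduces the stated $\gamma'$ and the $(1+\delta_{4r})\sigma_1(\Xo)$ term in the denominator of $\alpha$. For the constraint set, convexity and compactness of $\C$ are immediate, $\trace(\rho)\leq 1 \Leftrightarrow \|A\|_F^2\leq 1$ gives faithfulness ($\E \subseteq \C$), and the projection $\Pi_\C(B)=\xi(B)B$ is the required entrywise scaling; so Assumptions~\ref{ass:00} and~\ref{prelim:def_02} hold.

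The only genuinely new computation is the noise term $\sigma_1(\gradf(\Xo))$. Using $y = \linmap(\Xo) + e$ and $\gradf(\rho) = -2\linmap^\dagger(y-\linmap(\rho))$, the gradient at the optimum is $\gradf(\Xo) = -2\linmap^\dagger(e)$, so I would bound its spectral norm by testing against unit-Frobenius rank-one matrices: $\sigma_1(\linmap^\dagger(e)) = \sup_{\|u\|=\|v\|=1}\ip{e}{\linmap(u v^\dagger)} \leq \sqrt{1+\delta_1}\,\|e\|_2$ by Cauchy--Schwarz together with RIP on rank-one matrices. The main obstacle is therefore bookkeeping rather than analytical depth: one must fold the factor of two and the $\sqrt{1+\delta_1}$ into the universal constant so that the denominator of $\alpha$ reads exactly $(1+\delta_{4r})\sigma_1(\Xo) + \|e\|_2$, and confirm that the rank budget never exceeds $4r$ so that $\delta_{4r}$ (and not a larger RIP order) suffices. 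With these substitutions the contraction $\dist(\U_{t+1},\Uo)^2 \leq \alpha\,\dist(\U_t,\Uo)^2$ and the invariance $\dist(\U_{t+1},\Uo) \leq \gamma'\sigma_r(\Uo)$ transfer verbatim from Theorem~\ref{thm:main3}.
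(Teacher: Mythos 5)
Your proposal follows essentially the same route as the paper: the paper also obtains this statement as a direct specialization of the general Theorem~\ref{thm:main3}, substituting $\mu = 1-\delta_{4r}$ and $L = 1+\delta_{4r}$ (via the RIP-to-restricted-strong-convexity/smoothness correspondence it records in the appendix) and replacing $\sigma_1(\gradf(\Xo)) = \sigma_1(\mathcal{M}^\dagger(e))$ by $\|e\|_2$ up to constants absorbed into the $550$. Your explicit rank-one/Cauchy--Schwarz bound on $\sigma_1(\mathcal{M}^\dagger(e))$ and your check of the faithfulness and entrywise-scaling assumptions actually make explicit the bookkeeping that the paper leaves implicit, so the proposal is, if anything, more complete than the paper's own one-line specialization.
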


\subsection{Proof of Theorem \ref{thm:main3}}
For our analysis, we make use of the following lemma \cite[Chapter 3]{bubeck2015convex}, which characterizes the effect of projections onto convex sets w.r.t. to inner products, as well as provides 
a type-of triangle inequality for such projections; see also Figure \ref{fig:proj} for a simple illustration.
\begin{lemma}{\label{lem:proj}}
Let $U \in \C \subseteq \mathbb{C}^{d \times r}$ and $V \in \mathbb{C}^{d \times r}$ where $V \notin \C$. Then,
\begin{align}\label{eq:proj_00}
\left\langle \Pi_\C(V) - U, V - \Pi_\C(V) \right\rangle \geq 0.
\end{align} 
\end{lemma}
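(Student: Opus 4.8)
The plan is to invoke the first-order optimality (variational) characterization of the Euclidean projection onto a convex set. By definition the projection is $\Pi_\C(V) = \argmin_{A \in \C}\tfrac{1}{2}\|A - V\|_F^2$, so, writing $P := \Pi_\C(V)$, the matrix $P$ is the global minimizer over $\C$ of the strictly convex quadratic $A \mapsto \tfrac{1}{2}\|A - V\|_F^2$; closedness (indeed compactness) and convexity of $\C$ guarantee that this minimizer exists and is unique. The structural fact I would exploit is that, since $\C$ is convex and both $P, U \in \C$, the whole segment $(1-t)P + tU = P + t(U - P)$ stays in $\C$ for every $t \in [0,1]$.

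First I would introduce the scalar function $g(t) := \tfrac{1}{2}\|P + t(U - P) - V\|_F^2$ and expand it as a quadratic in $t$. Setting $X := P - V$ and $Y := U - P$ and using the Frobenius inner product, $g(t) = \tfrac{1}{2}\|X\|_F^2 + t\,\mathrm{Re}\ip{X}{Y} + \tfrac{t^2}{2}\|Y\|_F^2$, where the cross term carries a real part because $\ip{\rho}{\zeta} = \trace(\rho^\dagger\zeta)$ is Hermitian. Since $P$ minimizes the objective over all of $\C$ and the segment lies in $\C$ for $t \in [0,1]$, the one-variable function $g$ attains its minimum on $[0,1]$ at the endpoint $t = 0$, which forces $g'(0) \geq 0$; this is exactly $\mathrm{Re}\ip{P - V}{U - P} \geq 0$.

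Finally I would rearrange this into the stated form. A short computation shows that $\ip{P - U}{V - P}$ and $\ip{P - V}{U - P}$ have the same real part: each equals the negative real part of a trace, and $\trace\bigl((P-U)^\dagger(P-V)\bigr)$ and $\trace\bigl((P-V)^\dagger(P-U)\bigr)$ are complex conjugates of one another. Hence $\mathrm{Re}\ip{\Pi_\C(V) - U}{V - \Pi_\C(V)} \geq 0$, which is \eqref{eq:proj_00}. The only point beyond routine convex analysis is the complex-matrix setting: because the Frobenius inner product is Hermitian rather than symmetric, the projection inequality is naturally one between real parts, so I would read \eqref{eq:proj_00} with $\mathrm{Re}$ understood — it coincides with the literal statement in the real case, and in every application of the lemma the relevant quantities turn out to be real. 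I expect no genuine obstacle here: the result uses only convexity and closedness of $\C$, and none of the RIP, smoothness, or faithfulness hypotheses enter.
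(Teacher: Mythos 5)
Your proof is correct, and it is essentially the same argument as the paper's: the paper does not prove this lemma itself but cites it as the standard variational characterization of projection onto a closed convex set (Chapter 3 of Bubeck's monograph), whose proof is exactly your line-segment/first-order-optimality argument with $g(t) = \tfrac{1}{2}\|\Pi_\C(V) + t(U-\Pi_\C(V)) - V\|_F^2$ and $g'(0)\geq 0$. The one point you add that the paper glosses over is the complex-matrix subtlety: with $\ip{\rho}{\zeta}=\trace(\rho^\dagger \zeta)$ the inequality \eqref{eq:proj_00} is only meaningful for the real part, and your conjugation step showing that $\ip{\Pi_\C(V)-U}{V-\Pi_\C(V)}$ and $\ip{\Pi_\C(V)-V}{U-\Pi_\C(V)}$ have equal real parts handles this correctly; this reading is also consistent with how the lemma is used in the paper's proof of Theorem~\ref{thm:main3}, where the relevant inner products are real.
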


\begin{figure}[h]
	\centering
	\includegraphics[width=0.5\columnwidth]{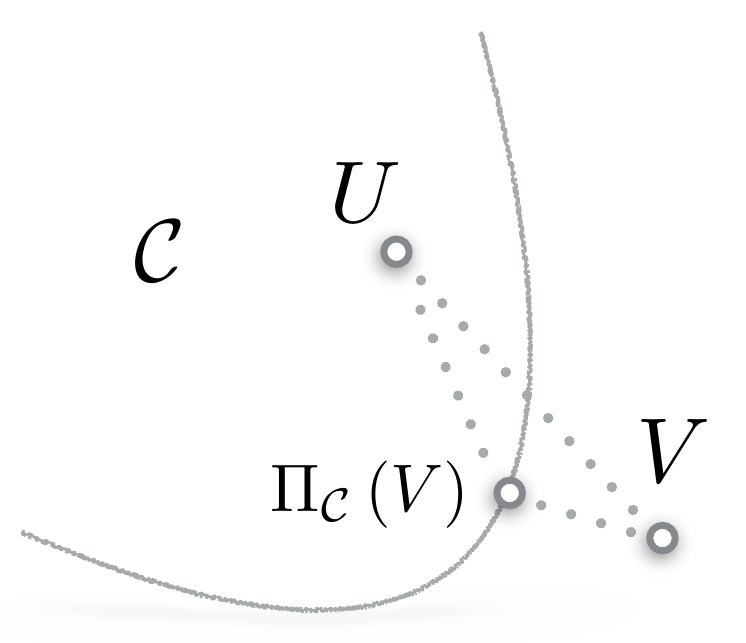}
	\caption{Illustration of Lemma \ref{lem:proj}}{\label{fig:proj}}
\end{figure}

We start with the following series of (in)equalities:
\begin{align*}
\dist\left(\U_{t+1}, ~\Uo\right)^2 \nonumber &= \min_{R \in \mathcal{U}} \|\U_{t+1} - \Uo R\|_F^2 \nonumber \\ 
											  &\stackrel{(i)}{\leq} \|\U_{t+1} - \Uo \Rus\|_F^2 \\
											  &\stackrel{(ii)}{=} \|\U_{t+1} - \Uw_{t+1} + \Uw_{t+1} - \Uo \Rus\|_F^2 \\
											  &= \|\U_{t+1} - \Uw_{t+1}\|_F^2 + \|\Uw_{t+1} - \Uo \Rus\|_F^2 \\
											  &+ 2\left\langle \U_{t+1} - \Uw_{t+1}, ~\Uw_{t+1} - \Uo \Rus \right\rangle,
\end{align*} where $(i)$ is due to the fact $\Rus := \argmin_{R \in \mathcal{U}} \|\U_{t} - \Uo R\|_F^2$, $(ii)$ is obtained by adding and subtracting $\Uw_{t+1}$.

Focusing on the second term of the right hand side, we substitute $\Uw_{t+1}$ to obtain:
\begin{align*}
\|\Uw_{t+1} - \Uo \Rus\|_F^2 \nonumber &= \|\U_t - \weta \gradf\left(\U_t \U_t^\dagger\right)\U_t - \Uo \Rus\|_F^2 \\
										 &= \|\U_t - \Uo \Rus\|_F^2 + \weta^2 \|\gradf \left(\U_t \U_t^\dagger\right) \U_t\|_F^2 \nonumber \\ 
										 &- 2\weta\left\langle \gradf \left(\U_t \U_t^\dagger\right) \U_t,~\U_t - \Uo \Rus \right\rangle
\end{align*} 
Then, our initial equation transforms into:
\begin{align*}
\dist\left(\U_{t+1}, ~\Uo\right)^2 \nonumber &\leq \|\U_{t+1} - \Uw_{t+1}\|_F^2 \\
&+ \dist\left(\U_t, ~\Uo\right)^2 + \weta^2 \|\gradf \left(\U_t \U_t^\dagger\right) \U_t\|_F^2 \nonumber \\ 
&- 2\weta \left\langle \gradf \left(\U_t \U_t^\dagger\right) \U_t,~\U_t - \Uo \Rus \right\rangle \nonumber \\ 
&+ 2\left\langle \U_{t+1} - \Uw_{t+1}, ~\Uw_{t+1} - \Uo \Rus \right\rangle
\end{align*}
Focusing further on the last term of the expression above, we obtain:
\begin{align*}
&\left\langle \U_{t+1} - \Uw_{t+1}, ~\Uw_{t+1} - \Uo \Rus \right\rangle \\ 
&= \left\langle \U_{t+1} - \Uw_{t+1}, ~\Uw_{t+1} - \U_{t+1} + \U_{t+1} - \Uo \Rus \right\rangle \\
&= \left\langle \U_{t+1} - \Uw_{t+1}, ~\Uw_{t+1} - \U_{t+1} \right\rangle \nonumber \\ 
&\quad \quad \quad \quad \quad \quad \quad+ \left\langle \U_{t+1} - \Uw_{t+1}, ~\U_{t+1} - \Uo \Rus \right\rangle
\end{align*} 
Observe that, in the special case where $\Uw_{t+1} \equiv \U_{t+1}$ for all $t$, \emph{i.e.}, the iterates are always within $\C$ before the projection step, the above equation equals to zero and the recursion is identical to that of \cite{bhojanapalli2016dropping}[Proof of Theorem 4.2]. 
Here, we are more interested in the case where $\Uw_{t+1} \not\equiv \U_{t+1}$ for some $t$---thus $\Uw_{t+1} \not\in \C$. 
By faithfulness (Assumption \ref{prelim:def_02}), observe that $\Uo \Rus \in \C$ and $\Xo = \Uo \Rus \left(\Uo \Rus\right)^\dagger = \Uo \Uo^\dagger$.
Moreover, $\U_{t+1} = \Pi_{\C}(\Uw_{t+1})$:
Then, according to Lemma \ref{lem:proj} and focusing on eq. \eqref{eq:proj_00}, for $U := \Uo \Rus$ and $V := \Uw_{t+1}$, the last term in the above equation satisfies: 
\begin{align*}
\left\langle \U_{t+1} - \Uw_{t+1}, ~\U_{t+1} - \Uo \Rus \right\rangle \leq 0,
\end{align*} 
and, thus, the expression above becomes:
\begin{align*}
\left\langle \U_{t+1} - \Uw_{t+1}, ~\Uw_{t+1} - \Uo \Rus \right\rangle \leq - \|\U_{t+1} - \Uw_{t+1}\|_F^2.
\end{align*}
Therefore, going back to the original recursive expression, we obtain:
\begin{align*}
\dist\left(\U_{t+1}, ~\Uo\right)^2 &\leq - \|\U_{t+1} - \Uw_{t+1}\|_F^2 \\
&+ \dist\left(\U_t, ~\Uo\right)^2 + \weta^2 \|\gradf \left(\U_t \U_t^\dagger\right) \U_t\|_F^2 \nonumber \\ 
&- 2\weta\left\langle \gradf \left(\U_t \U_t^\dagger\right) \U_t, \U_t - \Uo \Rus \right\rangle 
\end{align*} 

For the last term, we use the following descent lemma~\ref{lem:gradU,U-U_r_ bound}; its proof is provided in Section \ref{descent_lemma_proof}.
\begin{lemma}[Descent lemma]\label{lem:gradU,U-U_r_ bound}
Let $\widetilde{\U}_{t+1} = \U_t - \widehat{\eta} \gradf(\X_t)  \cdot \U_t$.
For $f$ restricted $L$-smooth and $\mu$-strongly convex, and under the same assumptions with Theorem \ref{thm:main3}, the following inequality holds true:
\begin{align*}
&2 \weta \big \langle \gradf(\U_t\U_t^\dagger) \cdot \U_t, ~\U_t - \Uo R_{\U_t}^\star \big \rangle +  \|\U_{t+1} - \widetilde{\U}_{t+1}\|_F^2 \nonumber \\ 
&\quad \quad \quad \geq \weta^2\|\gradf(\U_t\U_t^\dagger) \U_t\|_F^2 +  \tfrac{3\weta \mu}{10} \cdot \sigma_r(\Xo) \cdot \dist(\U_t, \Uo)^2.
\end{align*}
\end{lemma}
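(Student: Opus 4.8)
The plan is to adapt the unconstrained descent-lemma argument of \citep{bhojanapalli2016dropping} to the projected iteration, treating the extra term $\norm{\U_{t+1} - \widetilde{\U}_{t+1}}_F^2$ on the left-hand side as nonnegative slack. Writing $\Delta_t := \U_t - \Uo\Rus$, $\X_t = \U_t\U_t^\dagger$ and $\Xo = \Uo\Uo^\dagger$, the first step is to move the factored inner product into the matrix space. Since $\gradf(\X_t)$ is Hermitian,
\[
2\ip{\gradf(\X_t)\U_t}{\Delta_t} = \ip{\gradf(\X_t)}{\Delta_t\U_t^\dagger + \U_t\Delta_t^\dagger},
\]
and the algebraic identity $\Delta_t\U_t^\dagger + \U_t\Delta_t^\dagger = (\X_t - \Xo) + \Delta_t\Delta_t^\dagger$ rewrites the gradient term of the lemma as $\weta\,\ip{\gradf(\X_t)}{\X_t - \Xo} + \weta\,\ip{\gradf(\X_t)}{\Delta_t\Delta_t^\dagger}$. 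The first piece will carry the descent, the second is an indefinite perturbation.

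Next I would lower-bound $\ip{\gradf(\X_t)}{\X_t-\Xo}$ by combining the two restricted strong-convexity inequalities of Definition \ref{prelim:def_00} applied to the pair $(\X_t,\Xo)$ --- legitimate since $\X_t-\Xo$ has rank at most $2r$ --- which yields
\[
\ip{\gradf(\X_t)}{\X_t-\Xo} \geq \mu\norm{\X_t-\Xo}_F^2 + \ip{\gradf(\Xo)}{\X_t-\Xo}.
\]
I would then pass from the matrix distance to the factored distance through the standard bound $\norm{\X_t-\Xo}_F^2 \geq 2(\sqrt{2}-1)\,\sigma_r(\Xo)\,\dist(\U_t,\Uo)^2$, producing exactly the $\sigma_r(\Xo)\,\dist(\U_t,\Uo)^2$ factor of the target. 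In parallel I would control the gradient-step norm: because $\gradf(\X_t)\U_t = \gradf(\X_t)Q_{\U_t}Q_{\U_t}^\dagger\U_t$, one has $\norm{\gradf(\X_t)\U_t}_F \leq \sigma_1\!\left(Q_{\U_t}Q_{\U_t}^\dagger\gradf(\X_t)\right)\norm{\U_t}_F$, and the denominator of $\weta$ is engineered to contain precisely $L\sigma_1(\X_t)+\sigma_1(Q_{\U_t}Q_{\U_t}^\dagger\gradf(\X_t))$, so that $\weta^2\norm{\gradf(\X_t)\U_t}_F^2$ is a small constant fraction of $\weta\,\ip{\gradf(\X_t)}{\X_t-\Xo}$, leaving a definite surplus.

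The remaining work is to absorb the two indefinite contributions: the cross term $\weta\,\ip{\gradf(\X_t)}{\Delta_t\Delta_t^\dagger}$ and the optimum-gradient term $\ip{\gradf(\Xo)}{\X_t-\Xo}$, which is nonzero because the constrained optimum $\Xo$ need not be stationary for $f$ (in QST, $\gradf(\Xo)\propto\mathcal{M}^\dagger(e)$ carries the noise). Both are handled via the closeness hypothesis $\dist(\U_t,\Uo)\leq\gamma'\sigma_r(\Uo)$: it forces $\norm{\Delta_t}_F^2$ to be of lower order than $\sigma_r(\Xo)\,\dist(\U_t,\Uo)^2$, so $\abs{\ip{\gradf(\X_t)}{\Delta_t\Delta_t^\dagger}}\leq\sigma_1(\gradf(\X_t))\norm{\Delta_t}_F^2$ is a small perturbation, while the $\gradf(\Xo)$ term feeds into the final rate through the $\sigma_1(\gradf(\Xo))$ (resp.\ $\norm{e}_2$) in $\alpha$. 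Finally, the projection term $\norm{\U_{t+1}-\widetilde{\U}_{t+1}}_F^2 = (1-\xi_t)^2\norm{\widetilde{\U}_{t+1}}_F^2\geq 0$ is kept as slack, and the hypothesis $\xi_t\gtrsim 0.78$ guarantees the update is not contracted enough to break the companion basin-invariance bound $\dist(\U_{t+1},\Uo)\leq\gamma'\sigma_r(\Uo)$. The hard part will be the simultaneous bookkeeping of constants across these steps: the strong-convexity surplus, the step-size-controlled gradient term, the indefinite cross term, and the $\gradf(\Xo)$ perturbation must be balanced at the right scales so that the clean coefficient $\tfrac{3\weta\mu}{10}\sigma_r(\Xo)$ survives --- it is the mutual consistency of the numerical constants ($128$, $\tfrac{3}{10}$, $\gamma'\leq\tfrac{1}{200}$, $\xi_t\gtrsim 0.78$), rather than any single inequality, that is delicate.
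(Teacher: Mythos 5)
Your opening decomposition (splitting $2\weta\ip{\gradf(\X_t)\U_t}{\U_t-\Uo\Rus}$ into $\weta\ip{\gradf(\X_t)}{\X_t-\Xo}+\weta\ip{\gradf(\X_t)}{\Delta\Delta^\dagger}$) and your use of the bound $\norm{\X_t-\Xo}_F^2\geq 2(\sqrt{2}-1)\sigma_r(\Xo)\dist(\U_t,\Uo)^2$ both match the paper. But the middle of your argument has a genuine gap, rooted in one fact you do not confront: in this \emph{constrained} problem the optimum is generically not a stationary point of $f$, i.e.\ $\gradf(\Xo)\Uo\neq 0$ (in QST, $\gradf(\Xo)\propto-\mathcal{M}^\dagger(e)$). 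First, your two-sided strong-convexity step leaves the term $\weta\ip{\gradf(\Xo)}{\X_t-\Xo}$, which is \emph{first order} in $\dist(\U_t,\Uo)$; it cannot be absorbed by the quadratic terms as $\dist\to 0$, and the lemma's right-hand side has no additive $\gradf(\Xo)$-dependent slack to ``feed'' it into ($\sigma_1(\gradf(\Xo))$ enters the theorem only multiplicatively, through the step size). The paper never produces this term: it applies restricted smoothness to the pair $(\X_t,\X_{t+1})$ and uses feasibility of $\X_{t+1}$ together with $f(\X_{t+1})\geq f(\Xo)$, so optimality enters through function values, and the resulting term $\ip{\gradf(\X_t)}{\X_t-\X_{t+1}}$ is precisely what generates the positive gradient-norm terms. (Your term could at least be discarded via the variational inequality $\ip{\gradf(\Xo)}{\X_t-\Xo}\geq 0$, valid since $\X_t$ is feasible, but even that repair does not save what follows.)

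The fatal step is treating $\norm{\U_{t+1}-\widetilde{\U}_{t+1}}_F^2$ as discardable slack while claiming $\weta^2\norm{\gradf(\X_t)\U_t}_F^2$ is a small fraction of $\weta\ip{\gradf(\X_t)}{\X_t-\Xo}$. Take $\U_t\to\Uo$ (allowed by the hypotheses): the inner-product term and the strong-convexity surplus are $O(\dist^2)\to 0$, while the right-hand side tends to $\weta^2\norm{\gradf(\Xo)\Uo}_F^2>0$. So the inequality with the projection residual dropped is simply \emph{false} near the optimum; that residual is exactly what balances the gradient-norm term (at the fixed point one has $\norm{\U_{t+1}-\widetilde{\U}_{t+1}}_F=\weta\norm{\gradf(\Xo)\Uo}_F$, with equality in the lemma). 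This is why the paper expands
\begin{align*}
\norm{\U_{t+1}-\widetilde{\U}_{t+1}}_F^2 &= (1-\xi)^2\norm{\U_t}_F^2+(1-\xi)^2\weta^2\norm{\gradf(\X_t)\U_t}_F^2 \\
&\quad - 2(1-\xi)^2\weta\ip{\gradf(\X_t)\U_t}{\U_t}
\end{align*}
and combines it with the $(1-\xi^2)\weta\ip{\gradf(\X_t)\U_t}{\U_t}$ term coming from the explicit projection characterization of $\X_{t+1}$ into a completed square (eq.~\eqref{eq:figure1}); that is the actual place where $\xi\gtrsim 0.78$ is used, not basin invariance. For the same reason your crude bound $\abs{\ip{\gradf(\X_t)}{\Delta\Delta^\dagger}}\leq\sigma_1(\gradf(\X_t))\norm{\Delta}_F^2$ is insufficient: absorbing it into $\tfrac{\mu\sigma_r(\Xo)}{10}\dist^2$ would require $\sigma_1(\gradf(\X_t))\lesssim\mu\sigma_r(\Xo)$, which noise rules out; the paper's Lemma~\ref{lem:DD_bound_sc} avoids this by a case analysis that trades part of the term against $\tfrac{\weta}{5}\norm{\gradf(\X_t)\U_t}_F^2$.
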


Using the above lemma in our expression, we get:
\begin{align*}
\dist\left(\U_{t+1}, ~\Uo\right)^2 &\leq \left(1 - \tfrac{3\weta \mu}{10} \cdot \sigma_r(\Xo)\right) \cdot \dist(\U_t, \Uo)^2.
\end{align*}

The expression for $\alpha$ is obtained by observing $\weta \geq \tfrac{5}{6} \eta$ and $\tfrac{10}{11}\eta_\star \leq \eta \leq \tfrac{11}{10}\eta_\star$, from Lemma 20 in \cite{bhojanapalli2016dropping}. 
Then, for $\eta_\star \leq \frac{C}{L \sigma_1(\X_\star) + \sigma_1(\gradf(X_\star) )}$ and $C = \sfrac{1}{128}$, we have:
\begin{align*}
1 - \frac{3\weta \mu}{10} \cdot \sigma_r(\Xo) &\leq 1 - \frac{3 \cdot \tfrac{10}{11} \cdot \tfrac{5}{6} \eta_\star \mu}{10} \cdot \sigma_r(\Xo) \nonumber \\ 
															&= 1 - \frac{15}{66} \eta_\star \mu \cdot \sigma_r(\Xo) \nonumber \\ 
															&= 1 - \frac{15}{66} \frac{\mu \cdot \sigma_r(\Xo)}{128(L \sigma_1(\Xo) + \sigma_1(\gradf(\Xo)))} \nonumber \\ 
															&\leq 1 - \frac{\mu \cdot \sigma_r(\Xo)}{550(L \sigma_1(\Xo) + \sigma_1(\gradf(\Xo)))} =: \alpha
\end{align*} where $\alpha < 1$.

Concluding the proof, the condition $\dist(\U_{t+1}, \Uo)^2 \leq \gamma' \sigma_{r}(\Uo)$ is naturally satisfied, since $\alpha < 1$.

\subsection{Proof of Lemma \ref{lem:gradU,U-U_r_ bound}} {\label{descent_lemma_proof}}
Recall $\Uw_{t+1} = \U_t - \weta\gradf(\X_t)\U_t$ and define $\Delta := \U_t - \Uo R_{\U_t}^\star$. 
Before presenting the proof, we need the following lemma that bounds one of the error terms arising in the proof of Lemma~\ref{lem:gradU,U-U_r_ bound}. 
This is a variation of Lemma 6.3 in \cite{bhojanapalli2016dropping}. 
The proof is presented in Section \ref{sec:DD_proof}.

\begin{lemma}\label{lem:DD_bound_sc}
Let $f$ be $L$-smooth and $\mu$-restricted strongly convex. 
Then, under the assumptions of Theorem \ref{thm:main3} and assuming step size $\widehat{\eta} = \tfrac{1}{128(L \sigma_1(\X_t) + \sigma_1(\gradf(\X_t)Q_{\U_t} Q_{\U_t}^\dagger))}$,
the following bound holds true:
\begin{small}
\begin{align}{\label{eq:appe_combine1}}
\ip{\gradf(\X_t) }{ \Delta \Delta^\dagger} \geq - \tfrac{\weta}{5} \|\gradf(\X_t) \U_t \|_F^2 - \tfrac{\mu\sigma_{r}(\Xo)}{10} \cdot \dist(\U_t, \Uo)^2.
\end{align} 
\end{small}
\end{lemma}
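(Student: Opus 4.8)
The plan is to lower-bound the indefinite quadratic form $\ip{\gradf(\X_t)}{\Delta\Delta^\dagger} = \trace(\Delta^\dagger \gradf(\X_t)\,\Delta)$, exploiting that $\Delta\Delta^\dagger \succeq 0$ while $\gradf(\X_t)$ is merely Hermitian. The first move is to recenter the gradient at the optimum, writing $\gradf(\X_t) = \big(\gradf(\X_t) - \gradf(\Xo)\big) + \gradf(\Xo)$, and to bound the two resulting inner products by different mechanisms: the first through restricted smoothness (Definition \ref{prelim:def_01}), the second through the first-order optimality of $\Uo\Rus$ and the explicit step size.

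For the smoothness part I would estimate $\ip{\gradf(\X_t)-\gradf(\Xo)}{\Delta\Delta^\dagger} \geq -\norm{\gradf(\X_t)-\gradf(\Xo)}_F\,\norm{\Delta}_F^2 \geq -L\,\norm{\X_t-\Xo}_F\,\norm{\Delta}_F^2$. Using the identity $\X_t-\Xo = \Delta(\Uo\Rus)^\dagger + \U_t\Delta^\dagger$ together with the closeness hypothesis, one gets $\norm{\X_t-\Xo}_F \leq c_1\,\sigma_1(\Xo)^{1/2}\dist(\U_t,\Uo)$, so this contribution is at most $c_1 L\,\sigma_1(\Xo)^{1/2}\dist(\U_t,\Uo)^3$ in magnitude. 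Invoking $\dist(\U_t,\Uo)\leq \gamma'\sigma_r(\Uo) = \gamma'\sigma_r(\Xo)^{1/2}$ one last time, with $\gamma' = c\cdot\tfrac{\mu}{L}\cdot\tfrac{\sigma_r(\Xo)}{\sigma_1(\Xo)}$, collapses the whole term to $-\tfrac{\mu\,\sigma_r(\Xo)}{10}\dist(\U_t,\Uo)^2$: the condition-number factor built into $\gamma'$ is exactly what cancels the $L\sigma_1(\Xo)/(\mu\sigma_r(\Xo))$ overhead.

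The more delicate part is $\ip{\gradf(\Xo)}{\Delta\Delta^\dagger}$, which does not vanish once noise is present and whose indefiniteness cannot be charged to a higher-order term. Here I would use the first-order optimality condition of $\Uo\Rus$ for \eqref{eq:appe_01} (its projected gradient lies in the normal cone of $\C$, i.e.\ $\gradf(\Xo)\,\Uo\Rus$ is a nonpositive multiple of $\Uo\Rus$) to rewrite the action of $\gradf(\Xo)$ on $\Delta = \U_t - \Uo\Rus$ in terms of $\gradf(\Xo)\U_t$, and then convert $\gradf(\Xo)\U_t$ back to the current factored gradient $\gradf(\X_t)\U_t$ via smoothness. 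A Cauchy--Schwarz estimate on the resulting cross terms, calibrated by Young's inequality against the step size $\weta \leq \tfrac{1}{128\,\sigma_1(\gradf(\X_t)Q_{\U_t}Q_{\U_t}^\dagger)}$, then produces precisely the advertised $-\tfrac{\weta}{5}\norm{\gradf(\X_t)\U_t}_F^2$, while the leftover quadratic in $\norm{\Delta}_F$ is folded into the $\dist^2$ budget above.

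I expect this last balancing to be the main obstacle. The naive bound $\abs{\ip{\gradf(\X_t)}{\Delta\Delta^\dagger}} \leq \sigma_1(\gradf(\X_t))\norm{\Delta}_F^2$ loses a full condition-number factor and cannot be absorbed by the strong-convexity gain; the accounting only closes if the single $\mathcal{O}(L\sigma_1(\Xo))$-scale coefficient multiplying $\norm{\Delta}_F^2$ is the one that the small step size $\weta$ turns into $\tfrac{\weta}{5}\norm{\gradf(\X_t)\U_t}_F^2$, and every residual coefficient of $\norm{\Delta}_F^2$ is kept below $\tfrac{\mu\sigma_r(\Xo)}{10}$. Making this work rests on the geometric inequality $\norm{\X_t-\Xo}_F^2 \geq 2(\sqrt{2}-1)\,\sigma_r(\Xo)\dist(\U_t,\Uo)^2$, valid in the neighborhood $\dist(\U_t,\Uo)\leq\gamma'\sigma_r(\Uo)$, and on carefully tracking the $\sigma_1$-versus-$\sigma_r$ dependence hidden in $\gamma'$; this is where adapting Lemma 6.3 of \citep{bhojanapalli2016dropping} to the constrained, Hermitian setting needs the most care.
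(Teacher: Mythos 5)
Your proposal takes a genuinely different route from the paper, and the difference opens a real gap at exactly the step you flag as the main obstacle. The paper's proof of Lemma \ref{lem:DD_bound_sc} never invokes optimality of $\Xo$ at all. It exploits the purely geometric fact that the column space of $\Delta$ lies in the span of the columns of $\U_t$ and $\Uo$, so that $\ip{\gradf(\X_t)}{\Delta\Delta^\dagger} = \ip{Q_{\Delta}Q_{\Delta}^\dagger\gradf(\X_t)}{\Delta\Delta^\dagger} \geq -\left(\sigma_1\left(Q_{\U_t}Q_{\U_t}^\dagger\gradf(\X_t)\right)+\sigma_1\left(Q_{\Uo}Q_{\Uo}^\dagger\gradf(\X_t)\right)\right)\dist(\U_t,\Uo)^2$ by Von Neumann's trace inequality \cite{mirsky1975trace}; the closeness hypothesis $\dist(\U_t,\Uo)\leq\gamma'\sigma_r(\Uo)$ is then used to bound the $\Uo$-projected gradient by the $\U_t$-projected one (up to a factor $\tau(\Uo)$), and a final case analysis on whether $\sigma_1\left(Q_{\U_t}Q_{\U_t}^\dagger\gradf(\X_t)\right)$ is above or below $\approx \mu\sigma_r(\X_t)/\tau(\Uo)$ --- combined with $\sigma_r(\X_t)\,\sigma_1\left(Q_{\U_t}Q_{\U_t}^\dagger\gradf(\X_t)\right)^2\leq\|\gradf(\X_t)\U_t\|_F^2$ and the step-size definition --- splits the bound into the two advertised terms. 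Crucially, every estimate preserves the \emph{quadratic} $\|\Delta\|_F^2$ structure.

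Your recentering $\gradf(\X_t)=\left(\gradf(\X_t)-\gradf(\Xo)\right)+\gradf(\Xo)$ destroys that structure in the second piece. (The smoothness piece of your plan is fine: the cubic bound $L\sigma_1(\Xo)^{1/2}\dist(\U_t,\Uo)^3$ together with $\gamma'\propto\tfrac{\mu}{L}\tfrac{\sigma_r(\Xo)}{\sigma_1(\Xo)}$ does collapse into a small multiple of $\mu\sigma_r(\Xo)\dist(\U_t,\Uo)^2$.) But for $\ip{\gradf(\Xo)}{\Delta\Delta^\dagger}$, the factored first-order condition you invoke --- $\gradf(\Xo)\Uo\Rus$ is a nonpositive multiple of $\Uo\Rus$ --- constrains $\gradf(\Xo)$ \emph{only on the column space of} $\Uo$; it says nothing about its action on the orthogonal complement, where $\Delta$ has a component (with $P=Q_{\Uo}Q_{\Uo}^\dagger$, one has $(I-P)\Delta=(I-P)\U_t\neq 0$ in general). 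Substituting the condition into $\trace(\Delta^\dagger\gradf(\Xo)\Delta)$ therefore leaves either an uncontrolled quadratic term $\ip{\gradf(\Xo)}{\left((I-P)\Delta\right)\left((I-P)\Delta\right)^\dagger}$, or else cross terms such as $\trace(\Delta^\dagger\gradf(\Xo)\U_t)$ that are \emph{linear} in $\Delta$. Your plan handles the latter by Cauchy--Schwarz/Young against the step size, and this is exactly where the accounting cannot close: choosing the Young weight $\epsilon\approx\weta$ (forced, so that the gradient part fits under $\tfrac{\weta}{5}\|\gradf(\X_t)\U_t\|_F^2$) produces a term $\tfrac{1}{\weta}\dist(\U_t,\Uo)^2\approx 128\left(L\sigma_1(\X_t)+\sigma_1\left(\gradf(\X_t)Q_{\U_t}Q_{\U_t}^\dagger\right)\right)\dist(\U_t,\Uo)^2$, which exceeds the budget $\tfrac{\mu\sigma_r(\Xo)}{10}\dist(\U_t,\Uo)^2$ by precisely the condition-number factor you say must not be lost, and no spare power of $\dist(\U_t,\Uo)$ remains to absorb it (both sides are already quadratic). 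Two further mismatches: the inequality of Lemma \ref{lem:tu}, $\|\X_t-\Xo\|_F^2\geq 2(\sqrt{2}-1)\sigma_r(\Xo)\dist(\U_t,\Uo)^2$, which you lean on, belongs to the proof of Lemma \ref{lem:gradU,U-U_r_ bound}, not to this lemma; and in the noisy QST instantiation (Theorem \ref{thm:main2}) $\Xo=\rhoo$ is the true state rather than a minimizer of $f$, so optimality-based structure for $\gradf(\Xo)$ is not available there --- the optimality-free argument of the paper is what allows the lemma to be applied in that setting.
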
 

Now we are ready to present the proof of Lemma~\ref{lem:gradU,U-U_r_ bound}.
\begin{proof}[Proof of Lemma \ref{lem:gradU,U-U_r_ bound}]
First we rewrite the inner product as shown below.
\begin{align}
&\ip{\gradf(\X_t)\U_t}{\U_t - \Uo R_{\U_t}^\star} \nonumber \\ 
&= \ip{\gradf(\X_t) }{\X_t - \Uo R_{\U_t}^\star \U_t^\dagger} \nonumber \\
&=\tfrac{1}{2}\ip{\gradf(\X_t) }{\X_t -\Xo}  + \ip{\gradf(\X_t) }{\tfrac{1}{2}(\X_t + \Xo) - \Uo R_{\U_t}^\star \U_t^\dagger} \nonumber \\
&=\tfrac{1}{2}\ip{\gradf(\X_t) }{\X_t -\Xo}  + \tfrac{1}{2} \ip{\gradf(\X_t) }{\Delta \Delta^\dagger}, \label{proofsr1:eq_09}
\end{align} 
which follows by adding and subtracting $\tfrac{1}{2}\Xo$.

Let us focus on bounding the first term on the right hand side of \eqref{proofsr1:eq_09}. 
Consider points $\X_t = \U_t\U_t^\dagger$ and $\X_{t+1} = \U_{t+1} \U_{t+1}^\dagger$; by assumption, both $\X_t$ and $\X_{t+1}$ are feasible points in \eqref{eq:appe_01}.
By smoothness of $f$, we get:
\begin{align}
f(\X_t) &\geq f(\X_{t+1}) -\ip{\gradf(\X_t)}{\X_{t+1} -\X_t} - \tfrac{L}{2} \norm{\X_{t+1} -\X_t}_F^2 \nonumber \\ 
&\stackrel{(i)}{\geq} f(\Xo) -\ip{\gradf(\X_t)}{\X_{t+1} -\X_t} - \tfrac{L}{2} \norm{\X_{t+1} -\X_t}_F^2 , \label{eq:gradX,X-X_r_00}
 \end{align}
where $(i)$ follows from optimality of $\Xo$ and since $\X_{t+1}$ is a feasible point $(\X_{t+1} \succeq 0, ~\Pi_{\C'}(\X_{t+1}) = \X_{t+1})$ for problem~\eqref{eq:appe_00}.

Moreover, by the restricted strong convexity of $f$, we get, 
\begin{align} 
f(\Xo) \geq f(\X_t) +\ip{\gradf(\X_t)}{\Xo-\X_t} +\tfrac{\mu}{2}\norm{\Xo -\X_t}_F^2 .\label{eq:gradX,X-X_r_02}
\end{align}
Combining equations~\eqref{eq:gradX,X-X_r_00}, and~\eqref{eq:gradX,X-X_r_02}, we obtain: 
\begin{align} 
\ip{\gradf(\X_t)}{\X_t-\Xo} &\geq  \ip{\gradf(\X_t)}{\X_t -\X_{t+1}} \nonumber \\ &-\tfrac{L}{2} \norm{\X_{t+1} -\X_t}_F^2+\tfrac{\mu}{2}\norm{\Xo -\X_t}_F^2  \label{eq:gradX,X-X_r_03} 
\end{align}
By the nature of the projection $\Pi_\C(\cdot)$ step, it is easy to verify that $$\X_{t+1} =  \xi^2 \cdot \left(\X_t - \weta \gradf(\X_t) \X_t \Lambda_t - \weta \Lambda_t^\dagger \X_t^\dagger \gradf(\X_t)^\dagger\right),$$ where $\Lambda_t = I - \tfrac{\weta}{2} Q_{\U_t} Q_{\U_t}^\dagger \gradf(X_t) \in \mathbb{C}^{d \times d}$ and $Q_{\U_t} Q_{\U_t}^\dagger$ denoting the projection onto the column space of $\U_t$.
Notice that, for step size $\weta$, we have 
\begin{align*}
\Lambda_t \succ 0, \quad  \sigma_1\left(\Lambda_t\right) \leq  1+ \sfrac{1}{256} \quad \text{and} \quad \sigma_{d}(\Lambda_t) \geq  1- \sfrac{1}{256}.
\end{align*} 

Using the above $\X_{t+1}$ characterization in~\eqref{eq:gradX,X-X_r_03}, we obtain:
\begin{footnotesize}
\begin{align}\label{eq:appe_000}
& \ip{\gradf(\X_t)}{\X_t-\Xo} - \tfrac{\mu}{2}\norm{\Xo -\X_t}_F^2 + \tfrac{L}{2} \norm{\X_t -\X_{t+1}}_F^2 \nonumber \\
 								 &\stackrel{(i)}{\geq} \ip{\gradf(\X_t)}{\left(1 - \xi^2\right)\X_t} + 
 								 2\weta \cdot \xi^2 \cdot \ip{\gradf(\X_t)}{\gradf(\X_t)X_t \Lambda_t} \nonumber \\
 								 &\stackrel{(ii)}{\geq}\left(1 - \xi^2\right) \cdot \ip{\gradf(\X_t)\U_t}{\U_t} + 2\weta \cdot \xi^2 \cdot \trace( \gradf(\X_t) \gradf(\X_t)\X_t) \cdot \sigma_{d}(\Lambda_t) \nonumber \\
 								 &\geq \left(1 - \xi^2\right) \cdot \ip{\gradf(\X_t)\U_t}{\U_t} + \tfrac{255\cdot\weta \cdot \xi^2}{128} \|\gradf(\X_t) \U_t\|_F^2,
\end{align} 
\end{footnotesize}
where: $(i)$ follows from symmetry of $\gradf(\X_t)$ and $\X_t$ and, $(ii)$ follows from the sequence equalities and inequalites:
\begin{footnotesize}
\begin{align*}
&\trace(\gradf(\X_t)\gradf(\X_t)\X_t\Lambda_t) \nonumber \\ &= \trace(\gradf(\X_t)\gradf(\X_t)\U_t\U_t^\dagger) - \tfrac{\weta}{2}\trace(\gradf(\X_t)\gradf(\X_t)\U_t\U_t^\dagger \gradf(\X_t)) \nonumber \\
&\geq \left(1 -\tfrac{\weta}{2} \|Q_{\U_t} Q_{\U_t}^\dagger\gradf(\X_t)\|_2\right) \|\gradf(\X_t) \U_t\|_F^2 \nonumber \\ 
&\geq \left(1- \sfrac{1}{256} \right) \|\gradf(\X_t) \U_t\|_F^2.
\end{align*} 
\end{footnotesize}

 Combining the above in the expression we want to lower bound: $2 \weta \left \langle \gradf(\X_t) \cdot \U_t, \U_t - \Uo R_{\U_t}^\star \right \rangle + \|\U_{t+1} - \widetilde{\U}_{t+1}\|_F^2$, we obtain:
\begin{align}
&2 \weta \left \langle \gradf(\X_t) \cdot \U_t, \U_t - \Uo R_{\U_t}^\star \right \rangle + \|\U_{t+1} - \widetilde{\U}_{t+1}\|_F^2 \nonumber \\ 
&=\weta \ip{\gradf(\X_t) }{\X_t -\Xo}  + \weta \ip{\gradf(\X_t) }{\Delta \Delta^\dagger} + \|\U_{t+1} - \widetilde{\U}_{t+1}\|_F^2 \nonumber \\ 
&\geq \left(1 - \xi^2\right) \cdot \weta \ip{\gradf(\X_t)\U_t}{\U_t} + \tfrac{255 \cdot \weta^2 \cdot \xi^2}{128} \|\gradf(\X_t) \U_t\|_F^2 \nonumber \\
&\quad \quad \quad \quad \quad + \tfrac{\weta \mu}{2}\norm{\Xo -\X_t}_F^2 - \tfrac{\weta L}{2} \norm{\X_t -\X_{t+1}}_F^2 \nonumber \\
&\quad \quad \quad \quad \quad - \tfrac{\weta^2}{5} \|\nabla f(\X_t) \U_t\|_F^2 - \tfrac{\weta \mu \sigma_r(\Xo)}{10} \cdot \dist(\U_t, \Uo)^2 \nonumber \\
&\quad \quad \quad \quad \quad + \|\U_{t+1} - \widetilde{\U}_{t+1}\|_F^2 \label{eq:new_p_00}
\end{align} 
 
For the last term in the above expression and given $\U_{t+1} = \Pi_\C\left(\Uw_{t+1}\right) = \xi \cdot \Uw_{t+1}$ for some $\xi \in (0, 1)$, we further observe:
\begin{footnotesize}
\begin{align*}
\|\U_{t+1} - \widetilde{\U}_{t+1}\|_F^2 &= \|\xi \cdot \widetilde{\U}_{t+1} - \widetilde{\U}_{t+1}\|_F^2 \nonumber \\
&=\left(1 - \xi\right)^2 \cdot \|\U_t\|_F^2 + \left(1 - \xi\right)^2 \weta^2 \cdot \|\gradf(\X_t) \U_t\|_F^2  \nonumber \\ 
&- 2\left(1 - \xi\right)^2 \cdot \weta \cdot \ip{\gradf(\X_t)\U_t}{\U_t}
\end{align*} 
\end{footnotesize}

Combining the above equality with the first term on the right hand side in \eqref{eq:new_p_00}, we obtain the expression in \eqref{eq:figure1}.
\begin{figure*}
\begin{align}
\left(1 - \xi^2\right) \cdot \weta \ip{\gradf(\X_t)\U_t}{\U_t} \nonumber &+ \left(1 - \xi\right)^2 \cdot \|\U_t\|_F^2 + \left(1 - \xi\right)^2 \weta^2 \cdot \|\gradf(\X_t) \U_t\|_F^2  \\&- 2\left(1 - \xi\right)^2 \cdot \weta \cdot \ip{\gradf(\X_t)\U_t}{\U_t} = \nonumber \\
\left[\left(1 - \xi^2\right) - 2\left(1 - \xi\right)^2 \right] \cdot \weta \ip{\gradf(\X_t)\U_t}{\U_t} &+ \left(1 - \xi\right)^2 \cdot \|\U_t\|_F^2 + \left(1 - \xi\right)^2 \weta^2 \cdot \|\gradf(\X_t) \U_t\|_F^2 \nonumber = \\
\left(3\xi - 1\right) \left(1 - \xi\right) \cdot \weta \ip{\gradf(\X_t)\U_t}{\U_t} &+ \left(1 - \xi\right)^2 \cdot \|\U_t\|_F^2 + \left(1 - \xi\right)^2 \weta^2 \cdot \|\gradf(\X_t) \U_t\|_F^2 \nonumber = \\
\left\| \tfrac{3\xi - 1}{2} \cdot \U_t + (1 - \xi) \cdot \weta \gradf(\X_t) \cdot \U_t\right\|_F^2 &+ \left(\left(1 - \xi\right)^2 - \tfrac{(3\xi - 1)^2}{4}\right) \|\U_t\|_F^2. \label{eq:figure1} 
\end{align}
\hrulefill
\end{figure*}
Focusing on the first term, let $\Theta_t := I + \tfrac{2(1-\xi)}{3\xi - 1} \cdot \weta \cdot \gradf(\X_t)Q_{\U_t} Q_{\U_t}^\dagger$; then, $\sigma_d(\Theta_t) \geq 1 - \tfrac{2(1-\xi)}{3\xi - 1} \cdot \tfrac{1}{128} $, by the definition of $\weta$ and the fact that $\weta \leq \tfrac{1}{128 \sigma_1\left(\gradf(\X_t)Q_{\U_t} Q_{\U_t}^\dagger\right)}$. 
Then:
\begin{align*}
&\left\| \tfrac{3\xi - 1}{2} \cdot \U_t + (1 - \xi) \cdot \weta \gradf(\X_t) \cdot \U_t\right\|_F^2 = \left\| \tfrac{3\xi - 1}{2} \Theta_t \cdot \U_t \right\|_F^2 \nonumber \\ 
&\geq \tfrac{(3\xi - 1)^2}{4} \cdot \|\U_t\|_F^2 \cdot \sigma_d(\Theta_t)^2 \nonumber \\
&\geq \tfrac{(3\xi - 1)^2}{4} \cdot \left(1 - \tfrac{2(1-\xi)}{3\xi - 1} \cdot \tfrac{1}{128} \right)^2 \cdot \|\U_t\|_F^2 \nonumber
\end{align*}
Combining the above, we obtain the following bound:
\begin{align*}
&\left(1 - \xi^2\right) \cdot \weta \ip{\gradf(\X_t)\U_t}{\U_t} + \|\U_{t+1} - \widetilde{\U}_{t+1}\|_F^2 \nonumber \\ &\geq \left((1 - \xi)^2 - \tfrac{(3\xi - 1)^2}{4} \cdot \left(1 - \left(1 - \tfrac{2(1-\xi)}{3\xi - 1} \cdot \tfrac{1}{128} \right)^2\right)\right) \cdot \|\U_t\|_F^2
\end{align*}
The above transform \eqref{eq:new_p_00} as follows:
\begin{footnotesize}
\begin{align}
2 \weta \big \langle &\gradf(\X_t) \cdot \U_t, ~\U_t - \Uo R_{\U_t}^\star \big \rangle +  \|\U_{t+1} - \widetilde{\U}_{t+1}\|_F^2 \nonumber \\ 
&\geq  \left(\tfrac{255 \cdot \xi^2}{128} - \tfrac{1}{5}\right)\cdot \weta^2\|\gradf(X_t) U_t\|_F^2 + \tfrac{\weta \mu}{2}\norm{\Xo -\X_t}_F^2 \nonumber \\ 
&- \tfrac{\weta \mu \sigma_r(\Xo)}{10} \cdot \dist(U_t, \Uo)^2 \nonumber \\
&+\left((1 - \xi)^2 - \tfrac{(3\xi - 1)^2}{4} \cdot \left(1 - \left(1 - \tfrac{2(1-\xi)}{3\xi - 1} \cdot \tfrac{1}{128} \right)^2\right)\right) \cdot \|\U_t\|_F^2 \nonumber \\ 
&- \tfrac{\weta L}{2} \norm{\X_t -\X_{t+1}}_F^2 \label{eq:new_p_01}
\end{align} 
\end{footnotesize}
 
Let us focus on the term $\tfrac{\weta L}{2} \norm{\X_t - \X_{t+1}}_F^2$; this can be bounded as follows:
\begin{footnotesize}
\begin{align*}
&\tfrac{\weta L}{2} \norm{\X_t - \X_{t+1}}_F^2 = \tfrac{\weta L}{2} \|\U_t\U_t^\dagger - \U_{t+1}\U_{t+1}^\dagger\|_F^2 \\ 
&\quad \quad \quad = \tfrac{\weta L}{2} \|\U_t\U_t^\dagger - \U_t\U_{t+1}^\dagger + \U_t\U_{t+1}^\dagger - \U_{t+1}\U_{t+1}^\dagger\|_F^2 \\
&\quad \quad \quad = \tfrac{\weta L}{2} \|\U_t\left(\U_t - \U_{t+1}\right)^\dagger + \left(\U_t - \U_{t+1}\right)\U_{t+1}^\dagger\|_F^2 \nonumber \\ 
&\quad \quad \quad \stackrel{(i)}{\leq} \weta L \cdot \left(\|\U_t\left(\U_t - \U_{t+1}\right)^\dagger\|_F^2 + \|\left(\U_t - \U_{t+1}\right)\U_{t+1}^\dagger\|_F^2\right) \\
&\quad \quad \quad \stackrel{(ii)}{\leq} \weta L \left(\sigma_1\left(\U_{t+1}\right)^2 + \sigma_1\left(\U_t\right)^2\right) \cdot \|\U_{t+1} - \U_t\|_F^2.
\end{align*} 
\end{footnotesize}
where $(i)$ is due to the identity $\|A + B\|_F^2 \leq 2\|A\|_F^2 + 2\|B\|_F^2$ and $(ii)$ is due to the Cauchy-Schwarz inequality. 
By definition of $\U_{t+1}$, we observe that:
\begin{small}
\begin{align*}
\sigma_1\left(\U_{t+1}\right)^2 &= \sigma_1\left(\xi \cdot \left(\U_t - \weta\nabla f(\X_t)\U_t\right)\right)^2 \\
&\stackrel{(i)}{\leq} \xi^2 \cdot \sigma_1\left(\U_t\right)^2 \cdot \sigma_1\left(I - \weta  \nabla f(\X_t)Q_{\U_t} Q_{\U_t}^\dagger\right)^2 \\
&\stackrel{(ii)}{\leq} \left(1 + \tfrac{1}{128}\right)^2 \cdot \sigma_1\left(\U_t\right)^2.
\end{align*}
\end{small} 
where $(i)$ is due to Cauchy-Schwarz and $(ii)$ is obtained by substituting $\weta \leq \tfrac{1}{128 \sigma_1\left(\nabla f(\X_t)Q_{\U_t} Q_{\U_t}^\dagger\right)}$ and since $\xi \in (0, 1)$. 
Thus, $\tfrac{\weta L}{2} \norm{\X_t - \X_{t+1}}_F^2$ can be further bounded as follows:
\begin{footnotesize}
\begin{align*}
\tfrac{\weta L}{2} \norm{\X_t - \X_{t+1}}_F^2 &\leq \weta L \left( \left(1 + \tfrac{1}{128}\right)^2 + 1\right) \cdot \sigma_1\left(\U_t\right)^2 \cdot \|\U_{t+1} - \U_t\|_F^2 \\ 
&=  \weta L \left( \left(1 + \tfrac{1}{128}\right)^2 + 1\right) \cdot \sigma_1\left(\X_t\right) \cdot \|\U_{t+1} - \U_t\|_F^2 \\ 
&\leq \tfrac{ \left(1 + \tfrac{1}{128}\right)^2 + 1}{128 } \cdot \|\U_{t+1} - \U_t\|_F^2 \\
&= \tfrac{ \left(1 + \tfrac{1}{128}\right)^2 + 1}{128 } \cdot \|\xi \cdot \widetilde{\U}_{t+1} - \U_t\|_F^2 \\
&= \tfrac{ \left(1 + \tfrac{1}{128}\right)^2 + 1}{128 } \cdot \|(\xi -1)\U_t - \xi \cdot \weta \gradf(\X_t) \cdot \U_t\|_F^2 \\
&\leq  (1 - \xi)^2 \cdot \tfrac{ \left(1 + \tfrac{1}{128}\right)^2 + 1}{64 } \cdot \|\U_t\|_F^2 \\ &\quad \quad \quad \quad + \tfrac{ \left(1 + \tfrac{1}{128}\right)^2 + 1}{64} \cdot \xi^2 \cdot \weta^2 \cdot  \|\gradf(\X_t) \cdot \U_t\|_F^2
\end{align*} 
\end{footnotesize}
where in the last inequality we substitute $\weta$; observe that $\weta \leq \tfrac{1}{128 L \sigma_1\left(\X_t\right)}$.
Combining this result with \eqref{eq:new_p_01}, we obtain:
\begin{footnotesize}
\begin{align}
2 \weta \big \langle &\gradf(\X_t) \cdot \U_t, ~\U_t - \Uo R_{\U_t}^\star \big \rangle +  \|\U_{t+1} - \widetilde{\U}_{t+1}\|_F^2 \nonumber \\ 
&\geq  \left(\tfrac{255 \cdot \xi^2}{128} - \tfrac{1}{5} -  \tfrac{ \left(1 + \tfrac{1}{128}\right)^2 + 1}{64} \cdot \xi^2\right)\cdot \weta^2\|\gradf(\X_t) \U_t\|_F^2 \nonumber \\ 
&\quad \quad + \tfrac{\weta \mu}{2}\norm{\Xo -\X_t}_F^2 - \tfrac{\weta \mu \sigma_r(\Xo)}{10} \cdot \dist(\U_t, \Uo)^2 \nonumber \\
&\quad \quad+\Bigg((1 - \xi)^2\cdot \left(1 - \tfrac{ \left(1 + \tfrac{1}{128}\right)^2 + 1}{64} \right) \nonumber \\ 
&\quad \quad- \tfrac{(3\xi - 1)^2}{4} \cdot \left(1 - \left(1 - \tfrac{2(1-\xi)}{3\xi - 1} \cdot \tfrac{1}{128} \right)^2\right)\Bigg) \cdot \|\U_t\|_F^2 \nonumber \\ 
&\stackrel{(i)}{\geq} \weta^2\|\gradf(\X_t) \U_t\|_F^2 + \tfrac{\weta \mu}{2}\norm{\Xo -\X_t}_F^2 - \tfrac{\weta \mu \sigma_r(\Xo)}{10} \cdot \dist(\U_t, \Uo)^2 \nonumber \\
&\quad \quad~+\left((1 - \xi)^2\cdot \Bigg(1 - \tfrac{ \left(1 + \tfrac{1}{128}\right)^2 + 1}{64} \right) \nonumber \\
&\quad \quad~- \tfrac{(3\xi - 1)^2}{4} \cdot \left(1 - \left(1 - \tfrac{2(1-\xi)}{3\xi - 1} \cdot \tfrac{1}{128} \right)^2\right)\Bigg) \cdot \|\U_t\|_F^2 \nonumber \\ 
&\stackrel{(ii)}{\geq} \weta^2\|\gradf(\X_t) \U_t\|_F^2 + \tfrac{\weta \mu}{2}\norm{\Xo -\X_t}_F^2 - \tfrac{\weta \mu \sigma_r(\Xo)}{10} \cdot \dist(\U_t, \Uo)^2
\label{eq:new_p_02}
\end{align} 
\end{footnotesize}
where $(i)$ is due to the assumption $\xi \gtrsim 0.78$ and thus $\big(\tfrac{255 \cdot \xi^2}{128} - \tfrac{1}{5} -  \tfrac{ \left(1 + \tfrac{1}{128}\right)^2 + 1}{64} \cdot \xi^2\big) \geq 1$; see also Figure \ref{fig:appendix_xi} (top panel), 
and $(ii)$ is due to the non-negativity of the constant in front of $\|\U_t\|_F^2$; see also Figure \ref{fig:appendix_xi} (bottom panel).

In the special case where $\C = \left\{\A \in \mathbb{C}^{d \times r} : \|\A\|_F^2 \leq 1\right\}$, as in QST, the assumption  $\xi \gtrsim 0.78$ is always satisfied, according to the following Corollary; the proof is provided in Subsection \ref{sec:cor:xi}:

\begin{corollary}{\label{cor:projFGD_schatten}}
If $\C = \left\{\U \in \mathbb{C}^{d \times r}: ~\|\U\|_F \leq 1 \right\}$, 
then \texttt{ProjFGD} inherently satisfies $\tfrac{128}{129} \leq \xi_t(\cdot) \leq 1$, for every $t$. I.e., it guarantees \eqref{conv:eq_01} without assumptions on $\xi_t(\cdot)$.
\end{corollary}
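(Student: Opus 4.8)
The plan is to exploit the fact that, for the Frobenius ball $\C = \{\U : \|\U\|_F \leq 1\}$, the projection $\Pi_\C$ is merely a radial rescaling, so that the scaling factor produced at iteration $t$ is explicitly $\xi_t = \min\{1, \|\Uw_{t+1}\|_F^{-1}\}$, where $\Uw_{t+1} = \U_t - \weta\, \gradf(\X_t)\U_t$ is the point produced by the gradient step before projection. The upper bound $\xi_t \leq 1$ is then immediate, so the entire content of the statement reduces to the lower bound, which is equivalent to the norm estimate $\|\Uw_{t+1}\|_F \leq \tfrac{129}{128}$.

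To obtain this estimate, first I would rewrite the gradient step in a form that isolates the column space of $\U_t$: since $\U_t = Q_{\U_t}Q_{\U_t}^\dagger \U_t$, we have $\Uw_{t+1} = \bigl(I - \weta\, \gradf(\X_t) Q_{\U_t}Q_{\U_t}^\dagger\bigr)\U_t$. Submultiplicativity of the Frobenius norm under left multiplication then gives $\|\Uw_{t+1}\|_F \leq \sigma_1\bigl(I - \weta\, \gradf(\X_t) Q_{\U_t}Q_{\U_t}^\dagger\bigr)\cdot \|\U_t\|_F$, and I would bound the two factors separately.

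For the first factor, the triangle inequality for the spectral norm yields $\sigma_1\bigl(I - \weta\, \gradf(\X_t)Q_{\U_t}Q_{\U_t}^\dagger\bigr) \leq 1 + \weta\,\sigma_1\bigl(\gradf(\X_t)Q_{\U_t}Q_{\U_t}^\dagger\bigr)$, and the definition of the step size $\weta = \tfrac{1}{128(L\sigma_1(\X_t) + \sigma_1(\gradf(\X_t)Q_{\U_t}Q_{\U_t}^\dagger))}$ forces $\weta\,\sigma_1\bigl(\gradf(\X_t)Q_{\U_t}Q_{\U_t}^\dagger\bigr) \leq \tfrac{1}{128}$, so the first factor is at most $\tfrac{129}{128}$. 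For the second factor I would invoke the feasibility invariant of the algorithm: $\U_0 \in \C$ by assumption $(A1)$, and $\U_{t+1} = \Pi_\C(\Uw_{t+1}) \in \C$ by construction, hence $\|\U_t\|_F \leq 1$ for every $t$. Combining the two bounds gives $\|\Uw_{t+1}\|_F \leq \tfrac{129}{128}$, and therefore $\xi_t = \min\{1, \|\Uw_{t+1}\|_F^{-1}\} \geq \tfrac{128}{129}$.

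The argument is essentially a single-step norm estimate, so I do not expect a serious obstacle; the only point requiring care is recognizing that the operator governing the step size is the \emph{column-space-restricted} gradient $\gradf(\X_t)Q_{\U_t}Q_{\U_t}^\dagger$ rather than $\gradf(\X_t)$ itself, since it is precisely $\sigma_1$ of this restricted operator that appears in the step-size denominator and makes the $\tfrac{1}{128}$ bound exact. I would also verify the boundary case $\Uw_{t+1} \in \C$, where $\xi_t = 1$ and the claimed inequalities $\tfrac{128}{129}\leq \xi_t \leq 1$ hold trivially.
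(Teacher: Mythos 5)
Your proposal is correct and follows essentially the same route as the paper's proof: both bound $\|\Uw_{t+1}\|_F \leq \tfrac{129}{128}$ by combining the feasibility invariant $\|\U_t\|_F \leq 1$ with the step-size bound $\weta\, \sigma_1\bigl(\gradf(\X_t)Q_{\U_t}Q_{\U_t}^\dagger\bigr) \leq \tfrac{1}{128}$, the only cosmetic difference being that you factor out $\bigl(I - \weta\,\gradf(\X_t)Q_{\U_t}Q_{\U_t}^\dagger\bigr)$ before applying the triangle inequality, whereas the paper applies the triangle inequality first and then the bound $\|AB\|_F \leq \sigma_1(A)\|B\|_F$. Your explicit handling of the boundary case $\Uw_{t+1} \in \C$ (where $\xi_t = 1$) is a small point of care that the paper leaves implicit.
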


We conjecture that the lower bound on $\xi_t(\cdot)$ for more generic cases of $\C$ could possibly be improved with a different analysis.

\begin{figure}[!ht]
	\centering
	\includegraphics[width=0.36\textwidth]{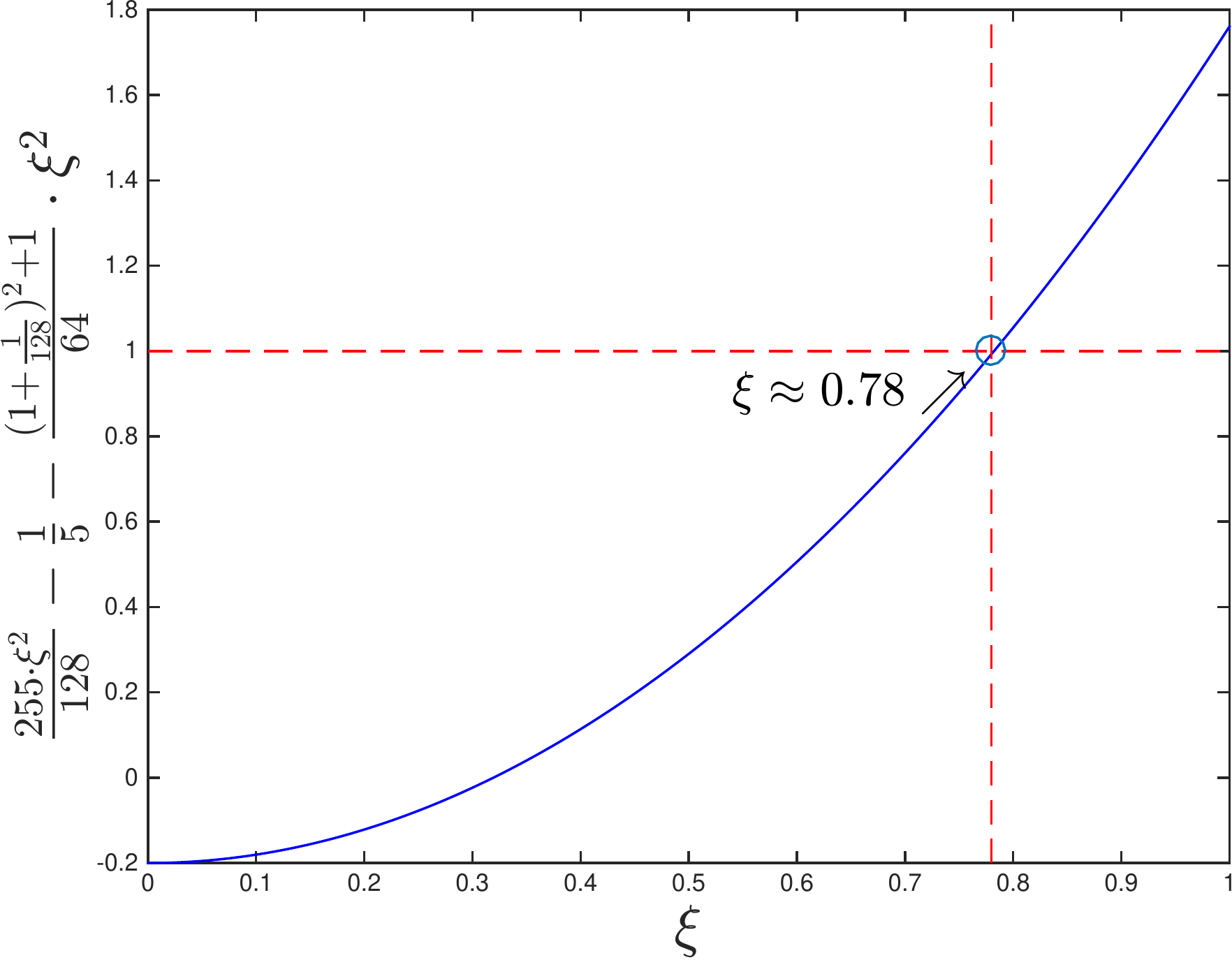} \hspace{0.3cm}
	\includegraphics[width=0.35\textwidth]{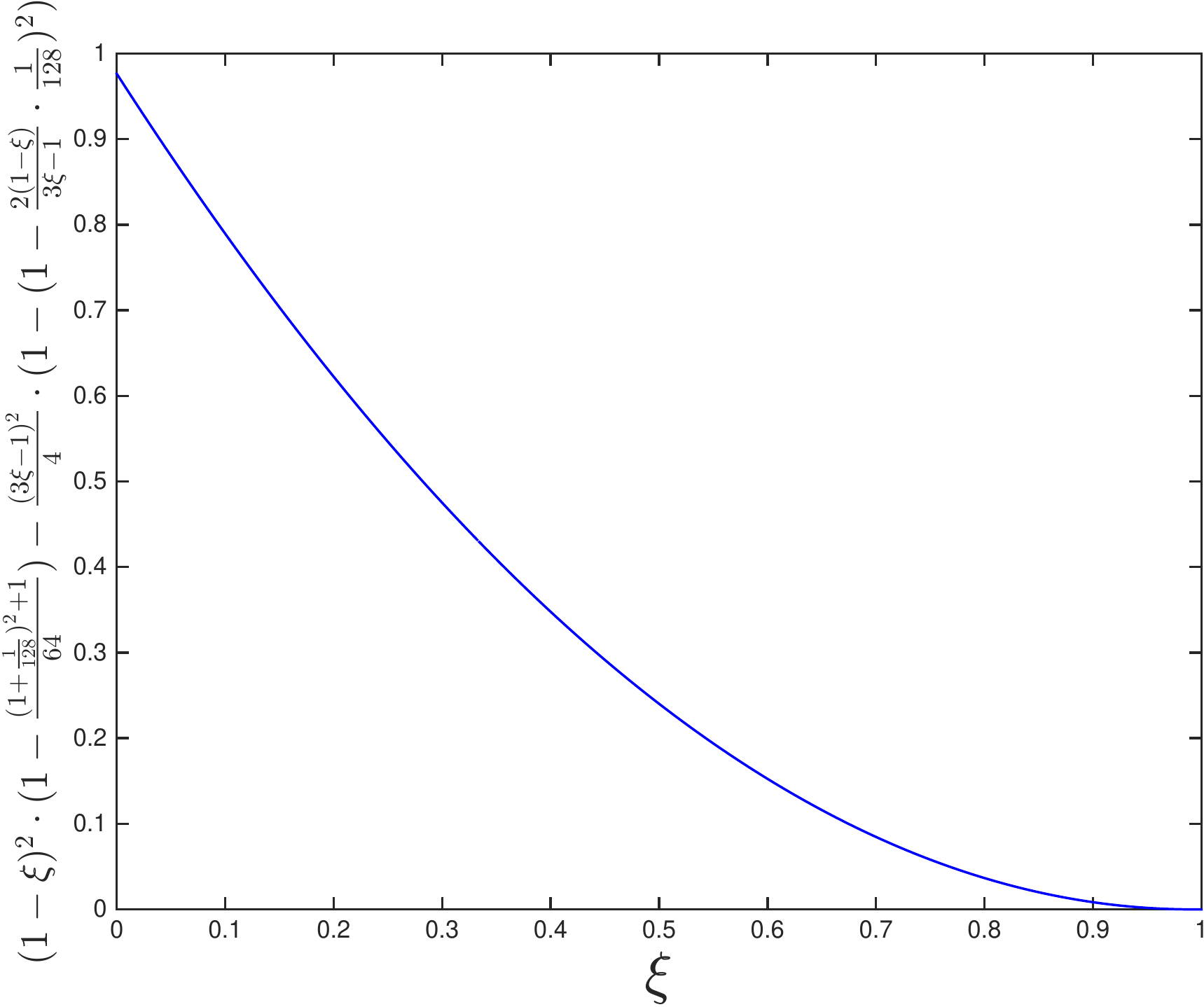}
	\caption{Behavior of constants, depending on $\xi$, in expression \eqref{eq:new_p_02}.} \label{fig:appendix_xi}
\end{figure}

Finally, we bound $\tfrac{\weta \mu}{2}\norm{\Xo -\X_t}_F^2$ using the following Lemma from \cite{tu2015low}:
\begin{lemma}{\label{lem:tu}}
For any $\U, \V \in \mathbb{C}^{d \times r}$, we have:
\begin{align*}
\|\U\U^\dagger - \V\V^\dagger\|_F^2 \geq 2 \cdot \left(\sqrt{2} - 1\right) \cdot \sigma_r(\U)^2 \cdot \dist(\U, \V)^2.
\end{align*} 
\end{lemma}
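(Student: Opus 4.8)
The plan is to remove the rotational freedom first, then split the objective into one piece that produces the factor $\sigma_r(\U)^2$ and a cross piece whose worst case fixes the constant $2(\sqrt2-1)$. First I would exploit that $\V\V^\dagger$ is invariant under $\V \mapsto \V R$ for unitary $R$, while $\dist(\U,\V) = \min_{R \in \mathcal{U}} \norm{\U - \V R}_F$. Replacing $\V$ by $\V R^\star$, where $R^\star$ attains this minimum, leaves both sides of the claim unchanged and gives $\dist(\U,\V) = \norm{\U - \V}_F =: \norm{\Delta}_F$ with $\Delta := \U - \V$. The first- and second-order optimality conditions of this orthogonal Procrustes problem then force $\U^\dagger \V$ to be Hermitian positive semi-definite; this single structural fact is what controls the sign of the cross term below.

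Next I would use the identity $\U\U^\dagger - \V\V^\dagger = \Delta\U^\dagger + \V\Delta^\dagger$ and expand
\begin{align*}
\norm{\U\U^\dagger - \V\V^\dagger}_F^2 = \norm{\Delta\U^\dagger}_F^2 + \norm{\V\Delta^\dagger}_F^2 + 2\,\mathrm{Re}\,\trace\!\left(\U\Delta^\dagger\V\Delta^\dagger\right).
\end{align*}
The first term already carries the right factor, since $\norm{\Delta\U^\dagger}_F^2 = \trace(\U^\dagger\U\,\Delta^\dagger\Delta) \geq \sigma_r(\U)^2\,\norm{\Delta}_F^2$ because $\U^\dagger\U \succeq \sigma_r(\U)^2 I$. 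It then remains to show that the nonnegative term $\norm{\V\Delta^\dagger}_F^2$ together with the possibly negative cross term does not cancel too much of it, where the positive semi-definiteness of $\U^\dagger\V$ keeps $\trace(\U\Delta^\dagger\V\Delta^\dagger)$ from being too negative.

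To pin down the exact constant I would reduce to the extremal configuration, which is \emph{not} the aligned one. In the rank-one case $\U=u$, $\V=v$ with $\langle u,v\rangle\ge 0$, the quantity to be bounded is $\tfrac{\norm{u}^4+\norm{v}^4-2\langle u,v\rangle^2}{\norm{u}^2\left(\norm{u}^2+\norm{v}^2-2\langle u,v\rangle\right)}$; minimizing over the inner product drives it to the orthogonal case $u\perp v$ and reduces everything to $\min_{b\ge 0}\tfrac{1+b^4}{1+b^2}$ with $b=\norm{v}/\norm{u}$. A short calculus computation gives the minimizer $b^2=\sqrt2-1$ and minimum value exactly $2(\sqrt2-1)$, attained e.g.\ at $\norm{u}=1$, $u\perp v$, $\norm{v}^2=\sqrt2-1$; hence the constant in the statement is sharp.

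The step I expect to be the main obstacle is lifting this scalar optimum to arbitrary $\U,\V\in\mathbb{C}^{d\times r}$: the condition $\U^\dagger\V\succeq 0$ does \emph{not} make $\U$ and $\V$ simultaneously diagonalizable, so the cross term $\trace(\U\Delta^\dagger\V\Delta^\dagger)$ cannot be reduced entrywise and must be handled as a full quadratic form in $\Delta$ while carefully retaining the sign information supplied by $\U^\dagger\V\succeq 0$. Any looser estimate, such as a triangle inequality applied to the cross term, collapses the sharp constant $2(\sqrt2-1)$ to the weaker value $1$. The complete argument is carried out in \cite{tu2015low}.
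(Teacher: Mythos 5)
Your attempt must be judged as a standalone argument, because the paper itself never proves this lemma: it is imported verbatim from \cite{tu2015low}, so there is no in-paper proof to compare against. Everything you actually carry out is correct and follows the strategy of that reference. The Procrustes reduction is legitimate (both sides are invariant under $\V \mapsto \V R$ for unitary $R$); optimality of $R^\star$ does force $\U^\dagger \V$ to be Hermitian positive semi-definite; with $\Delta := \U - \V$ after the replacement $\V \leftarrow \V R^\star$, the decomposition $\U\U^\dagger - \V\V^\dagger = \Delta\U^\dagger + \V\Delta^\dagger$ and the bound $\norm{\Delta\U^\dagger}_F^2 = \trace\left(\U^\dagger\U\,\Delta^\dagger\Delta\right) \geq \sigma_r(\U)^2\norm{\Delta}_F^2$ are exact; and your rank-one extremal computation (minimum of $(1+t^2)/(1+t)$ over $t\geq 0$ attained at $t=\sqrt{2}-1$ with value $2(\sqrt{2}-1)$) is right, so the constant is indeed sharp.

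Nevertheless there is a genuine gap: you never prove the inequality. The rank-one analysis establishes only sharpness --- that no constant larger than $2(\sqrt{2}-1)$ can hold --- and provides no lower bound for general rank-$r$ pairs, which could a priori drive the ratio below the rank-one minimum; you name this lifting step yourself as ``the main obstacle'' and defer it to \cite{tu2015low}, but that step \emph{is} the lemma. Concretely, what is missing is the estimate
\begin{align*}
\norm{\V\Delta^\dagger}_F^2 + 2\,\mathrm{Re}\,\trace\left(\U\Delta^\dagger\V\Delta^\dagger\right) \geq -\left(3-2\sqrt{2}\right)\cdot\sigma_r(\U)^2\cdot\norm{\Delta}_F^2
\end{align*}
under the constraint $\U^\dagger\V \succeq 0$, which, added to the first-term bound, yields the claimed factor $2(\sqrt{2}-1) = 1-(3-2\sqrt{2})$. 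For what it is worth, once $\U^\dagger\V$ is Hermitian one has the exact identity
\begin{align*}
2\,\mathrm{Re}\,\trace\left(\U\Delta^\dagger\V\Delta^\dagger\right) = \tfrac{1}{2}\left(\norm{\U^\dagger\U - \V^\dagger\V}_F^2 - \norm{\Delta^\dagger\Delta}_F^2\right),
\end{align*}
which makes the cross term tractable, but one still needs the full positive semi-definiteness of $\U^\dagger\V$ to control $\norm{\Delta^\dagger\Delta}_F^2$; moreover, at your extremal configuration ($\norm{u}=1$, $u\perp v$, $\norm{v}^2=\sqrt{2}-1$) all of these inequalities hold with equality, so no slack is available anywhere in the chain. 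Since the paper also merely cites \cite{tu2015low}, your treatment is on the same footing as the paper's own; but as a blind proof, the crux is absent.
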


In our case, this translates to:
\begin{align*}
\tfrac{\weta \mu}{2}\norm{\Xo -\X_t}_F^2 \geq \weta \mu \cdot  \left(\sqrt{2} - 1\right) \cdot \sigma_r(\Xo) \cdot \dist(\U_t, \Uo)^2,
\end{align*} 
and we can thus conclude to:
\begin{footnotesize}
\begin{align*}
2 \weta \big \langle &\gradf(\X_t) \cdot \U_t, ~\U_t - \Uo R_{\U_t}^\star \big \rangle +  \|\U_{t+1} - \widetilde{\U}_{t+1}\|_F^2 \nonumber \\ 
&\geq \weta^2\|\gradf(\X_t) \U_t\|_F^2 +  \weta \mu  \left(\sqrt{2} - 1\right)  \sigma_r(\Xo) \cdot \dist(\U_t, \Uo)^2 \nonumber \\ 
&\quad \quad \quad - \tfrac{\weta \mu \sigma_r(\Xo)}{10} \cdot \dist(\U_t, \Uo)^2 \\
&= \weta^2\|\gradf(\X_t) \U_t\|_F^2 +  \left(\sqrt{2} - 1 - \tfrac{1}{10}\right)  \weta \mu \sigma_r(\Xo) \cdot \dist(\U_t, \Uo)^2 \\
&= \weta^2\|\gradf(\X_t) \U_t\|_F^2 +  \tfrac{3\weta \mu}{10} \cdot \sigma_r(\Xo) \cdot \dist(\U_t, \Uo)^2 
\end{align*}
\end{footnotesize}
This completes the proof.
\end{proof}

\subsection{Proof of Lemma \ref{lem:DD_bound_sc}}{\label{sec:DD_proof}}

\begin{proof}
We can lower bound $\ip{\gradf(\X_t) }{ \Delta \Delta^\dagger}$ as follows:
\begin{footnotesize}
\begin{align}
&\ip{\gradf(\X_t) }{ \Delta \Delta^\dagger} \stackrel{(i)}{=} \ip{Q_{\Delta} Q_{\Delta}^\dagger \gradf(\X_t) }{ \Delta \Delta^\dagger} \nonumber \\
&\geq -  \left|\trace\left(Q_{\Delta} Q_{\Delta}^\dagger \gradf(\X_t) \Delta \Delta^\dagger\right)\right| \nonumber \\ 
&\stackrel{(ii)}{\geq} - \sigma_1\left( Q_{\Delta} Q_{\Delta}^\dagger \gradf(\X_t) \right) \trace( \Delta \Delta^\dagger)  \nonumber \\
&\stackrel{(iii)}{\geq} - \Bigg( \sigma_1\left(Q_{\U_t} Q_{\U_t}^\dagger\gradf(\X_t)\right) \nonumber \\ 
&\quad \quad \quad \quad + \sigma_1\left(Q_{\Uo} Q_{\Uo}^\dagger\gradf(\X_t)\right) \Bigg)  \dist(\U_t, \Uo)^2 .   \label{proofsr1:eq_11}
\end{align}
\end{footnotesize}
Note that  $(i)$ follows from the fact $\Delta =Q_{\Delta} Q_{\Delta}^\dagger \Delta$ and 
$(ii)$ follows from $|\trace(AB)| \leq \sigma_1\left(A\right) \trace(B)$, for PSD matrix $B$ (Von Neumann's trace inequality~\cite{mirsky1975trace}). 
For the transformation in $(iii)$, we use that fact that the column space of $\Delta$, $\text{\textsc{Span}}(\Delta)$, is a subset of $\text{\textsc{Span}}(\U_t \cup \Uo)$, as $\Delta$ is a linear combination of $\U_t$ and $\Uo R_{\U_t}^\star$. 

For the second term in the parenthesis above, we first derive the following inequalities; their use is apparent later on:
\begin{footnotesize}
\begin{align*}
&\sigma_1\left(\gradf(\X_t)\Uo\right) \\
&\stackrel{(i)}{\leq} \sigma_1\left(\gradf(\X_t)\U_t\right) + \sigma_1\left(\gradf(\X_t)\Delta\right) \\
&\stackrel{(ii)}{\leq} \sigma_1\left(\gradf(\X_t)\U_t\right) + \sigma_1\left(\gradf(\X_t)Q_{\Delta}Q_{\Delta}^\dagger\right) \sigma_1\left(\Delta\right) \\
&\stackrel{(iii)}{\leq} \sigma_1\left(\gradf(\X_t)\U_t\right) +\left(\gradf(\X_t)Q_{\U_t}Q_{\U_t}^\dagger\right) \\ 
&\quad \quad \quad \quad \quad \quad \quad \quad ~+ \sigma_1\left(\gradf(\X_t)Q_{\Uo}Q_{\Uo}^\dagger\right) \sigma_1\left(\Delta\right) \\
&\stackrel{(iv)}{\leq} \sigma_1\left(\gradf(\X_t)\U_t\right) +\left(\gradf(\X_t)Q_{\U_t}Q_{\U_t}^\dagger\right) \\
&\quad \quad \quad \quad \quad \quad \quad \quad ~+ \sigma_1\left(\gradf(\X_t)Q_{\Uo}Q_{\Uo}^\dagger\right)\tfrac{1}{200} \sigma_{r}(\Uo)
\end{align*}
\end{footnotesize}
\begin{footnotesize}
\begin{align*}
&\stackrel{(v)}{\leq} \sigma_1\left(\gradf(\X_t)\U_t\right) +  \tfrac{1}{\left(1-\frac{1}{200}\right)} \cdot \tfrac{1}{200}\sigma_1\left(\gradf(\X_t)\U_t\right) \\ 
&\quad \quad \quad \quad \quad \quad \quad \quad + \tfrac{1}{200} \sigma_1\left(\gradf(\X_t)\Uo\right) \\
&\leq \tfrac{200}{199} \cdot \sigma_1\left(\gradf(\X_t)\U_t\right) + \tfrac{1}{200} \sigma_1\left(\gradf(\X_t)\Uo\right).
\end{align*} 
\end{footnotesize}
where 
$(i)$ is due to triangle inequality on $\Uo R_{\U_t}^\star = \U_t - \Delta$, 
$(ii)$ is due to generalized Cauchy-Schwarz inequality, 
$(iii)$ is due to triangle inequality and the fact that the column span of $\Delta$ can be decomposed into the column span of $\U_t$ and $\Uo$, by construction of $\Delta$, 
$(iv)$ is due to the assumption $ \dist(\U_t, \Uo) \leq \gamma' \cdot \sigma_r(\Uo)$ and 
$$\sigma_1\left(\Delta\right) \leq \dist(\U_t, \Uo) \leq \tfrac{1}{200} \tfrac{\sigma_r(\Xo)}{\sigma_1(\Xo)} \cdot \sigma_r(\Uo) \leq \tfrac{1}{200} \cdot \sigma_r(\Uo).$$ 
Finally, $(v)$ is due to the facts: 
\begin{align*}
\sigma_1\left(\gradf(\X_t) \Uo\right) &= \sigma_1\left(\gradf(\X_t) Q_{\Uo}Q_{\Uo}^\dagger \Uo\right) \\ 
												   &\geq \sigma_1\left(\gradf(\X_t) Q_{\Uo}Q_{\Uo}^\dagger\right) \cdot \sigma_r(\Uo)
\end{align*} 
and 
\begin{align*}
\sigma_1\left(\gradf(\X_t) \U_t\right) &= \sigma_1\left(\gradf(\X_t) Q_{\U_t}Q_{\U_t}^\dagger \U\right) \\ 
													&\geq \sigma_1\left(\gradf(\X_t) Q_{\U_t}Q_{\U_t}^\dagger\right) \cdot \sigma_r(\U_t) \nonumber \\ 
&\geq \sigma_1\left(\gradf(\X_t) Q_{\U_t}Q_{\U_t}^\dagger\right) \cdot \left(1 - \tfrac{1}{200}\right) \cdot \sigma_r(\Uo),
\end{align*} by the proof of (a variant of) Lemma A.3 in \cite{bhojanapalli2016dropping}. 
Thus, for the term $ \sigma_1\left(\gradf(\X_t)Q_{\Uo} Q_{\Uo}^\dagger\right)$, we have
\begin{align}\label{proofsjc:eq_067}
 \sigma_1\left(\gradf(\X_t)Q_{\Uo} Q_{\Uo}^\dagger\right)  &\leq \tfrac{1}{\sigma_r(\Uo)} \sigma_1\left(\gradf(\X_t)\Uo\right) \nonumber \\ 
 &\leq \tfrac{1}{\sigma_r(\Uo)}\tfrac{201}{199} \sigma_1\left(\gradf(\X_t)\U_t\right) \nonumber \\ 
 &\leq \tfrac{201 \sigma_1(\Uo)}{200 \sigma_r(\Uo)}\tfrac{201}{199} \sigma_1\left(\gradf(\X_t)Q_{\U_t} Q_{\U_t}^\dagger\right).
\end{align} 

Using \eqref{proofsjc:eq_067} in \eqref{proofsr1:eq_11}, we obtain:
\begin{footnotesize}
\begin{align}
&\ip{\gradf(\X_t) }{ \Delta \Delta^\dagger} \nonumber \\ 
&\geq - \Bigg( \sigma_1\left(Q_{\U_t} Q_{\U_t}^\dagger\gradf(\X_t)\right) \nonumber \\
&\quad \quad \quad \quad +  \tfrac{201 \sigma_1(\Uo)}{200 \sigma_r(\Uo)}\tfrac{201}{199}  \sigma_1\left(Q_{\U_t} Q_{\U_t}^\dagger\gradf(\X_t)\right) \Bigg) \cdot  \dist(\U_t, \Uo)^2 \nonumber \\ 
&\geq - \tfrac{21 \cdot \tau(\Uo)}{10} \|Q_{\U_t} Q_{\U_t}^\dagger\gradf(\X_t)\|_2 \cdot \dist(\U_t, \Uo)^2 \nonumber
\end{align}
\end{footnotesize} 
where $\tau(\Uo) := \tfrac{\sigma_1(\Uo)}{\sigma_r(\Uo)}$.

We remind that $\widehat{\eta} =  \tfrac{1}{128(L \sigma_1(\X_t) + \sigma_1(Q_{\U_t} Q_{\U_t}^\dagger\gradf(\X_t)))} $.
Then, we have:
\begin{footnotesize}
\begin{align}
&\tfrac{21 \cdot \tau(\Uo)}{10} \cdot \sigma_1\left(Q_{\U_t} Q_{\U_t}^\dagger\gradf(\X_t)\right) \cdot \dist(\U_t, \Uo)^2 \nonumber \\ 
 &\leq \tfrac{21 \cdot \tau(\Uo)}{10} \cdot \weta \cdot 128 L \cdot \sigma_1\left(\X_t\right) \sigma_1\left(Q_{\U_t} Q_{\U_t}^\dagger\gradf(\X_t)\right) \cdot \dist(\U_t, \Uo)^2 \nonumber \\ 
 &+ \tfrac{21 \cdot \tau(\Uo)}{10} \cdot \weta \cdot 128 \cdot \sigma_1\left(Q_{\U_t} Q_{\U_t}^\dagger\gradf(\X_t)\right)^2 \cdot  \dist(\U_t, \Uo)^2 \label{eq:dff_00}
\end{align} 
\end{footnotesize}
To bound the first term on the right hand side, we observe that 
\begin{footnotesize}
\begin{align*}
\sigma_1\left(Q_{\U_t} Q_{\U_t}^\dagger\gradf(\X_t)\right) \leq \tfrac{\mu \sigma_r(\X_t)}{\tfrac{21 \cdot \tau(\Uo)}{10} \cdot 10}
\end{align*}
\end{footnotesize}
or
\begin{footnotesize}
\begin{align*}
\sigma_1\left(Q_{\U_t} Q_{\U_t}^\dagger\gradf(\X_t)\right) \geq \tfrac{\mu \sigma_r(\X_t)}{\tfrac{21 \cdot \tau(\Uo)}{10} \cdot 10}.
\end{align*}
\end{footnotesize}
We use this trivial information to obtain \eqref{eq:figure2}
\begin{figure*}[!t]
\begin{footnotesize}
\begin{align}
&\tfrac{21 \cdot \tau(\Uo)}{10} \cdot \weta \cdot 128 L \cdot \sigma_1\left(\X_t\right) \sigma_1\left(Q_{\U_t} Q_{\U_t}^\dagger\gradf(\X_t)\right) \cdot \dist(\U_t, \Uo)^2 \nonumber \\ 
& \quad \quad \quad \leq \max \bigg\{\tfrac{\tfrac{21 \cdot \tau(\Uo)}{10} \cdot 128 \cdot \weta \cdot L \sigma_1\left(\X_t\right) \cdot \mu \sigma_r(\X_t)}{\tfrac{21 \cdot \tau(\Uo)}{10} \cdot 10} \cdot \dist(\U_t, \Uo)^2, ~~\weta \left(\tfrac{21 \cdot \tau(\Uo)}{10}\right)^2 \cdot 128 \cdot 10 \kappa \tau(\X_t) \sigma_1\left(Q_{\U_t} Q_{\U_t}^\dagger\gradf(\X_t)\right)^2 \cdot  \dist(\U_t, \Uo)^2 \bigg\} \nonumber \\
&\quad \quad \quad \leq  \tfrac{128 \cdot \weta \cdot L \sigma_1\left(\X_t\right) \cdot \mu \sigma_r(\X_t)}{10} \cdot \dist(\U_t, \Uo)^2 + \weta \left(\tfrac{21 \cdot \tau(\Uo)}{10}\right)^2 \cdot 128 \cdot 10 \kappa \tau(\X_t) \sigma_1\left(Q_{\U_t} Q_{\U_t}^\dagger\gradf(\X_t)\right)^2 \cdot  \dist(\U_t, \Uo)^2 \label{eq:figure2}
\end{align} 
\end{footnotesize}
\hrulefill
\end{figure*}
where $\kappa := \tfrac{L}{\mu}$ and $\tau(\X) := \tfrac{\sigma_1(\X)}{\sigma_r(\X)}$ for a rank-$r$ matrix $\X$.
Combining \eqref{eq:figure2} with \eqref{eq:dff_00}, we obtain the expression in \eqref{eq:figure3}
\begin{figure*}[!t]
\begin{footnotesize}
\begin{align}
&\tfrac{21 \cdot \tau(\Uo)}{10} \cdot \sigma_1\left(Q_{\U_t} Q_{\U_t}^\dagger\gradf(\X_t)\right) \cdot \dist(\U_t, \Uo)^2 \nonumber \\ 
&\quad \quad \quad \stackrel{(i)}{\leq}  \tfrac{\mu \sigma_{r}(\X_t)}{10} \cdot \dist(\U_t, \Uo)^2 + \left(10 \kappa \tau(\X_t) \cdot \tfrac{21 \cdot \tau(\Uo)}{10} +1 \right)\cdot \tfrac{21 \cdot \tau(\Uo)}{10} \cdot  128 \cdot \weta \sigma_1\left(Q_{\U_t} Q_{\U_t}^\dagger\gradf(\X_t)\right)^2 \cdot \dist(\U_t, \Uo)^2  \nonumber \\
&\quad \quad \quad  \stackrel{(ii)}{\leq}  \tfrac{\mu \sigma_{r}(\X_t)}{10} \cdot \dist(\U_t, \Uo)^2 + \left(11 \kappa \tau(\Xo) \cdot \tfrac{21 \cdot \tau(\Uo)}{10} +1 \right)\cdot \tfrac{21 \cdot \tau(\Uo)}{10} \cdot  128 \cdot \weta \sigma_1\left(Q_{\U_t} Q_{\U_t}^\dagger\gradf(\X_t)\right)^2 \cdot (\rho')^2 \sigma_{r}(\Xo)   \nonumber \\
&\quad \quad \quad \stackrel{(iii)}{\leq} \tfrac{\mu \sigma_{r}(\X_t)}{10} \cdot \dist(\U_t, \Uo)^2 + \tfrac{12 \cdot 21^2}{10^2} \cdot \kappa \cdot \tau(\Xo)^2 \cdot 128 \cdot \weta \sigma_1\left(\gradf(\X_t)\U_t\right)^2 \cdot \tfrac{11 \cdot (\rho')^2}{10} \nonumber \\
&\quad \quad \quad  \stackrel{(iii)}{\leq}  \tfrac{\mu \sigma_{r}(\X_t)}{10} \cdot \dist(\U_t, \Uo)^2+ \tfrac{\weta}{5} \sigma_1\left(\gradf(\X_t)\U_t\right)^2 \label{eq:figure3}
\end{align}
\end{footnotesize}
\hrulefill
\end{figure*}
where $(i)$ follows from $\weta \leq \tfrac{1}{128 L \sigma_1\left(\X_t\right)}$, 
$(ii)$ is due to Lemma A.3 in \cite{bhojanapalli2016dropping} and using the bound $\dist(\U_t, \Uo) \leq \gamma' \sigma_{r}(\Uo)$ by the hypothesis of the lemma, 
$(iii)$ is due to $\sigma_{r}(\Xo) \leq 1.1 \sigma_{r}(\X_t)$ by Lemma A.3 in \cite{bhojanapalli2016dropping}, due to the facts $\sigma_{r}(\X_t) \sigma_1\left(Q_{\U_t} Q_{\U_t}^\dagger\gradf(\X_t)\right)^2 \leq  \|\U_t^\dagger\gradf(X_t)\|_F^2$ and $(11 \kappa \tau(\Xo) \cdot \tfrac{21 \cdot \tau(\Uo)}{10} +1) \leq 12 \kappa \tau(\Xo) \cdot \tfrac{21 \cdot \tau(\Uo)}{10}$, and $\tau(\Uo)^2 = \tau(\Xo)$. 
Finally, $(iv)$ follows from substituting $\gamma' := c \cdot \tfrac{1}{\kappa} \cdot \tfrac{1}{\tau(\Xo)}$ for $c = \tfrac{1}{200}$ and using Lemma A.3 in \cite{bhojanapalli2016dropping} (due to the factor $\tfrac{1}{200}$, all constants above lead to bounding the term with the constant $\tfrac{1}{5}$).

Thus, we can conclude:
\begin{footnotesize}
\begin{align*}
\ip{\gradf(\X_t) }{ \Delta \Delta^\dagger} \geq - \left(\tfrac{\weta}{5} \|\gradf(\X_t) \U_t \|_F^2 + \tfrac{\mu \sigma_{r}(\Xo)}{10} \cdot \dist(\U_t, \Uo)^2\right).
\end{align*}
\end{footnotesize}
This completes the proof.
\end{proof}

\subsection{Proof of Corollary \ref{cor:projFGD_schatten}}{\label{sec:cor:xi}}

We have
\begin{align*}
\| \widetilde{\U}_{t+1} \|_F
&\le \| \U_t \|_F + \widehat{\eta} \cdot \| \gradf(\X_t) \U_t \|_F \\
&\le \| \U_t \|_F + \widehat{\eta} \cdot \sigma_1\left( \gradf(\X_t) Q_{\U_t} Q_{\U_t}^\dagger \right) \cdot \| \U_t \|_F \\
&\le \left(1 + \widehat{\eta} \cdot \sigma_1\left( \gradf(\X_t) Q_{\U_t} Q_{\U_t}^\dagger\right)\right) \\
&\le (1 + \tfrac{1}{128})
\end{align*}
where the first inequality follows from the triangle inequality, 
the second holds by the property $\|AB\|_F \le \sigma_1(A) \cdot \|B\|_F$, 
and the third follows because the step size is bounded above by $\widehat{\eta} \le \frac{1}{128 \sigma_1(\gradf(\X_t)Q_{\U_t} Q_{\U_t}^\dagger)}$. 
Hence, we get $\xi(\widetilde{\U}_{t+1}) = \frac{1}{\| \widetilde{\U}_{t+1} \|_F} \ge \frac{128}{129}$.


\section{Initialization}{\label{sec:init}}
In this section, we present a specific initialization strategy for \texttt{ProjFGD}. 
For completeness, we repeat the definition of the optimization problem at hand, both in the original space:
\begin{equation}{\label{init:eq_01}}
\begin{aligned}
	& \underset{\X \in \mathbb{C}^{d \times d}}{\text{minimize}}
	& & f(\X) \quad \quad \text{subject to} \quad \X \in \C'.
\end{aligned}
\end{equation} 
and the factored space:
\begin{equation}{\label{init:eq_00}}
\begin{aligned}
	& \underset{\U \in \mathbb{C}^{d \times r}}{\text{minimize}}
	& & f(\U\U^\dagger) \quad \quad \text{subject to} \quad \U \in \C.
\end{aligned}
\end{equation} 
For our initialization, we restrict our attention to the full rank ($r = d$) case; the case of $r < d$ assumes that there is a projection step that projects at the same time onto the PSD cone and $\C$ at the same time.
In the full rank case, $\C'$ is a convex set and  includes the full-dimensional PSD cone, as well as other norm constraints, as described in the main text.
In the particular case of QST, we make no restrictions; \citep{gonccalves2016projected} provides an efficient projection procedure that satisfies the constraints $\C'$ and holds for any $r$.

Let us denote $\Pi_{\C'}(\cdot)$ the corresponding projection step, where all constraints are satisfied simultaneously.
Then, the initialization we propose follows similar motions with that in \cite{bhojanapalli2016dropping}:
We consider the projection of the weighted negative gradient at $0$, \emph{i.e.}, $-\tfrac{1}{L} \cdot \nabla f(0)$, onto $\C'$. \emph{I.e.}, 
\begin{equation}
\X_0 = \U_0 \U_0^\dagger = \Pi_{\C'}\left(\tfrac{-1}{L} \cdot \nabla f(0) \right).\label{eq:projinit}
\end{equation}
Assuming a first-order oracle model, where we access $f$ only though function evaluations and gradient calculations, \eqref{eq:projinit} provides a cheap way to find an initial point with some approximation guarantees as follows \footnote{As we show in the experiments section, a random initialization performs well in practice, without requiring the additional calculations involved in \eqref{eq:projinit}. However, a random initialization for the constraint case provides no guarantees whatsoever.}:

\begin{lemma}
Let $\U_0 \in \mathbb{C}^{d \times r}$ be such that $\X_0 = \U_0\U_0^\dagger = \Pi_{\C'}\left(\tfrac{-1}{L} \cdot \nabla f(0) \right)$. 
Consider the problem in \eqref{init:eq_00} where $f$ is assumed to be $L$-smooth and $\mu$-strongly convex, with optimum point $\X_\star$ such that $\text{rank}(\X_\star) = r$. 
We apply \texttt{ProjFGD} with $\U_0$ as the initial point. 
Then, in this generic case, $\U_0$ satisfies:
\begin{align*}
\dist(\U_0, \Uo) \leq \gamma' \cdot \sigma_r(\Uo),
\end{align*} where $\gamma' = \sqrt{\tfrac{1 - \sfrac{\mu}{L}}{2(\sqrt{2}-1)}} \tau^2(\Uo) \sqrt{\texttt{srank}(\Xo)}$, $\texttt{srank}(\X) = \tfrac{\|\X\|_F}{\sigma_1(\X)}$.
\end{lemma}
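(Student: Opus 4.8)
The plan is to prove the bound in two stages: first control the distance $\norm{\X_0 - \Xo}_F$ in the original matrix space, and then transfer it to the factored distance $\dist(\U_0,\Uo)$ through Lemma~\ref{lem:tu}. Since \eqref{init:eq_01} is a \emph{convex} program ($f$ convex, $\C'$ convex), its minimizer $\Xo$ is a fixed point of the projected-gradient map with step $1/L$, i.e. $\Xo = \Pi_{\C'}\!\left(\Xo - \tfrac{1}{L}\gradf(\Xo)\right)$. Writing the initializer as $\X_0 = \Pi_{\C'}\!\left(0 - \tfrac{1}{L}\gradf(0)\right)$ and using that the Euclidean projection onto a convex set is nonexpansive, I would obtain
\begin{align*}
\norm{\X_0 - \Xo}_F &= \norm{\Pi_{\C'}\!\left(-\tfrac1L\gradf(0)\right) - \Pi_{\C'}\!\left(\Xo - \tfrac1L\gradf(\Xo)\right)}_F \\
&\leq \norm{\Xo - \tfrac1L\bigl(\gradf(\Xo) - \gradf(0)\bigr)}_F = \norm{T(\Xo) - T(0)}_F,
\end{align*}
where $T(\X) := \X - \tfrac1L\gradf(\X)$ is the gradient-descent operator.

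Next I would bound $\norm{T(\Xo) - T(0)}_F$ using that $f$ is $L$-smooth and $\mu$-strongly convex along the relevant low-rank directions. The standard combined (co-coercivity) inequality
\begin{align*}
\ip{\gradf(\X)-\gradf(\Y)}{\X-\Y} &\geq \tfrac{\mu L}{\mu+L}\norm{\X-\Y}_F^2 \\
&\quad + \tfrac{1}{\mu+L}\norm{\gradf(\X)-\gradf(\Y)}_F^2
\end{align*}
makes $T$ a contraction at step $1/L$, giving $\norm{T(\Xo)-T(0)}_F \leq \sqrt{1-\tfrac{\mu}{L}}\,\norm{\Xo}_F$, and hence $\norm{\X_0-\Xo}_F^2 \leq (1-\tfrac{\mu}{L})\norm{\Xo}_F^2$. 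Under RIP this is the regime $\tfrac{\mu}{L}\propto\tfrac{1-\delta_{4r}}{1+\delta_{4r}}$ that enters $\gamma'$.

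To pass to the factored space, I would apply Lemma~\ref{lem:tu} with $\U=\Uo$:
\begin{align*}
\norm{\Xo - \X_0}_F^2 \geq 2(\sqrt2-1)\,\sigma_r(\Uo)^2\,\dist(\U_0,\Uo)^2,
\end{align*}
so that $\dist(\U_0,\Uo)^2 \leq \tfrac{\norm{\Xo-\X_0}_F^2}{2(\sqrt2-1)\,\sigma_r(\Uo)^2}$. Plugging in the bound from the previous step and rewriting the spectral quantities through $\sigma_r(\Uo)^2=\sigma_r(\Xo)$, $\tau^2(\Uo)=\tfrac{\sigma_1(\Xo)}{\sigma_r(\Xo)}$ and $\texttt{srank}(\Xo)=\tfrac{\norm{\Xo}_F}{\sigma_1(\Xo)}$ reassembles a bound of the form $\dist(\U_0,\Uo)\leq \gamma'\,\sigma_r(\Uo)$ with $\gamma'$ of the claimed shape.

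The hard part will be matching the precise $\gamma'$, in particular getting $\texttt{srank}(\Xo)$ to appear only to the power $\tfrac12$. The contraction estimate above yields $\norm{\X_0-\Xo}_F^2\leq(1-\tfrac\mu L)\norm{\Xo}_F^2$, which produces a $\gamma'$ of the right structure but inflated by $\sqrt{\texttt{srank}(\Xo)}$; the stated constant instead demands the spectrally sharper bound $\norm{\X_0-\Xo}_F^2 \leq (1-\tfrac{\mu}{L})\,\sigma_1(\Xo)\,\norm{\Xo}_F$, which one can check exactly reproduces $\gamma'^2=\tfrac{1-\mu/L}{2(\sqrt2-1)}\tau^4(\Uo)\,\texttt{srank}(\Xo)$. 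To recover it I would replace plain nonexpansiveness by the firm nonexpansiveness $\norm{\X_0-\Xo}_F^2 \leq \ip{\X_0-\Xo}{G_0-G_\star}$, with $G_0=-\tfrac1L\gradf(0)$ and $G_\star=\Xo-\tfrac1L\gradf(\Xo)$, and then bound the resulting inner product by a trace-norm/spectral-norm Hölder pairing that extracts $\sigma_1(\Xo)$ while exploiting that $\Xo$ has rank $r$. Controlling this pairing tightly---rather than through the crude Frobenius Cauchy--Schwarz step that only returns $\norm{\Xo}_F$---is the delicate point of the argument.
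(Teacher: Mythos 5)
Your main chain of reasoning is correct and arrives at the paper's key intermediate bound $\norm{\X_0-\Xo}_F \le \sqrt{1-\sfrac{\mu}{L}}\cdot\norm{\Xo}_F$ by a genuinely different argument. The paper never invokes the fixed-point property or nonexpansiveness of the projection: it expands $\norm{\X_0-\Xo}_F^2 = \norm{\Xo}_F^2+\norm{\X_0}_F^2-2\ip{\X_0}{\Xo}$, applies the obtuse-angle projection inequality (Lemma~\ref{lem:proj}) to $\X_0=\Pi_{\C'}\left(-\tfrac{1}{L}\gradf(0)\right)$, and then combines strong convexity at $0$, smoothness at $0$, and the optimality $f(\Xo)\le f(\X_0)$ to extract $\ip{\X_0}{\Xo}\ge\tfrac{1}{2}\norm{\X_0}_F^2+\tfrac{\mu}{2L}\norm{\Xo}_F^2$. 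Your route---$\Xo$ as a fixed point of the projected-gradient map, nonexpansiveness of $\Pi_{\C'}$, and the co-coercivity contraction of $T(\X)=\X-\tfrac{1}{L}\gradf(\X)$---is shorter, entirely standard, and in fact yields the slightly stronger contraction factor $\sqrt{\tfrac{L-\mu}{L+\mu}}$ before you relax it to $\sqrt{1-\sfrac{\mu}{L}}$. Both arguments then pass to the factored space in the same way, via Lemma~\ref{lem:tu}.

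The problem is your final paragraph, and it points in the wrong direction. The ``spectrally sharper'' estimate $\norm{\X_0-\Xo}_F^2\le\left(1-\sfrac{\mu}{L}\right)\sigma_1(\Xo)\norm{\Xo}_F$ that you set up as the delicate remaining step is not only left unproved by your sketch---it is false in general. Take $f(\X)=\tfrac{\mu}{2}\norm{\X-\Xo}_F^2$ (which is $L$-smooth for any $L\ge\mu$) and $\Xo=\tfrac{1}{d}I$: then $-\tfrac{1}{L}\gradf(0)=\tfrac{\mu}{L}\Xo$ is feasible, so $\X_0=\tfrac{\mu}{L}\Xo$ and $\norm{\X_0-\Xo}_F^2=(1-\sfrac{\mu}{L})^2\tfrac{1}{d}$, which exceeds $(1-\sfrac{\mu}{L})\sigma_1(\Xo)\norm{\Xo}_F=(1-\sfrac{\mu}{L})d^{-3/2}$ whenever $(1-\sfrac{\mu}{L})\sqrt{d}>1$. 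Fortunately, the mismatch you are trying to repair is purely notational. The paper's own proof produces exactly what your steps produce, namely $\gamma'=\sqrt{\tfrac{1-\sfrac{\mu}{L}}{2(\sqrt{2}-1)}}\cdot\tau^2(\Uo)\cdot\tfrac{\norm{\Xo}_F}{\sigma_1(\Xo)}$, and this coincides with the stated $\gamma'$ once $\texttt{srank}$ is read as the standard stable rank $\texttt{srank}(\X)=\norm{\X}_F^2/\sigma_1(\X)^2$, whose square root is $\norm{\X}_F/\sigma_1(\X)$; the inline definition in the lemma statement simply drops the squares. So you should discard the last paragraph, keep your first three steps as they stand, and state the conclusion with $\norm{\Xo}_F/\sigma_1(\Xo)$ in place of $\sqrt{\texttt{srank}(\Xo)}$; that is precisely what the paper proves.
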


\begin{proof}
To show this, we start with:
\begin{align}
\|\X_0 -\Xo\|_F^2 = \|\Xo\|_F^2 + \|\X_0\|_F^2 -2\ip{\X_0}{\Xo}\label{eq:projinit0}.  
\end{align}
Recall that $\X_0 = \U_0\U_0^\dagger = \Pi_{\C'}\left(\tfrac{-1}{L} \cdot \nabla f(0) \right)$ by assumption, where $\Pi_{\C'}(\cdot)$ is a convex projection.
Then, by Lemma \ref{lem:proj}:
\begin{align}
\ip{\tfrac{-1}{L}\nabla f(0)}{\X_0 - \rhoo} \geq \ip{\X_0}{\X_0 - \Xo}. \label{eq:projinit1}
\end{align}

Observe that $0 \in \mathbb{C}^{d \times d}$ is a feasible point, since it is PSD and satisfy any common \emph{symmetric} norm constraints, as the ones considered in this paper. 
Hence, using strong convexity of $f$ around $0$, we get,
\begin{align}
    f(\Xo)-\frac{\mu}{2}\|\Xo\|_F^2 &\geq f(0)+\ip{\gradf(0)}{\Xo} \nonumber \\
    &\stackrel{(i)}{=} f(0)+\ip{\gradf(0)}{\X_0}+\ip{\gradf(0)}{\Xo - \X_0} \nonumber \\
    &\stackrel{(ii)}{\geq} f(0)+\ip{\gradf(0)}{\X_0}+\ip{L\cdot \X_0}{\X_0 - \Xo}.\label{eq:projinit2}
\end{align} 
where $(i)$ is by adding and subtracting $\ip{\nabla f(0)}{\X_0}$, and
$(ii)$ is due to \eqref{eq:projinit1}.
Further, using the smoothness of $f$ around $0$, we get:
\begin{align}
    f(\X_0) &\leq f(0) + \ip{\gradf(0)}{\X_0} + \tfrac{L}{2}\|\X_0\|_F^2 \nonumber\\
    &\stackrel{(i)}{\leq} f(\Xo)-\tfrac{\mu}{2}\|\Xo\|_F^2 + \ip{L \cdot \X_0}{\Xo} -\tfrac{L}{2}\|\X_0\|_F^2 \nonumber \\
    &\leq f(\X_0) - \tfrac{\mu}{2}\|\Xo\|_F^2 + \ip{L \cdot \X_0}{\Xo} - \tfrac{L}{2}\|\X_0\|_F^2. \nonumber
\end{align} where
$(i)$ follows from \eqref{eq:projinit2} by upper bounding the quantity $f(0) + \ip{\gradf(0)}{\X_0}$,
$(ii)$ follows from the assumption that $f(\Xo) \leq f(\X_0)$. 
Hence, rearranging the above terms, we get:
\begin{align*}
    \ip{\X_0}{\Xo} \geq \tfrac{1}{2}\|\X_0\|_F^2 + \tfrac{\mu}{2L}\|\Xo\|_F^2.
\end{align*}
Combining the above inequality with~\eqref{eq:projinit0}, we obtain,
\begin{align*}
    \| \X_0 - \Xo\|_F \leq  \sqrt{1-\tfrac{\mu}{L}} \cdot \|\Xo\|_F.
\end{align*}
Given, $\U_0$ such that $\X_0 = \U_0\U_0^\dagger$ and $\Uo$ such that $\Xo = \Uo \Uo^\dagger$, we use Lemma \ref{lem:tu} from \cite{tu2015low} to obtain:
\begin{align*}
\|\U_0\U_0^\dagger - \Uo\Uo^{\dagger}\|_F \geq \sqrt{2 (\sqrt{2} - 1)} \cdot \sigma_r(\Uo) \cdot \dist(\U_0, \Uo).
\end{align*} 
Thus: $\dist(\U_0, \Uo) \leq \tfrac{\|\X_0 - \Xo\|_F}{\sqrt{2(\sqrt{2}-1)} \cdot \sigma_r(\Uo)} \cdot \|\Xo\|_F \leq \gamma' \cdot \sigma_r(\Uo)$
where $\gamma' =  \sqrt{\tfrac{1 - \sfrac{\mu}{L}}{2(\sqrt{2}-1)}} \cdot \tau^2(\Uo) \cdot \sqrt{\texttt{srank}(\Xo)}$.
\end{proof}
Such initialization, while being simple, introduces further restrictions on the condition number $\tau(\Xo)$, and the condition number of function $f$. 
Finding such simple initializations with weaker restrictions remains an open problem; however, as shown in \cite{bhojanapalli2016dropping, tu2015low, chen2015fast}, one can devise specific deterministic initialization for a given application.

In practice, the projection $\Pi_{\C'}(\cdot)$ step might not be easy to compute, due to the joint involvement of convex sets. 
A practical solution would be to sequentially project $-\tfrac{1}{L} \cdot \nabla f(0)$ onto the individual constraint sets.
Let $\Pi_{+}(\cdot)$ denote the projection onto the PSD cone.
Then, we can consider the approximate point:
$\widetilde{\X}_0 = \widetilde{\U}_0 \widetilde{\U}_0^\dagger = \Pi_{+} \left( \widetilde{\X}_0 \right)$.
Given $\widetilde{\U}_0$, we can perform an additional step:
$\U_0 = \Pi_{\C}\left(\widetilde{\U}_0\right)$,
to guarantee that $\U_0 \in \C$.
In the special case of QST, one can use the procedure in \citep{gonccalves2016projected}.

\vspace{0.2cm}
\section{Additional experiments and pseudocode}{\label{sec:add_exp}}
Figures \ref{fig:app_exp3}-\ref{fig:app_exp4} show further results regarding the QST problem, where $r = 1$ and $n = 10, 12$, respectively. 
For each case, we present both the performance in terms of number of iterations needed, as well as what is the cumulative time required.
In the case of $n = 12$, for all algorithms, we use as initial point $U_0 = \Pi_{\C}(\widetilde{\U_0})$ such that $\X_0 = \widetilde{\U}_0 \widetilde{\U}_0^\dagger$ where $\X_0 = \Pi_{+}\left(-\mathcal{A}^*(y)\right)$ and $\Pi_{+}(\cdot)$ is the projection onto the PSD cone.
In the case of $n = 10$, we use random initialization.
Configurations are described in the caption of each figure. 
Table \ref{tbl:Comp} contains information regarding total time required for convergence and quality of solution for some of these cases.
Results on almost pure density states, \emph{i.e.}, $r > 1$, are provided in Figure \ref{fig:app_exp5}. 

Next, we also provide pseudocode for our approach. 
Input arguments for $\texttt{ProjFGD}$ is $(i)$ $\text{\texttt{f\_grad}}$ that specifies the gradient operator; in our case, we set 
\begin{lstlisting}[basicstyle=\fontsize{6}{6}\ttfamily]
f_grad = @(rho) -Mt(y - M(rho));
\end{lstlisting}
where \texttt{M} denotes the forward linear operator over Pauli observables, and \texttt{Mt} its adjoint.
And, $(ii)$ $\texttt{params}$, a Matlab structure that contains several hyperparameters.
Here, \texttt{params} contains 
\texttt{.d}, the dimension $d$; 
\texttt{.r}, the rank of the density matrix; 
\texttt{.init}, the choice between random or specific initialization; 
\texttt{.Ainit}, the random initial $A_0$ in case $\texttt{params.init} = 0$; 
\texttt{.stepsize} that selects a conservative (theory) versus a practical step size; 
\texttt{.iter}, the maximum number of iterations;  
\texttt{.tol}, the tolerance for the stopping criterion.

\begin{lstlisting}[basicstyle=\fontsize{6}{6}\ttfamily]
function [rhohat, Ahat] = ProjFGD(f_grad, params)

options.tol = 10^-3; d = params.d; r = params.r;

% Random initialization
if (params.init == 0) 
    Acur = params.Ainit; 
    Acur = Acur./norm(Acur, 'fro');
    rhocur = Acur * Acur';
    rhoprev = rhocur;
    
    % Use of PROPACK
    [~, S1, ~] = lansvd(Acur, 1, 'L', options);  
    norm_grad = f_grad(rhocur);
    [~, S2, ~] = lansvd(norm_grad, 1, 'L', options);  
    
    grad0 = f_grad(zeros(d, d));
    grad1 = f_grad(ones(d, d));
    L_est = 2*norm(grad0 - grad1, 'fro')/d;
% Our initialization    
elseif (params.init == 1)   
    % Compute 1/L * gradf(0)
    grad0 = f_grad(zeros(d, d));
    grad1 = f_grad(ones(d, d));
    L_est = 2*norm(grad0 - grad1, 'fro')/d;
    rhocur = -(1/L_est) * grad0;
    
    % P_r(gradf(0)) using PROPACK
    [Ar, Sr, ~] = lansvd(rhocur, r, 'L', options);
    Acur = Ar * sqrt(Sr); 
    Acur = Acur./norm(Acur, 'fro');
    rhocur = Acur * Acur';
    rhoprev = rhocur;
    
    S1 = Sr;
    norm_grad = f_grad(rhocur);
    [~, S2, ~] = svds(norm_grad, 1);      
end

if (params.stepsize == 0) % Theory
    eta = 1/( 128 * (L_est * S1(1,1) + S2(1,1)) );
elseif (params.stepsize == 1) % More practical
    eta = 1/( 10 * L_est * S1(1,1) + S2(1,1));
end;

i = 1;
while (i <= params.iters)    
    f_gradrho = f_grad(rhocur);
    gradA = f_gradrho * Acur;                
    Acur = Acur - eta * gradA;        
    
    norm_Acur = norm(Acur, 'fro')    
    if (norm_Acur > 1)
	    Acur = Acur./norm(Acur, 'fro');
    end;
    
    rhocur = Acur * Acur';
    
    % Test stopping criterion
    if ((i > 1) && (norm(rhocur - rhoprev, 'fro') < params.tol * norm(rhocur, 'fro')))
        break;
    end
    i = i + 1;   rhoprev = rhocur;        
end

rhohat = Acur * Acur';
Ahat = Acur;
\end{lstlisting}

\begin{figure*}[h!]
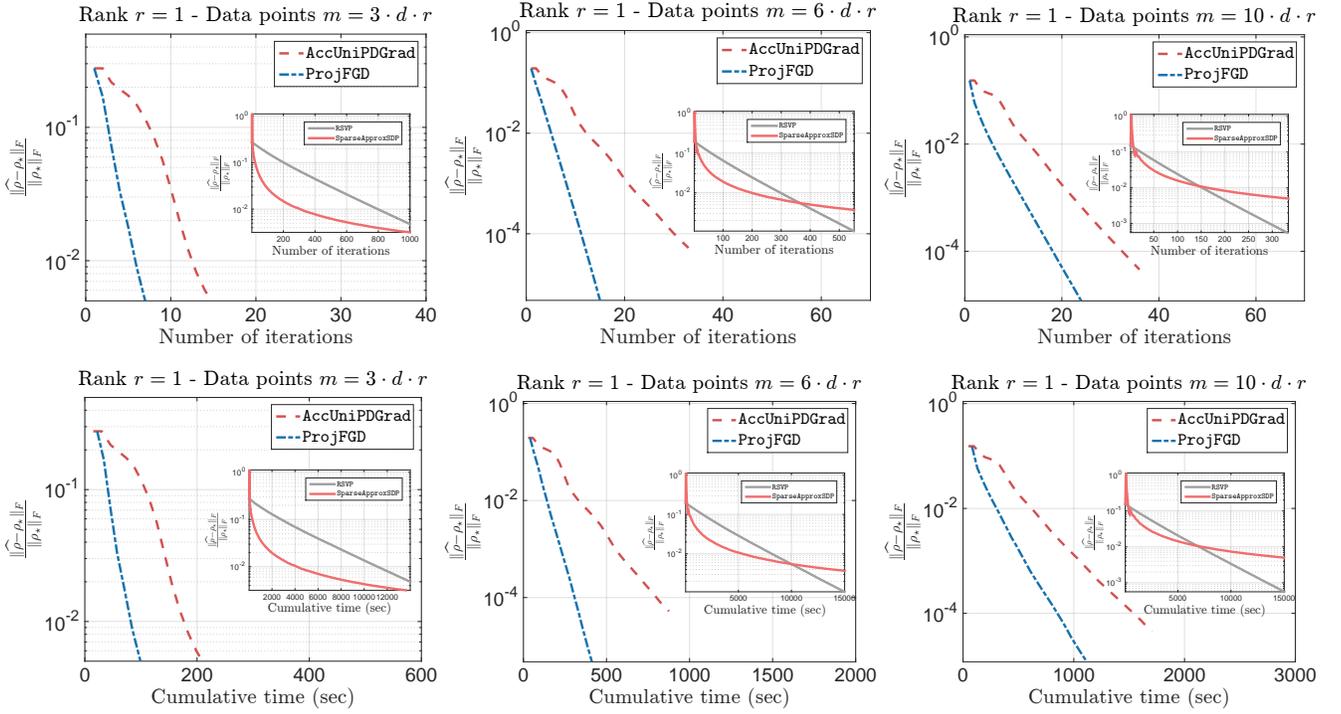

\centering
\includegraphics[width=0.32\textwidth]{./pureQST_q_12_C_3_init_ours_XcurvesIter} 
\includegraphics[width=0.32\textwidth]{./pureQST_q_12_C_6_init_ours_XcurvesIter} 
\includegraphics[width=0.32\textwidth]{./pureQST_q_12_C_10_init_ours_XcurvesIter} \\
\vspace{0.3cm}
\includegraphics[width=0.32\textwidth]{./pureQST_q_12_C_3_init_ours_XcurvesTime} 
\includegraphics[width=0.32\textwidth]{./pureQST_q_12_C_6_init_ours_XcurvesTime} 
\includegraphics[width=0.32\textwidth]{./pureQST_q_12_C_10_init_ours_XcurvesTime} \\
\caption{Convergence performance of algorithms under comparison w.r.t. $\tfrac{\|\widehat{\rho} - \rhoo\|_F}{\|\rhoo\|_F}$ vs. $(i)$ the total number of iterations (top) and $(ii)$ the total execution time (bottom). First, second and third column corresponds to $C_{\rm sam} = 3, 6$ and $10$, respectively. For all cases, $r = 1$ (pure state setting) and $n = 12$. Initial point is $U_0 = \Pi_{\C}(\widetilde{\U_0})$ such that $\X_0 = \widetilde{\U}_0 \widetilde{\U}_0^\dagger$ where $\X_0 = \Pi_{+}\left(-\mathcal{A}^*(y)\right)$.
}
\label{fig:app_exp3}
\end{figure*}

\begin{figure*}[h!]
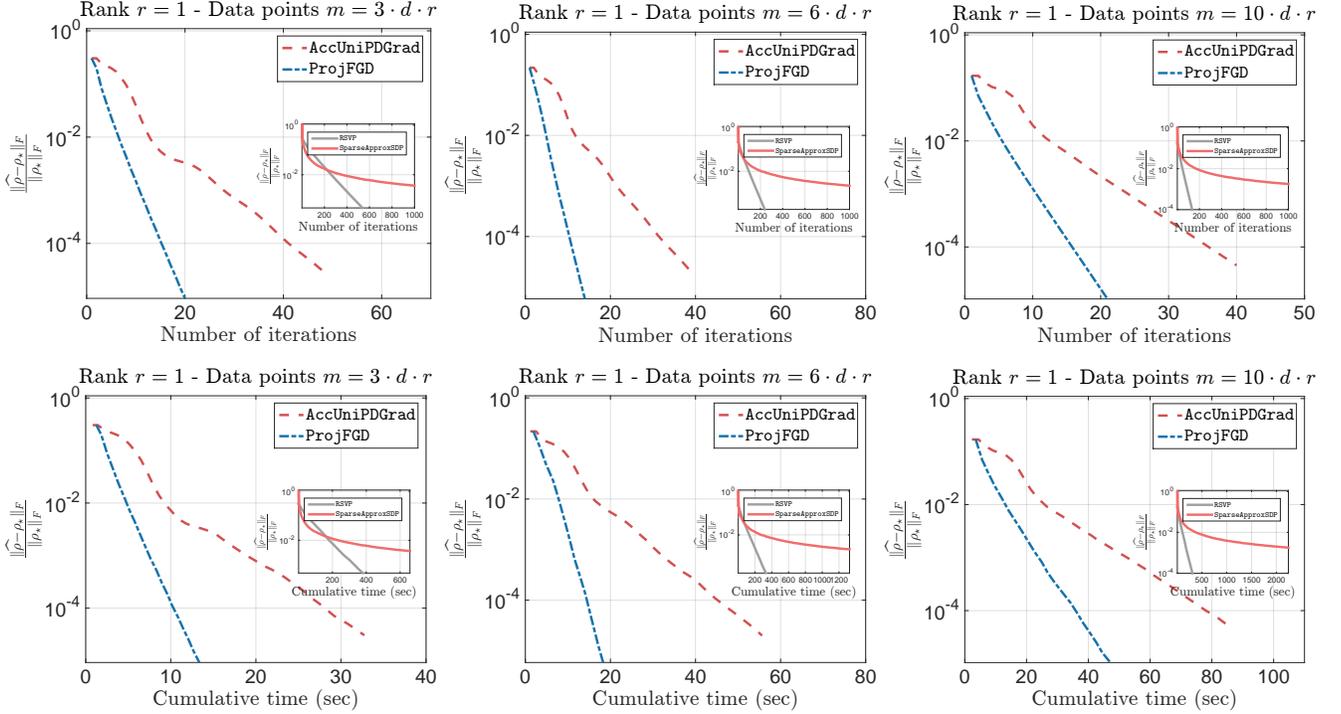

\centering
\includegraphics[width=0.32\textwidth]{./pureQST_q_10_C_3_init_rand_XcurvesIter} 
\includegraphics[width=0.32\textwidth]{./pureQST_q_10_C_6_init_rand_XcurvesIter} 
\includegraphics[width=0.32\textwidth]{./pureQST_q_10_C_10_init_rand_XcurvesIter} \\
\vspace{0.3cm}
\includegraphics[width=0.32\textwidth]{./pureQST_q_10_C_3_init_rand_XcurvesTime} 
\includegraphics[width=0.32\textwidth]{./pureQST_q_10_C_6_init_rand_XcurvesTime} 
\includegraphics[width=0.32\textwidth]{./pureQST_q_10_C_10_init_rand_XcurvesTime} \\
\caption{Convergence performance of algorithms under comparison w.r.t. $\tfrac{\|\widehat{\rho} - \rhoo\|_F}{\|\rhoo\|_F}$ vs. $(i)$ the total number of iterations (top) and $(ii)$ the total execution time (bottom). \emph{We consider random initialization for all algorithms}. First, second and third column corresponds to $C_{\rm sam} = 3, 6$ and $10$, respectively. For all cases, $r = 1$ (pure state setting) and $n = 10$. 
}
\label{fig:app_exp4}
\end{figure*}

\begin{table*}[!h]
\centering
\begin{tabular}{c c c c c c c c c c c c c} \toprule
& \phantom{a} & \multicolumn{3}{c}{$n = 6$, $C_{\rm sam} = 3$.} & \phantom {a} & \multicolumn{3}{c}{$n = 6$, $C_{\rm sam} = 6$.} & \phantom {a} & \multicolumn{3}{c}{$n = 6$, $C_{\rm sam} = 10$.}\\
\cmidrule {3-5} \cmidrule{7-9} \cmidrule{11-13} 
Algorithm & \phantom{a} & $\tfrac{\|\widehat{\X} - \rhoo\|_F}{\|\rhoo\|_F}$ & \phantom{a}  & Total time 
			   & \phantom{a} & $\tfrac{\|\widehat{\X} - \rhoo\|_F}{\|\rhoo\|_F}$ & \phantom{a}  & Total time 
			   & \phantom{a} & $\tfrac{\|\widehat{\X} - \rhoo\|_F}{\|\rhoo\|_F}$ & \phantom{a}  & Total time  \\
\cmidrule{1-1} \cmidrule {3-3} \cmidrule{5-5} \cmidrule{7-7} \cmidrule{9-9} \cmidrule {11-11} \cmidrule{13-13}
\texttt{RSVP} & & 5.1496e-05 & & 0.7848 & & 1.8550e-05 & & 0.3791 & & 6.6328e-06 & & 0.1203 \\ 
\texttt{SparseApproxSDP} & & 4.6323e-03 & & 3.7404 & & 2.2469e-03 & & 4.3775 & & 1.4776e-03 & & 3.8536 \\ 
\texttt{AccUniPDGrad} & & 4.0388e-05 & & 0.3634 & & 2.4064e-05 & & 0.3311 & & 1.9032e-05 & & 0.4911\\ 
\texttt{ProjFGD} & & 2.4116e-05 & & 0.0599 & & 1.6052e-05 & & 0.0441 & & 1.1419e-05 & & 0.0446 \\ 
\midrule 
& \phantom{a} & \multicolumn{3}{c}{$n = 8$, $C_{\rm sam} = 3$.} & \phantom {a} & \multicolumn{3}{c}{$n = 8$, $C_{\rm sam} = 6$.} & \phantom {a} & \multicolumn{3}{c}{$n = 8$, $C_{\rm sam} = 10$.}\\
\cmidrule {3-5} \cmidrule{7-9} \cmidrule{11-13} 
 \texttt{RSVP} & & 1.5774e-04 & & 5.7347 & & 5.2470e-05 & & 3.8649 & & 2.9583e-05 & & 4.6548 \\ 
\texttt{SparseApproxSDP} & & 4.1639e-03 & & 16.1074 & & 2.2011e-03 & & 33.7608 & & 1.7631e-03 & & 85.0633 \\ 
\texttt{AccUniPDGrad} & & 3.5122e-05 & & 1.1006 & & 2.4634e-05 & & 1.8428 & & 1.7719e-05 & & 3.9440 \\ 
\texttt{ProjFGD} & & 2.4388e-05 & & 0.6918 & & 1.5431e-05 & & 0.8994 & & 1.0561e-05 & & 1.8804 \\ 
 \midrule 
 & \phantom{a} & \multicolumn{3}{c}{$n = 10$, $C_{\rm sam} = 3$.} & \phantom {a} & \multicolumn{3}{c}{$n = 10$, $C_{\rm sam} = 6$.} & \phantom {a} & \multicolumn{3}{c}{$n = 10$, $C_{\rm sam} = 10$.}\\
 \cmidrule {3-5} \cmidrule{7-9} \cmidrule{11-13} 
\texttt{RSVP} & & 4.6056e-04 & & 379.8635 & & 1.8017e-04 & & 331.1315 & & 9.7585e-05 & & 307.9554 \\ 
\texttt{SparseApproxSDP} & & 3.6310e-03 & & 658.7082 & & 2.1911e-03 & & 1326.5374 & & 1.7687e-03 & & 2245.2301 \\ 
\texttt{AccUniPDGrad} & & 3.0456e-05 & & 33.3585 & & 1.9931e-05 & & 56.9693 & & 4.5022e-05 & & 88.2965 \\ 
\texttt{ProjFGD} & & 9.2352e-06 & & 13.9547 & & 5.8515e-06 & & 19.3982 & & 1.0460e-05 & & 49.4528 \\ 
 \midrule 
 & \phantom{a} & \multicolumn{3}{c}{$n = 12$, $C_{\rm sam} = 3$.} & \phantom {a} & \multicolumn{3}{c}{$n = 12$, $C_{\rm sam} = 6$.} & \phantom {a} & \multicolumn{3}{c}{$n = 12$, $C_{\rm sam} = 10$.}\\
 \cmidrule {3-5} \cmidrule{7-9} \cmidrule{11-13} 
\texttt{RSVP} & & 4.7811e-03 & & 14029.1525 & & 1.0843e-03 & & 15028.2836 & & 5.6169e-04 & & 15067.7249 \\ 
\texttt{SparseApproxSDP} & & 3.1717e-03 & & 13635.4238 & & 3.6954e-03 & & 15041.6235 & & 5.0197e-03 & & 15051.4497 \\ 
\texttt{AccUniPDGrad} & & 8.8050e-05 & & 461.2084 & & 5.2367e-05 & & 904.0507 & & 4.5660e-05 & & 1759.6698  \\ 
\texttt{ProjFGD} & & 8.4761e-06 & & 266.8203 & & 4.7399e-06 & & 440.7193 & & 1.1871e-05 & & 1159.2885 \\ 
\bottomrule
\end{tabular}
\caption{Summary of comparison results for reconstruction and efficiency. As a stopping criterion, we used $\sfrac{\|\X_{t+1} - \X_{t}\|_2}{\|\X_{t+1}\|_2} \leq 5 \cdot 10^{-6}$, where $\X_i$ is the estimate at the $i$-th iteration. Time reported is in seconds. Initial point is $U_0 = \Pi_{\C}(\widetilde{\U_0})$ such that $\X_0 = \widetilde{\U}_0 \widetilde{\U}_0^\dagger$ where $\X_0 = \Pi_{+}\left(-\mathcal{A}^*(y)\right)$.} \label{tbl:Comp}
\end{table*}

\begin{figure*}[t]
\centering
\includegraphics[width=0.24\textwidth]{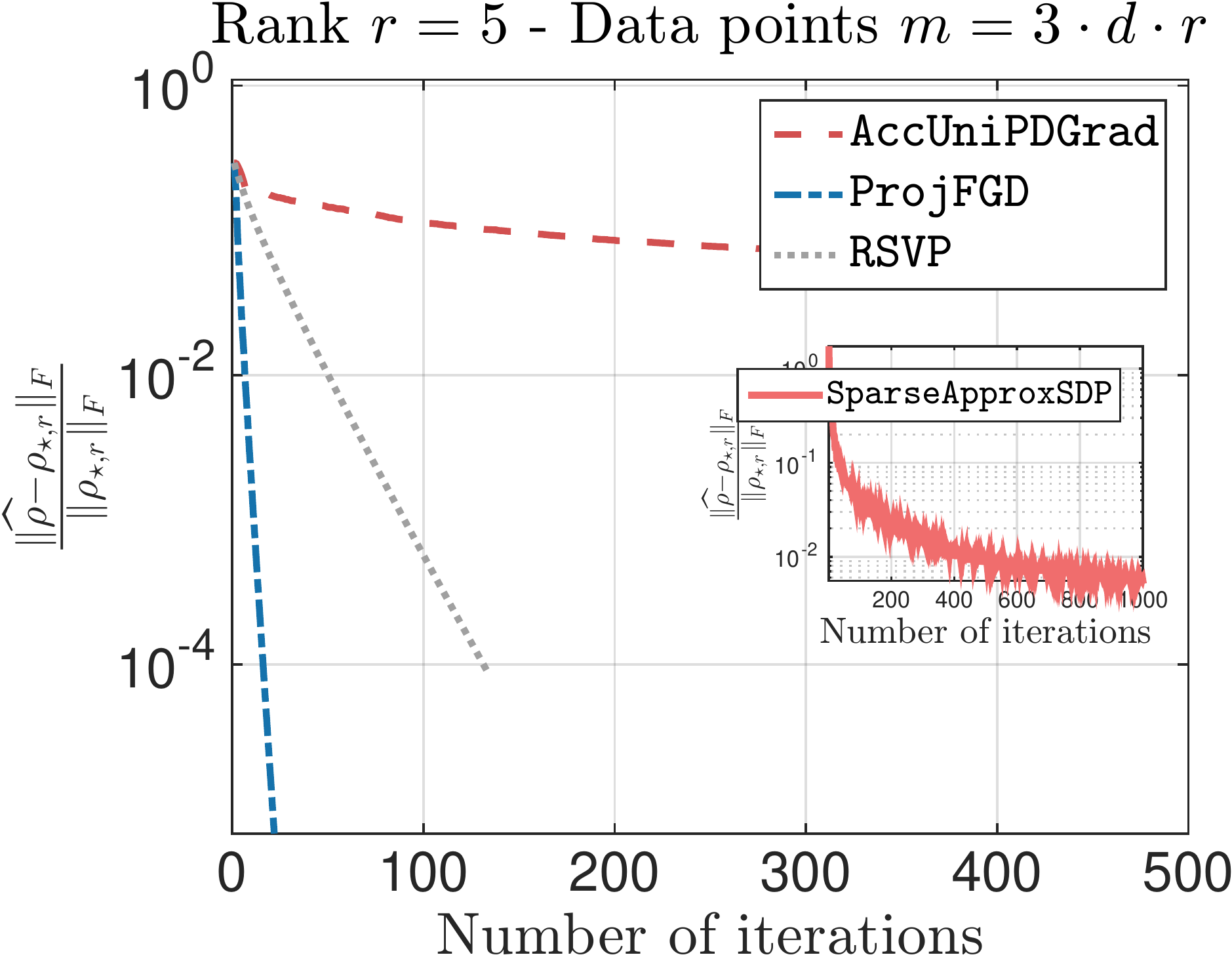} 
\includegraphics[width=0.23\textwidth]{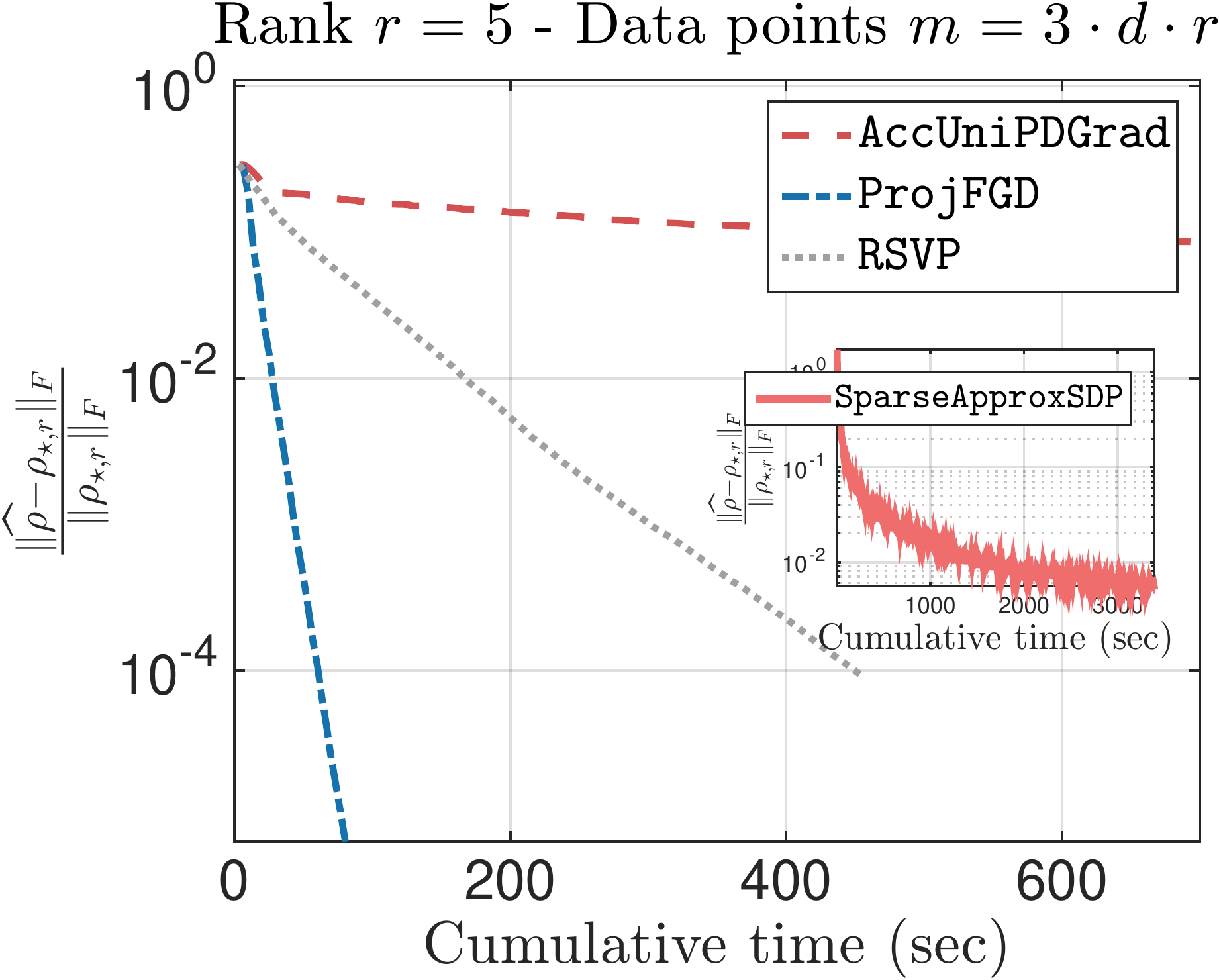}
\includegraphics[width=0.24\textwidth]{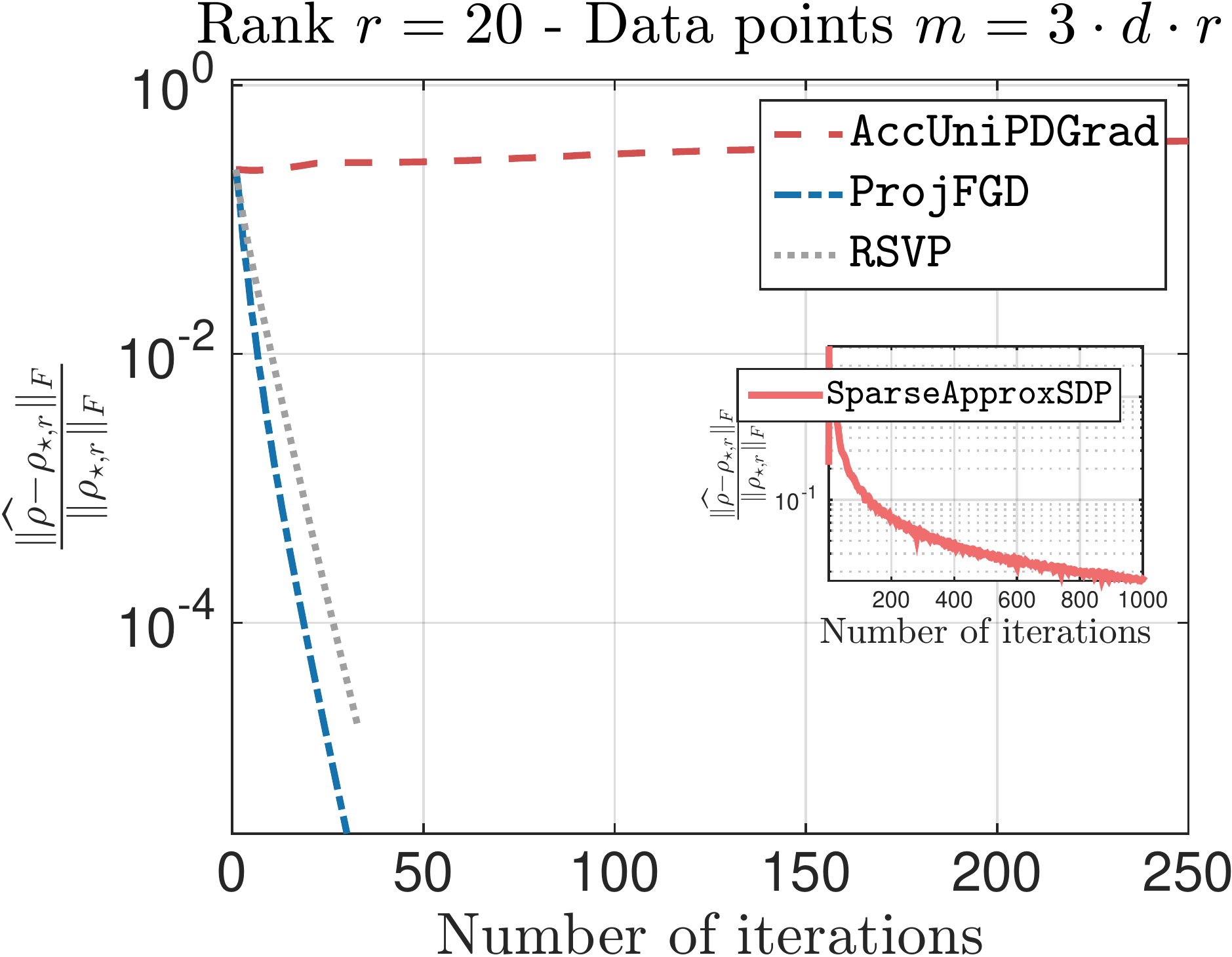} 
\includegraphics[width=0.24\textwidth]{./QST_r_20_1} 
\caption{Convergence performance of algorithms under comparison w.r.t. $\tfrac{\|\widehat{\rho} - \rhoo\|_F}{\|\rhoo\|_F}$ vs. $(i)$ the total number of iterations (left) and $(ii)$ the total execution time (right). The two left plots correspond to the case $r = 5$ and the two right plots to the case $r = 20$. In all cases $C_{\rm sam} = 3$ and $n = 10$. Initial point is $U_0 = \Pi_{\C}(\widetilde{\U_0})$ such that $\X_0 = \widetilde{\U}_0 \widetilde{\U}_0^\dagger$ where $\X_0 = \Pi_{+}\left(-\mathcal{A}^*(y)\right)$.
}
\label{fig:app_exp5}
\end{figure*}

\bibliography{nonConvexQST}
\end{document}